\crefname{section}{\textsection}{\textsection}
\crefname{subsection}{\textsection}{\textsection}
\crefname{subsubsection}{\textsection}{\textsection}
\crefname{paragraph}{\textparagraph}{\textparagraph}
\crefname{thm}{Theorem}{Theorems}
\crefname{proposition}{Proposition}{Propositions}
\DeclareMathOperator{\Tr}{Tr}
\renewcommand{\Im}{\mathrm{Im}}
\renewcommand{\Re}{\mathrm{Re}}
          \newtheorem{thm}{Theorem}[section]
          \newtheorem{proposition}[thm]{Proposition}
          \newtheorem{lemma}[thm]{Lemma}
          \newtheorem{corollary}[thm]{Corollary}
          \newtheorem{definition}[thm]{Definition}
          \theoremstyle{definition}
          \newtheorem{remark}[thm]{Remark}
          \newtheorem*{assumption*}{Assumption}
\renewcommand{\leq}{\leqslant}
\renewcommand{\geq}{\geqslant}
\begin{document}
\tolerance=2000 \setlength{\emergencystretch}{1em}

\title{Semiclassical limit of entropies and free energies}

\author[Z.\ Ammari]{Zied Ammari} \address{Université Marie \& Louis
  Pasteur\\Laboratoire de Mathématiques LmB\\UFR Sciences et techniques\\16
  route de Gray\\25030 Besançon CEDEX\\France}
\email{zied.ammari@univ-fcomte.fr}
%\urladdr{https://perso.univ-rennes1.fr/zied.ammari/}

\author[M.\ Correggi]{Michele Correggi} \address{Dipartimento di
  Matematica\\Politecnico di Milano\\P.zza Leonardo da Vinci 32\\20133,
  Milano\\Italy} \email{michele.correggi@gmail.com}
\urladdr{https://sites.google.com/view/michele-correggi}

\author[M.\ Falconi]{Marco Falconi} \address{Dipartimento di
  Matematica\\Politecnico di Milano\\P.zza Leonardo da Vinci 32\\20133,
  Milano\\Italy} \email{marco.falconi@polimi.it}
\urladdr{https://www.mfmat.org/}

\author[R.\ Gautier]{Raphaël Gautier} \address{Dipartimento di
  Matematica\\Politecnico di Milano\\P.zza Leonardo da Vinci 32\\20133,
  Milano\\Italy\\\& Université de Rennes\\ Campus de Beaulieu\\263, avenue du
  Général Leclerc\\35042 RENNES CEDEX\\France}
\email{raphael.gautier@polimi.it}
\email{raphael.gautier@univ-rennes.fr}

\date{\today}

% \keywords{Wigner measures, Infinite dimensional semiclassical analysis,
% Weyl algebra.}

% \subjclass[2010]{60B05, 81Q20, 81T05.}

\begin{abstract}
  Entropy and free energy are central concepts in both statistical physics
  and information theory, with quantum and classical facets. In mathematics
  these concepts appear quite often in different contexts (dynamical systems,
  probability theory, von Neumann algebras, \emph{etc.}). In this work, we
  study the von Neumann and Wehrl entropies from the point of view of
  semiclassical analysis. We first prove the semiclassical convergence of the
  von Neumann to the Wehrl entropy for quantum Gibbs states (thermal
  equilibrium), after a suitable renormalization has been taken into
  account. Then, we show that, in the same limit, the free energy functional
  defined with the Wehrl entropy $ \Gamma-$converges to its classical counterpart,
  so implying convergence of the minima and the associated minimizers.
\end{abstract}

\maketitle

\onehalfspacing{}

\section{Introduction and Main Results}
\label{sec:intr-main-results}

In recent years, there has been increasing interest in analyzing the
transition and the correspondence from quantum to classical in many-body
theory and in quantum information, in the framework of statistical
mechanics. For instance, the correlation functions of quantum Gibbs states of
Bose gases at positive temperature are related in a high temperature limit to
moments of classical Gibbs measures of a nonlinear Schr\"odinger equation
\cite{FKSS,Lewin2015}. At zero temperature, semiclassical higher order
expansions are proved for the ground state energy of Bose gases in the mean
field limit (see, e.g., \cite{BPS2021} and references therein). On the other
hand, the decoherence mechanism, one of the main issue in the realization of
quantum devices, interpreted as the relaxation of a quantum system to a
classical one, occurs as consequence of interactions with the environment
(for example, a thermal or a coherent-photons reservoir). However, the amount
of coherence loss is deeply related to quantum nature of the external
reservoir and varies in a highly non-trivial way in a semiclassical regime
\cite{CFFM}.

In the scaling limits and correspondences mentioned above, ground, excited
and equilibrium states play a central role, but equally important is the
notion of entropy that tracks the behavior of quantum and classical systems
and describe how far they are from thermodynamic equilibrium.  Our main goal
is precisely to study the transition between quantum entropies and classical
ones in the semiclassical regime.  As a simple yet relevant illustration, we
consider a class of toy models of quantum many-body systems or quantum fields
with finitely many degrees of freedom (\emph{i.e.}, with finite-dimensional
phase-space), and focus on the limit of the von Neumann and Wehrl quantum
entropies when the natural semiclassical parameter $\varepsilon\equiv \hbar $ or $\varepsilon\equiv\frac{1}{N}$
goes to zero. Our approach is motivated by several concrete applications to
the Bose-Hubbard model \cite{2024arXiv240504055A,MR2012977,MR4415671}, to
spin systems \cite{MR4153847,MR349181,MR4756728} and to Kubo-Martin-Schwinger
(KMS)states in statistical mechanics
\cite{AGGL,MR4289905,MR4756728}.

Generally speaking, it is well-known that the von Neumann entropy behaves
badly in the classical limit.  This motivated Wehrl to introduce a more
suitable semiclassical entropy in \cite{MR0496300,MR574951}, which then led
to much research on Wehrl's conjecture (see, e.g., \cite{FNT,MR3554894,NRT}
and references therein).  In this note, we first show that for Gibbs states
of generic interacting Hamiltonians both the von Neumann and Wehrl entropies
diverge in the semiclassical limit, but, up to suitable renormalization, both
converge to the Boltzmann entropy of the related classical system. Secondly,
we consider the free energy functional defined in terms of the Wehrl entropy
and address its behavior in the semiclassical limit. We establish its
$\Gamma$-convergence to its classical analogue, and, as a consequence, we
obtain the convergence of minima and minimizers.  From a wider perspective,
our results highlight the fact that quantum and classical entropies are
connected as far as thermal equilibrium states are concerned, so that, the
amount of randomness or chaos in quantum or classical equilibria is
comparable, and stays so, along the convergence. While the out of equilibrium
situation looks totally different, it would be interesting to understand more
the quantum-classical correspondence and transition near thermal equilibrium.

\subsection{Gibbs states}
\label{sec:gibbs-states}

Let us start by recalling that, given a system with classical
Hamiltonian $h: \mathcal{H} \to \mathbb{R} $, we define the \emph{classical
  Gibbs state} to be the probability measure on the (finite dimensional
linear) phase space $\mathcal{H} \cong \mathbb{C}^{d} $ with density $
(Z_{\beta,0})^{-1} e^{-\beta h}$, where
\begin{equation}
  \label{eq: zb}
  Z_{\beta, 0} : = \int_{\mathcal{H}} e^{-\beta h(z)} \mathrm{d} z,
\end{equation}
provided the latter quantity is finite; the classical Gibbs state describes
thermal equilibrium of the system at inverse temperature $\beta >0$. Its
quantum counterpart is the {\it quantum Gibbs state} formally defined as
$(Z_{\beta, \varepsilon})^{-1} e^{-\beta H_\varepsilon}$, where
\begin{equation}
  \label{eq: zbe}
  Z_{\beta, \varepsilon} : = \Tr_{\mathcal{H}_{\varepsilon}} e^{-\beta H_\varepsilon}
\end{equation}
needs to be finite, $H_\varepsilon$ is a suitable quantization of the energy
$h$ and $ \mathcal{H}_{\varepsilon} $ the Hilbert space of the quantum
system. Here and in the following $\varepsilon$ denotes the semiclassical
parameter.%, notably $\eps = 2 \hbar$.

\subsection{Classical and Quantum Entropies}
\label{sec:quant-class-entr}

% \subsection*{Classical and quantum entropies}
Let us now focus on the various concepts of entropy. Let $\mu$ be a probability
measure over $ \mathcal{H} $, which is absolutely continuous with respect to
the Lebesgue measure, its {\it Boltzmann entropy} is defined as
\begin{equation}
  \label{eq:B}
  S_{\mathrm{B}}(\mu) := -\int_{\mathcal{H}} \frac{\mathrm{d} \mu}{\mathrm{d} z} \log \left(  \frac{\mathrm{d} \mu}{\mathrm{d} z}  \right) \mathrm{d} z.
\end{equation}
This takes also the name of continuous or differential entropy in information
theory, and although it is not the optimal definition from an information
theoretical viewpoint (most notably, it is not always positive), it suffices
for our purposes. The closely related notion of \emph{relative entropy} (the
Kullback-Leibler divergence in information theory) is defined for any couple
of probabilities $\mu,\nu$ on $\mathcal{H}$ as:
\begin{equation}
  \label{eq:rB}
  S_{\mathrm{B}}(\mu\Vert\nu)=
  \begin{cases}
    \, \displaystyle\int_{\mathcal{H}}^{}\log\left(\frac{\mathrm{d}\mu}{\mathrm{d}\nu}\right)   \mathrm{d}\mu & \text{if }\mu\ll\nu\\
    \,+\infty &\text{otherwise}
  \end{cases}\;.
\end{equation}
The relative entropy is \emph{always non-negative}, and it equals zero if and
only if $\mu=\nu$.

The natural quantum counterpart of Boltzmann entropy is the {\it von Neumann
  entropy}: naturally, the integral is replaced by the trace over the quantum
Hilbert space $ \mathcal{H}_{\varepsilon} $, the expression of the entropy
for a quantum state $\rho_\varepsilon \in \mathfrak{S}_+^1$ (positive,
trace-class operator with unit trace) thus becoming
\begin{equation}
  \label{eq:vN}
  S_{\mathrm{vN},\varepsilon}(\rho_\varepsilon) := - \Tr_{\mathcal{H}_{\varepsilon}} (\rho_\varepsilon \log(\rho_\varepsilon))\;.
\end{equation}
The \emph{von Neumann relative entropy} is also the natural generalization of
the classical relative entropy: for any
$\rho_{\varepsilon},\sigma_{\varepsilon}\in \mathfrak{S}_+^1$,
\begin{equation}
  \label{eq:rvN}
  S_{\mathrm{vN},\varepsilon}(\rho_{\varepsilon}\Vert\sigma_{\varepsilon}):= \Tr_{\mathcal{H}_{\varepsilon}}\Bigl(\rho_{\varepsilon}\bigl(\log(\rho_{\varepsilon})- \log(\sigma_{\varepsilon})\bigr)\Bigr)\;.
\end{equation}
Again, the von Neumann relative entropy is non-negative, and equals to zero
if and only if $\rho_{\varepsilon}=\sigma_{\varepsilon}$.

Pursuing the idea of defining a ``most classical'' quantum entropy, Wehrl
proposed a definition of entropy that is intermediate between the quantum and
classical ones. To this extent, let us define the \emph{Husimi function} of
any quantum state $\rho_{\varepsilon}$ as
\begin{equation}
  \label{eq: Husimi}
  f_\varepsilon (z) := \left\langle z_\varepsilon\left|\rho_\varepsilon\right|z_{\varepsilon}\right\rangle_{\mathcal{H}_{\varepsilon}}\;,
\end{equation}
where, for any $ z \in \mathcal{H} $, $ \left|z_\varepsilon \right\rangle $ stands for the coherent state
centered at $ z $ (see \eqref{eq: coherent} below). It is well-known that the
Husimi function is a probability density on $\mathcal{H}$ when multiplied by
$(\pi\varepsilon)^{-d}$, and therefore we can define the \emph{Wehrl entropy} as follows:
let $\mathrm{d}\varphi_{\varepsilon}(z)= \frac{1}{(\pi\varepsilon)^d}f_{\varepsilon}(z)\mathrm{d}z$ be the
probability measure associated to the Husimi function, then,
\begin{equation}
  \label{eq:W}
  S_{\mathrm{W},\varepsilon} (\rho_\varepsilon)  := S_{\mathrm{B}}(\varphi_{\varepsilon}) - d\log (\pi\varepsilon) =- \int_{\mathcal{H}} f_\varepsilon (z) \log (f_\varepsilon(z)) \frac{\mathrm{d} z}{(\pi \varepsilon)^d}\;.
\end{equation}
The corresponding \emph{Wehrl relative entropy} is thus
\begin{equation}
  \label{eq:rW}
  S_{\mathrm{W},\varepsilon}(\rho_{\varepsilon}\Vert \sigma_{\varepsilon})= S_{\mathrm{B}}(\varphi_{\varepsilon} \Vert \phi_{\varepsilon})= \int_{\mathcal{H}}^{}f_{\varepsilon}(z)\Bigl(\log\bigl(f_{\varepsilon}(z)\bigr)- \log\bigl(g_{\varepsilon}(z)\bigr) \Bigr)  \frac{\mathrm{d}z}{(\pi\varepsilon)^d}\;,
\end{equation}
where $g_{\varepsilon}$ is the Husimi function of $\sigma_{\varepsilon}$, and
$\phi_{\varepsilon}$ its associated probability measure.

\subsection{Classical and Quantum Free Energies}
\label{sec:class-quant-free}

A key physical quantity associated to classical and quantum systems in
statical mechanics is the {\it (Helmholtz) free energy}, which allows to
identify the (Gibbs) equilibrium states via the variational principle, as the
minimizers of such a quantity\footnote{We refer to such problem as the
  \emph{Gibbs variational problem}.}. The classical and quantum free energies
at inverse temperature $\beta$ are given respectively by
\begin{equation}
  F_{\mathrm{B},\beta}(\mu) = \int_{\mathcal{H}} h(z) \mathrm{d} \mu(z) - \tfrac{1}{\beta} S_{\mathrm{B}}(\mu),	
\end{equation} 
and
\begin{equation}
  F_{\diamond,\beta,\varepsilon}(\rho_{\varepsilon}) = \Tr_{\mathcal{H}_{\varepsilon}} (H_{\varepsilon} \rho_\varepsilon )  - \tfrac{1}{\beta} S_{\diamond,\varepsilon}(\rho_\varepsilon),	
\end{equation}
where $\diamond$ stands either for $\mathrm{vN}$ (von Neumann) or
$\mathrm{W}$ (Wehrl), depending on the choice of quantum entropy. As such,
the classical free energy is defined only for measures that are absolutely
continuous with respect to the Lebesgue measure; however, classical and
quantum free energies can be rewritten resorting to Gibbs states and
\emph{relative entropies} (and thus the classical free energy makes sense for
any measure). Given the classical Gibbs state
\begin{equation*}
  \mathrm{d}\gamma_{\beta}(z)=\frac{1}{Z_{\beta,0}}\, e^{-\beta h(z)}\mathrm{d}z\;,
\end{equation*}
it is not difficult to verify that
\begin{equation}
  \label{eq:4}
  F_{\mathrm{B},\beta}(\mu)= \tfrac{1}{\beta}S_{\mathrm{B}}(\mu \Vert \gamma_{\beta}) + F_{\mathrm{B},\beta}(\gamma_{\beta})=\tfrac{1}{\beta}\bigl(S_{\mathrm{B}}(\mu \Vert \gamma_{\beta})- \log Z_{\beta,0}\bigr) \;,
\end{equation}
and this extends to any probability measure $\mu$ on
$\mathcal{H}$. Furthermore, since the relative entropy is always non-negative
-- vanishing if and only if $\mu=\gamma_{\beta}$ -- it follows that the free
energy is bounded from below and has the Gibbs state as the unique
minimizer. As a matter of fact, it is convenient to consider directly the
``relative'' free energy
\begin{equation}
  \label{eq:5}
  \widetilde{F}_{\mathrm{B},\beta}(\mu)= \tfrac{1}{\beta}S_{\mathrm{B}}(\mu \Vert \gamma_{\beta}) 
\end{equation}
for the minimization problem. Analogously, denoting
\begin{equation*}
  \Gamma_{\beta,\varepsilon}= Z_{\beta,\varepsilon}^{-1} e^{-\beta H_{\varepsilon}}
\end{equation*}
the quantum Gibbs state, the von Neumann free energy can be written as
\begin{equation}
  \label{eq:6}
  F_{\mathrm{vN},\beta,\varepsilon}(\rho_{\varepsilon}) = \tfrac{1}{\beta} S_{\mathrm{vN},\varepsilon}(\rho_{\varepsilon}\Vert \Gamma_{\beta,\varepsilon}) + F_{\mathrm{vN},\beta,\varepsilon}(\Gamma_{\beta,\varepsilon})= \tfrac{1}{\beta}\bigl(S_{\mathrm{vN},\varepsilon}(\rho_{\varepsilon}\Vert \Gamma_{\beta,\varepsilon})- \log Z_{\beta,\varepsilon}\bigr)\;,
\end{equation}
from which it follows that it is bounded from below and minimized by
$\Gamma_{\beta,\varepsilon}$, and we can write the relative quantum free
energy
\begin{equation}
  \label{eq:7}
  \widetilde{F}_{\mathrm{vN},\beta,\varepsilon}(\rho_{\varepsilon}) = \tfrac{1}{\beta} S_{\mathrm{vN},\varepsilon}(\rho_{\varepsilon}\Vert \Gamma_{\beta,\varepsilon})\;.
\end{equation}
Finally, let us focus on the Wehrl free energy. Let us denote by
$h_{\varepsilon}^{\mathrm{up}}(z)$ the so-called upper symbol of
$H_{\varepsilon}$: the function on $\mathcal{H}$ (supposing it exists) such
that
\begin{equation*}
  H_{\varepsilon}=\int_{\mathcal{H}}^{}h^{\mathrm{up}}_{\varepsilon}(z) \left\lvert z_{\varepsilon}\right\rangle\left\langle z_{\varepsilon}\right\rvert \frac{\mathrm{d}z}{(\pi\varepsilon)^d}\;.
\end{equation*}
Then,
\begin{multline}
  \label{eq:8}
  F_{W,\beta,\varepsilon}(\rho_{\varepsilon})= \int_{\mathcal{H}}^{}h_{\varepsilon}^{\mathrm{up}}(z) f_{\varepsilon}(z)  \frac{\mathrm{d}z}{(\pi\varepsilon)^d} - \tfrac{1}{\beta}S_{\mathrm{B}}(\varphi_{\varepsilon}) +\tfrac{d}{\beta}\log (\pi\varepsilon) \\= \tfrac{1}{\beta}S_{\mathrm{B}}(\varphi_{\varepsilon}\Vert \gamma_{\beta,\varepsilon}) + F_{\mathrm{B},\beta}(\gamma_{\beta,\varepsilon}) +\tfrac{d}{\beta}\log (\pi\varepsilon)\;,
\end{multline}
where
\begin{equation*}
  \mathrm{d}\gamma_{\beta,\varepsilon}(z)= \frac{e^{-\beta h_{\varepsilon}^{\mathrm{up}}(z)}\mathrm{d}z}{\int_{\mathcal{H}}^{}e^{-\beta h_{\varepsilon}^{\mathrm{up}}(z)} \mathrm{d}z}
\end{equation*}
is the classical Gibbs state for the upper symbol of $H_{\varepsilon}$. Again, this
shows, since the upper symbol is bounded from below, that the Wehrl free
energy is bounded from below. Its minimizers are the quantum states -- if any
-- whose Husimi function is
\begin{equation*}
  (\pi\varepsilon)^d\frac{e^{-\beta h_{\varepsilon}^{\mathrm{up}}(z)}}{\int_{\mathcal{H}}^{}e^{-\beta h_{\varepsilon}^{\mathrm{up}}(z)} \mathrm{d}z} \; .
\end{equation*}
The relative Wehrl free energy can be written as
\begin{equation}
  \label{eq:1}
  \widetilde{F}_{W,\beta,\varepsilon}(\rho_{\varepsilon})= \tfrac{1}{\beta} S_{\mathrm{B}}(\varphi_{\varepsilon} \Vert \gamma_{\beta,\varepsilon})\;.
\end{equation}

\subsection{Symbols and Quantization}
\label{sec:symbols-quantization}

In order to present our main results, let us introduce the Fock
representation $\mathcal{H}_{\varepsilon}= \mathcal{F}_\varepsilon(\mathcal{H})$ of the
canonical commutation relations, and the quantization maps from functions on
$\mathcal{H}$ to linear operators on $\mathcal{F}_\varepsilon(\mathcal{H})$.

As discussed above, we take the classical phase space $\mathcal{H}$ to be a complex
finite dimensional Hilbert space, naturally identifiable with the complex
structure of $\mathbb{R}^{2 d}$ endowed with the Euclidean norm and canonical
symplectic form; let us denote by $ \left\langle z|w \right\rangle_{\mathcal{H}} $ the inner
product on $\mathcal{H}$. The bosonic, or symmetric, Fock space over (a generic Hilbert
space) $\mathcal{H}$ is constructed as the completion of the symmetric tensor algebra
over $\mathcal{H}$:
\begin{equation}
  \mathcal{F}_\varepsilon (\mathcal{H}) =   \bigoplus_{n=0}^{\infty} \mathcal{H}^{\otimes_{\mathrm{s}} n}\;.
\end{equation}
Here, $ \mathcal{H}^{\otimes_{\mathrm{s}} 0} = \mathbb{C} $ and we denote by
$ \left|\Omega \right\rangle := (1,0,\dotsc) $ the {\it vacuum}
vector. Quantum mechanically, $ \mathcal{H}^{\otimes_{\mathrm{s}} n} $ is
meant to describe $ n $ indistinguishable particles (or excitations) and,
therefore, the number of particles in $ \mathcal{F}_\varepsilon(\mathcal{H}) $ is allowed
to vary. The creation and annihilation operators $a_\varepsilon^*(z)$ and
$a_\varepsilon(z)$ are defined on $\mathcal{F}_\varepsilon (\mathcal{H})$ for each $z \in
\mathcal{H}$ as\footnote{We denote here by $ \widehat{\cdot} $ a missing term
  in the expression.}
\begin{equation}
  a_\varepsilon^* (z)  \left|\psi_n \right\rangle  = \sqrt{\varepsilon (n+1)} \left|z \right\rangle \otimes_{\mathrm{s}} \left|\psi_n \right\rangle, \qquad	\text{for } \psi_n \in \mathcal{H}^{\otimes_{\mathrm{s}} n}\;,  
\end{equation}
\begin{equation}
  a_\varepsilon (z) \left|\varphi_1 \right\rangle \otimes_{\mathrm{s}} \cdots \otimes_{\mathrm{s}} \left|\varphi_n \right\rangle = \sqrt{\varepsilon} \sum_{i=1}^n  \left\langle z|\varphi_i \right\rangle \left|\varphi_1 \right\rangle \otimes_{\mathrm{s}} \cdots \otimes_{\mathrm{s}} \widehat{\left|\varphi_i \right\rangle} \otimes_{\mathrm{s}} \cdots \otimes_{\mathrm{s}} \left|\varphi_n \right\rangle\;,
\end{equation}
for $ \varphi_i \in \mathcal{H} $, $ i = 1, \ldots, n $. Note that $a_\varepsilon^*(z)$
(resp.\ $a_\varepsilon(z)$) maps the $n$-particle space to the $n+1$ (resp.\ $n-1$)
particle space, while $ \mathcal{H}^{\otimes_{\mathrm{s}} 0} $ and the vacuum
$ \left|\Omega \right\rangle $ are annihilated by $ a_\varepsilon(z) $, \emph{i.e.}, $ a_\varepsilon(z)
\mathcal{H}^{\otimes_{\mathrm{s}} 0} = \{ 0 \} $.  These operators are
unbounded but closable and their closures, still denoted by the same symbol,
are adjoint one other. The $\varepsilon$-dependent canonical commutation
relations (CCRs) read
\begin{equation}
  \tag{CCR} \label{CCR}
  \begin{aligned}
    \relax \left[a_\varepsilon(z),a_\varepsilon^*(w) \right] &= \varepsilon \left\langle z|w \right\rangle_\mathcal{H} \mathds{1}_{\mathcal{F}_\varepsilon (\mathcal{H})}\;, \\
    \left[a_\varepsilon(z),a_\varepsilon(w)\right] &= \left[a_\varepsilon^*(z),a_\varepsilon^*(w)\right] = 0\;.
  \end{aligned}
\end{equation}
Given an orthonormal basis $\{\phi_j\}$ of $\mathcal{H}$, let us define
\begin{gather*}
  a_{\varepsilon,j}:= a_{\varepsilon}(\phi_j)\;,\\
  a^{*}_{\varepsilon,j}:= a^{*}_{\varepsilon}(\phi_j)\;.
\end{gather*}

Given a self-adjoint operator $A$ with domain $D(A)$, its second quantization
$\mathrm{d} \Gamma_{\varepsilon} (A)$ is defined on $\otimes_{n \in
  \mathbb{N}}^{\mathrm{alg}} D(A)^{\otimes_s n}$ as
\begin{equation}
  \left. \mathrm{d} \Gamma_{\varepsilon} (A) \right|_{D(A)^{\otimes_{\mathrm{s}} n}} = \varepsilon \sum_{k=1}^n \mathds{1} \otimes \cdots \otimes\overset{k\text{-th term}}{\overbrace{\, A \,}} \otimes \cdots \otimes \mathds{1}\;.
\end{equation}
Among the linear operators on $ \mathcal{F}_\varepsilon(\mathcal{H}) $, a very important
role is played by the {\it Weyl operator}
\begin{equation}
  W_\varepsilon(z) : = e^{ \frac{i}{\sqrt{2}}    (a_\varepsilon(z)+a_\varepsilon^*(z)  )},	\qquad	z \in \mathcal{H}\;,
\end{equation}
which is involved in the definition of {\it coherent states}: for any
% They satisfy the so-called Weyl commutation relation (WCR), which can be
% proved using Baker-Campbell-Haussdorf formula : For any $z,w \in \Hi$,
% \begin{equation*}
% \tag{WCR}
%\begin{aligned}
%W_\eps(z) W_\eps(w) &=  e^{- \frac{i \eps}{2} \sigma(z,w)} W_\eps (z+w) \\ &= e^{-i \eps \sigma(z,w)} W_\eps(w) W_\eps(z)
%\end{aligned}
%\end{equation*}
%with $\sigma(z,w) := \text{Im} \langle z,w \rangle_\Hi$ the canonical
% symplectic form on $\Hi$.
% \subsection*{Coherent state}
$z \in \mathcal{H}$, we set
\begin{equation}
  \label{eq: coherent}
  \left|z_\varepsilon \right\rangle := W_\varepsilon \left( \tfrac{\sqrt{2} z }{i \varepsilon} \right) \left|\Omega \right\rangle = e^{ \frac{1}{\sqrt{\varepsilon}} ( a_\varepsilon^*(z) - a_\varepsilon(z)) } \left|\Omega \right\rangle\;.
\end{equation}
Coherent states are very well studied, and enjoy several notable properties;
let us mention that they have minimal uncertainty -- \emph{i.e.}, they can be
considered the ``most classical'' quantum states-- and they are eigenvectors
of the annihilation operators:
\begin{equation}
  \label{eq:eigenstate}
  a_\varepsilon(z) \left|w_\varepsilon \right\rangle =  \left\langle z|w \right\rangle_\mathcal{H} \left|w_\varepsilon \right\rangle\;.
\end{equation}
Furthermore, the family $\{ \left|z_\varepsilon \right\rangle \}_{z \in
  \mathcal{H}} $ forms an overcomplete set whenever $\mathcal{H}$ is finite
dimensional, with dimension $d$:
\begin{equation}
  \label{eq:overcomplete}
  \int_\mathcal{H} \left|z_\varepsilon \right\rangle \left\langle z_{\varepsilon}\right| \frac{\mathrm{d} z}{(\pi \varepsilon)^d} = \mathds{1}_\mathcal{H}\;.
\end{equation}
The number operator $N_{\varepsilon}=\mathrm{d}\Gamma_{\varepsilon}(1)$ is positive, and counts the
number of particles in each sector. It satisfies the following relations with
respect to the creation and annihilation operators: for any measurable
function $f:\mathbb{R}_+\to \mathbb{C}$ and $\psi\in D(N_{\varepsilon}^{1/2})$,  
\begin{gather}
  \label{eq:9}
  f(N_{\varepsilon})a_{\varepsilon}(z)= a_{\varepsilon}(z)f(N_{\varepsilon}-\varepsilon)\;,\; f(N_{\varepsilon})a^{*}_{\varepsilon}(z)= a^{*}_{\varepsilon}(z)f(N_{\varepsilon}+\varepsilon)\;;\\
  \label{eq:10}
  \lVert a_{\varepsilon}(z)\psi  \rVert_{}^{}\leq \lVert z  \rVert_{\mathcal{H}}^{}\lVert N_{\varepsilon}^{1/2}\psi  \rVert_{}^{}\;,\; \lVert a^{*}_{\varepsilon}(z)\psi  \rVert_{}^{}\leq \lVert z  \rVert_{\mathcal{H}}^{}\lVert N_{\varepsilon}^{1/2}\psi  \rVert_{}^{} + \varepsilon\lVert z  \rVert_{\mathcal{H}}^{}\lVert \psi  \rVert_{}^{}\;.
\end{gather}

Another important ingredient of our analysis is {\it Wick quantization},
which provides a map from {\it symbols}, \emph{i.e.}, suitable functions on $
\mathcal{H} $ to operators on $ \mathcal{F}_\varepsilon(\mathcal{H}) $. Formally, it can be easily defined on
homogeneous polynomials by simply replacing in each monomial $ \left\langle z|f
\right\rangle_{\mathcal{H}} $ (resp.\ $ \left\langle f|z \right\rangle_{\mathcal{H}} $) by $ a_\varepsilon^* (f) $ (resp.\ $
a_\varepsilon(f) $), and writing the expression in {\it normal order}, \emph{i.e.},
with all the creation operators to the left and all the annihilation
operators to the right. Let us recall here only a few elementary properties
of the Wick quantization and refer to, \emph{e.g.},
\cite{AmmariNier2008,Berezin1991} for a complete discussion.

Given $p,q \in \mathbb{N}$, define the set of $(p,q)-$homogeneous polynomial symbols of
degree $p+q$ on $\mathcal{H}$ as
\begin{equation}
  \mathcal{P}_{p,q} = \left\{ \left. b(z) = \left\langle z^{\otimes q}\left|\widetilde{b} z^{\otimes p}\right. \right\rangle_{\mathcal{H}^{\otimes q} } \: \right| \: \widetilde{b} \in \mathcal{L} (\mathcal{H}^{\otimes p}; \mathcal{H}^{\otimes q})\right\}\;.
\end{equation}
Note that for any $ b \in \mathcal{P}_{p,q} $ we can recover $\widetilde{b}$ via
Gâteaux
differentiation.
% Given $p,q \in \mathbb{N}$, define the set of $(p,q)-$homogeneous polynomial
% symbols on $\mathcal{H}$ as
% \begin{equation}
%   \mathcal{P}_{p,q} = \left\{ \left. b(z) = \left\langle z^{\otimes q}\left|\widetilde{b} z^{\otimes p}\right. \right\rangle_{\mathcal{H}^{\otimes_{\mathrm{s}} q} } \: \right| \: \widetilde{b} \in \mathcal{L} (\mathcal{H}^{\otimes_{\mathrm{s}} p}; \mathcal{H}^{\otimes_{\mathrm{s}} q})\right\}\;.
% \end{equation}
% Note that for any $ b \in \mathcal{P}_{p,q} $ we can recover $\widetilde{b}$ via
% Gâteaux
% differentiation.
% as $\widetilde{b} = \frac{1}{p !} \frac{1}{q !} \partial_z^p \partial_{\overline{z}}^q b(z)$.

\begin{definition}[Wick quantization]
  \label[definition]{defn: wick}
  \mbox{}	\\
  For any symbol $b \in \mathcal{P}_{p,q}$, the operator $ b^{\mathrm{Wick}}
  \in \mathcal{L}(\mathcal{F}_\varepsilon^{\mathrm{fin}}) $, with $
  \mathcal{F}_\varepsilon^{\mathrm{fin}} : = \oplus_{n=0}^{\mathrm{alg}}
  \mathcal{H}^{\otimes_{\mathrm{s}} n} $, is given by
  \begin{equation}
    \label{eq:Wick}
    b^{\mathrm{Wick}}= \sum_{i_1,\dotsc,i_q;j_1,\dotsc,j_p=1}^{d} \widetilde{b}_{i_1,\dotsc,i_q;j_1,\dotsc,j_p} \:a^{*}_{\varepsilon,i_1}\dotsm a^{*}_{\varepsilon,i_q} a_{\varepsilon,j_1}\dotsm a_{\varepsilon,j_p}\;,
  \end{equation}
  where
  \begin{equation*}
    \widetilde{b}_{i_1,\dotsc,i_q;j_1,\dotsc,j_p} = \langle \phi_{i_1}\otimes \dotsm\otimes \phi_{i_q}  \:\vert\: \widetilde{b}\: \phi_{j_1}\dotsm \phi_{j_p} \rangle_{\mathcal{H}^{\otimes_{} q}}\;.
  \end{equation*}
  % For any symbol $b \in \mathcal{P}_{p,q}$, the operator $ b^{\mathrm{Wick}}
  % \in \mathcal{L}(\mathcal{F}_\varepsilon^{\mathrm{fin}}) $, with $
  % \mathcal{F}_\varepsilon^{\mathrm{fin}} : = \oplus_{n=0}^{\mathrm{alg}}
  % \mathcal{H}^{\otimes_{\mathrm{s}} n} $, is given by
  % \begin{equation}
  %   \label{eq:Wick}
  %   \left. b^{\mathrm{Wick}} \right|_{\mathcal{H}^{\otimes_{\mathrm{s}} n}} = \mathds{1}_{p \leq n} \frac{\sqrt{n! (n+q-p)!}}{(n-p)!} \varepsilon^{\frac{p+q}{2}} \: \widetilde{b} \otimes_{\mathrm{s}} \mathds{1}_{\mathcal{H}^{\otimes_{\mathrm{s}} (n-p)}}\;.
  % \end{equation}
\end{definition}
Such operators are closable, and we still denote by $b^{\mathrm{Wick}}$ their
closure. Some key examples are given by
$$
\left\langle \xi|z \right\rangle^{\mathrm{Wick}} = a_{\varepsilon} (\xi),
\qquad \left\langle z|\xi \right\rangle^{\mathrm{Wick}} = a_{\varepsilon}^*
(\xi), \qquad \left\langle z\right|A\left|z \right\rangle^{\mathrm{Wick}} =
\mathrm{d} \Gamma_{\varepsilon} (A)\;,
$$
for any fixed $ \xi \in \mathcal{H} $ and $A \in \mathcal{L}(\mathcal{H})
$. Setting $ A = \mathds{1}_{\mathcal{H}}$, we recover the {\it number
  operator}
\begin{equation}
  \label{eq:Neps}
  N_{\varepsilon} : = \left( \left\| z \right\|^2 \right)^{\mathrm{Wick}} = \mathrm{d} \Gamma_{\varepsilon}(1)\;.
\end{equation}

Analogously, one can introduce the {\it anti-Wick quantization}, which
formally consists of exchanging the role of $ a_\varepsilon^* $ and
$a_\varepsilon $ in the procedure.
\begin{definition}[anti-Wick quantization]
  \mbox{}	\\
  For any (suitably regular) symbol $ b: \mathcal{H}\to \mathbb{C}$, the
  operator $ b^{\mathrm{a-Wick}} $ is densely defined and closable on
  $\mathcal{F}_\varepsilon(\mathcal{H})$, and has the form
  \begin{equation}
    b^{\mathrm{a-Wick}} = \int_{\mathcal{H}} b(z)  \left|z_\varepsilon \right\rangle\left\langle z_\varepsilon \right| \: \frac{\mathrm{d} z}{(\pi \varepsilon)^d}\;. 
  \end{equation}
\end{definition}
It is not difficult to verify that the above definition satisfies the
heuristics mentioned above, \emph{i.e.}, it formally transforms $
\left\langle z|f \right\rangle_{\mathcal{H}} $ and $ \left\langle f|z
\right\rangle_{\mathcal{H}} $ into $ a_\varepsilon (f) $ and $
a^*_\varepsilon(f) $, respectively, and moving all the $ a $'s to the left
and all the $a^*$'s to the right (anti-normal ordering): recalling
\eqref{eq:eigenstate} and \eqref{eq:overcomplete}, we indeed get
\begin{multline}
  \left(\left\langle f|z \right\rangle_{\mathcal{H}} \left\langle z|g \right\rangle_{\mathcal{H}}  \right)^{\mathrm{a-Wick}} =
  \int_{\mathcal{H}} \overline{\left\langle z|f \right\rangle_{\mathcal{H}}} \left\langle z|g \right\rangle_{\mathcal{H}} | z_\varepsilon \rangle \langle z_\varepsilon | \frac{\mathrm{d} z}{(\pi \varepsilon)^d} \\
  = \int_{\mathcal{H}} a_\varepsilon(g) | z_\varepsilon \rangle \langle z_\varepsilon | a_\varepsilon^*(f)\frac{\mathrm{d} z}{(\pi \varepsilon)^d} = a_\varepsilon(g) a_\varepsilon^*(f)\;.
\end{multline}

We are now in a position to state the assumptions on the Hamiltonian
$H_{\varepsilon}$ of the system, needed for our results. We assume that
$H_{\varepsilon}= h^{\mathrm{Wick}}$, with $h$ belonging to a suitable symbol
class $\mathcal{S}$.

\begin{definition}[Symbol class $\mathcal{S}$] 
  \mbox{}	\\
  A polynomial symbol $h\in \oplus^{\mathrm{alg}}_{p,q} \mathcal{P}_{p,q}$ belongs to the
  class $\mathcal{S}$ if and only if
  \begin{itemize}
  \item For all $z\in \mathcal{H}$, $h(z)= \overline{h(z)}$;
  \item $h(z)= h_0(z)+ V(z)$, with
    \begin{equation*}
      h_0(z)= \sum_{p=1}^{p_{\mathrm{max}}} \langle z^{\otimes p}  \vert \widetilde{h}_{p} \,z^{\otimes p}  \rangle_{\mathcal{H}^{\otimes p}}
    \end{equation*}
    for some $p_{\mathrm{max}}\geq 1$ and $\widetilde{h}_p\geq 0$ (bounded)
    operator on $\mathcal{H}^{\otimes p}$, $\widetilde{h}_{p_{\mathrm{max}}}>0$, and $V(z)$
    of maximal degree strictly smaller than $2p_{\mathrm{max}}$.
  \end{itemize}
  % We define the following class of symbols:
  % \begin{equation}
  %   \mathcal{S} := \left\{ h(z) \in
  % \otimes^{\mathrm{alg}}_{p,q} \mathcal P_{p,q}  \: \big| \: \exists C,\widetilde{C}>0, \ h^{\mathrm{Wick}} \geq \left( C \left\| z \right\|^{\mathrm{deg}(h)} - \widetilde{C}  \right)^{\mathrm{Wick}}  \right\}\;.
  % \end{equation}
\end{definition}

For any $ h \in \mathcal{S} $, we denote its Wick quantization by
\begin{equation}
  H_\varepsilon := h^{\mathrm{Wick}}\;.
\end{equation}
By a standard application of \eqref{eq:9} and \eqref{eq:10} together with
Kato-Rellich's theorem, there exist constants $C>0$ and $\widetilde{C}\geq 0$
such that
\begin{equation*}
  H_{\varepsilon}\geq C \bigl(\lVert z  \rVert_{}^{2p_{\mathrm{max}}}\bigr)^{\mathrm{Wick}} - \widetilde{C}\;.
\end{equation*}
Noticing that, by definition,
\begin{equation}
  h(z) = \left\langle z_\varepsilon\left|H_\varepsilon\right|z_\varepsilon\right\rangle_{\mathcal{F}_\varepsilon(\mathcal{H})}\;,
\end{equation}
the bound above yields
\begin{equation}
  \label{eq:symbol lb}
  h(z) \geq C \left\| z \right\|^{2p_{\mathrm{max}}} - \widetilde{C}.
\end{equation}

% In concrete physical applications, one often finds Hamiltonians of the form
% $$
% h(z) = \langle z | h_0 | z \rangle  + V(z),
% $$
% where $ h_0 $ typically identifies the kinetic
% energy of the particle system, possibly including a trapping or external
% potential, while $V^{\mathrm{Wick}}$ denotes the interaction, which can
% change the number of excitations. The assumptions on $h$ ensure
% that $H_\varepsilon$ is a bounded by below symmetric operator on the dense
% subspace $\bigoplus_{n=0}^{\mathrm{alg}} \mathcal{H}^{\otimes_s n}$, which
% can be extended to a bounded by below selfadjoint operator on the whole of $
% \mathcal{F}_\varepsilon(\mathcal{H}) $ (still denoted by $H_\varepsilon$) via the
% Friedrich's extension.
% The latter can be interpreted as asking the dominant term in the energy Hamiltoian to be isotropic, i.e. independent of the direction.

\begin{remark}[Partition functions]
  \mbox{}	\\
  Since $h$ grows at least like $\left\| z \right\|^{2p_{\mathrm{max}}}$,
  \begin{equation}
    Z_{\beta,0} =  \int_{\mathcal{H}}  e^{- \beta h(z)} \mathrm{d} z < + \infty.
  \end{equation}
  Also, as shown in \cite[Prop. 5.2.27]{BratelliRobinson2013}, it holds that
  \begin{equation}
    Z_{\beta,\varepsilon} = \Tr_{\mathcal{F}_\varepsilon(\mathcal{H})} \left( e^{-\beta H_{\varepsilon}} \right) < +\infty.
  \end{equation}
  % $$ \Tr ( e^{-\beta H_{0,\eps}} ) <\infty. $$
\end{remark}
% To see that the partition function is finite, we use the assumptions on the
% symbols to write
%$$
%H_\eps = \alpha H_{0,\eps} + W,
%$$
%where $W = H_\eps - \alpha H_{0,\eps} \geq 0.$ Now, we use Golden-Thompson
% inequality to get :
%$$  \Tr ( e^{-\beta H_\eps}) \leq \Tr ( e^{-\beta \alpha H_{0,\eps}} e^{-\beta W} ) \leq \Tr (e^{- \beta \alpha H_{0,\eps}}) < \infty .
%$$
%{\color{red} rg : I don't seem to find a reference for golden thompson in
% the unbounded case} It can also be seen using Peierls-Bogoliubov inequality
% : Denoting $(e_n)_n$ an eigenbasis for $H_\eps$, we have
% \begin{align*}
%\Tr (e^{-\beta H_\eps}) = \sum_n e^{- \beta \langle e_n, H_\eps e_n \rangle} &\leq \sum_n e^{- \beta \alpha \langle e_n, H_{0,\eps} e_n \rangle}  \\ &\leq \sum_n \langle e_n , e^{- \beta H_{0,\eps}} e_n \rangle = \Tr (e^{-\beta H_{0,\eps}}).
%\end{align*}
%The same computation shows that the classical partition function is finite
% as well :
%$$
%\int_{\C^d} e^{- \beta h(z)} \di z \leq \int_{\C^d} e^{- \beta \alpha h_0
% (z)} \di z < + \infty.
%$$

\subsection{Convergence of Gibbs Entropies}
\label{sec:conv-entr}

Our first result concerns the limit of quantum entropies of Gibbs states
$\Gamma_{\beta,\varepsilon}$, for a suitable class of Hamiltonians
$H_{\varepsilon}= h^{\mathrm{Wick}} $. Let us immediately point out, however,
that the quantum entropies diverge in the semiclassical limit $ \varepsilon
\to 0 $. It is possible to gather an intuition by looking at the partition
functions: taking, \emph{e.g.}, the harmonic oscillator Hamiltonian on $
\mathbb{R}^d $, one immediately verifies that $ Z_{\beta, \varepsilon} $
defined in \eqref{eq: zbe} contains a diverging term of order
$\varepsilon^{-d}$. Therefore, in order to prove convergence, it is necessary
to subtract the corresponding divergent expressions from the entropies,
namely $ - d \log (\pi \varepsilon) $. Note that the divergent term only
depends on the dimension of the phase space $\mathcal{H}$, and not on the
choice of the Hamiltonian.

In order to state our result, we need another assumption on the Hamiltonian
$h $, or, more precisely, on the quantum Gibs state associated to its Wick
quantization.

\begin{assumption*}
  \mbox{}	\\
  The family of Gibbs states $(\Gamma_{\beta,\varepsilon}=(Z_{\beta,
    \varepsilon})^{-1}e^{-\beta H_\varepsilon})_{\varepsilon \in (0,1]} \in
  \mathfrak{S}_+^1(\mathcal{F}_\varepsilon)$ satisfies \eqref{A}, if $\forall k\in
  \mathbb{N}, \exists C_k<+\infty$ independent of $ \varepsilon $ such that
  \begin{equation} \tag{A} \label{A} \left\lVert
      (N_\varepsilon+\varepsilon)^{\frac{k}{2}} e^{-\beta H_\varepsilon}
      (N_\varepsilon+\varepsilon)^{\frac{k}{2}}\right\rVert_{\mathcal L
      (\mathcal{F}_\varepsilon (\mathcal{H}))} \leq C_k\;.
  \end{equation}
\end{assumption*}

\begin{remark}[Role of \eqref{A}]
  \mbox{}	\\
  The assumption \eqref{A} will be used to prove that the Husimi function
  \eqref{eq: Husimi} associated to the Gibbs state decays fast enough at
  infinity with appropriate uniform bounds. A simple case in which \eqref{A}
  trivially holds true is when the Gibbs state is constructed from from a
  Hamiltonian $ H_{\varepsilon} $ commuting with the number operator, but in
  this case one can directly prove that the Husimi function has an
  exponential decay with uniform rate in $\varepsilon$.
\end{remark}

Under this assumption, we prove that both the von Neumann and Wehrl entropies
of the Gibbs state converge to the Boltzmann entropy of its classical
counterpart.% \footnote{As discussed in \cref{sec:wign-meas-husimi} below,
  % convergence in the sense of Wigner measures is defined as the convergence
  % of the expectation on the quantum state of the anti-Wick quantization of
  % continuous symbols vanishing at infinity, towards the expectation on the
  % classical measure of the symbols themselves.}

\begin{thm}[Convergence of entropies]
  \label[theorem]{thm:convergence}
  \mbox{}	\\
  Let $h \in \mathcal{S} $ and let $H_\varepsilon = h^{\mathrm{Wick}} $ be
  its Wick quantization. Let also the Gibbs state $\Gamma_{\beta,\varepsilon}
  := (Z_{\beta, \varepsilon})^{-1} e^{-\beta H_\varepsilon}$ satisfy the
  assumption \eqref{A}. % Then,
  % \begin{equation}
  %   \label{wignergibbs}
  %   \lim_{\varepsilon\to 0}\Gamma_{\beta,\varepsilon}= \gamma_{\beta}\;,
  % \end{equation}
  % where $\mathrm{d}\gamma_{\beta}(z)= e^{-\beta h(z)}\mathrm{d}z$ is the
  % classical Gibbs measure, and the convergence holds in the sense of Wigner
  % measures.
  Then, Gibbs entropies converge:
  \begin{equation}
    \label{cvg entropy}
    S_{\diamond,\varepsilon} (\Gamma_{\beta,\varepsilon}) + d \log( \pi \varepsilon) \xrightarrow[\varepsilon \to 0]{} S_{\mathrm{B}} ( \gamma_{\beta} )\;,
  \end{equation}
  where $\diamond$ stands either for $\mathrm{vN}$ (von Neumann) or
  $\mathrm{W}$ (Wehrl).
\end{thm}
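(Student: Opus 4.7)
The plan is to reduce both entropy limits to two basic semiclassical asymptotics: the convergence of the rescaled partition function, $(\pi\varepsilon)^d Z_{\beta,\varepsilon} \to Z_{\beta,0}$, and of the thermal energy, $\Tr_{\mathcal{F}_\varepsilon(\mathcal{H})}(H_\varepsilon\Gamma_{\beta,\varepsilon}) \to \int_{\mathcal{H}} h\,\mathrm{d}\gamma_\beta$. The von Neumann case is then immediate from the identity $S_{\mathrm{vN},\varepsilon}(\Gamma_{\beta,\varepsilon}) = \log Z_{\beta,\varepsilon} + \beta\,\Tr_{\mathcal{F}_\varepsilon(\mathcal{H})}(H_\varepsilon\Gamma_{\beta,\varepsilon})$: adding $d\log(\pi\varepsilon)$ to both sides and passing to the limit produces $\log Z_{\beta,0} + \beta\int h\,\mathrm{d}\gamma_\beta = S_{\mathrm{B}}(\gamma_\beta)$.

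I would obtain the partition-function asymptotic by using the overcompleteness \eqref{eq:overcomplete} to write $(\pi\varepsilon)^d Z_{\beta,\varepsilon} = \int_{\mathcal{H}}\langle z_\varepsilon|e^{-\beta H_\varepsilon}|z_\varepsilon\rangle\,\mathrm{d}z$ and then applying dominated convergence. For the pointwise limit $\langle z_\varepsilon|e^{-\beta H_\varepsilon}|z_\varepsilon\rangle \to e^{-\beta h(z)}$, Jensen's inequality applied to the convex map $t\mapsto e^{-\beta t}$ (Peierls-Bogoliubov) yields the lower bound $\langle z_\varepsilon|e^{-\beta H_\varepsilon}|z_\varepsilon\rangle \geq e^{-\beta h(z)}$, and a Duhamel expansion combined with the convergence of the lower symbol of $H_\varepsilon^n$ to $h(z)^n$ gives the matching upper asymptotics. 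For the dominating function, inserting $\mathds{1} = (N_\varepsilon+\varepsilon)^{-k/2}(N_\varepsilon+\varepsilon)^{k/2}$ on both sides of $e^{-\beta H_\varepsilon}$ and using assumption \eqref{A} yields
\[
  \langle z_\varepsilon|e^{-\beta H_\varepsilon}|z_\varepsilon\rangle \leq C_k \, \lVert (N_\varepsilon+\varepsilon)^{-k/2}|z_\varepsilon\rangle \rVert^2\;,
\]
and the Poissonian particle-number distribution of the coherent state controls the right-hand side by $C'_k(1+\lVert z\rVert^2)^{-k}$ for every $k\in\mathbb{N}$; taking $k$ large enough makes the bound integrable and yields $(\pi\varepsilon)^d Z_{\beta,\varepsilon} \to Z_{\beta,0}$.

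For the energy expectation I would split $h^{\mathrm{Wick}} = h^{\mathrm{a-Wick}} + (h^{\mathrm{Wick}}-h^{\mathrm{a-Wick}})$. The first trace equals $\int h(z)\,\mathrm{d}\varphi_\varepsilon(z)$ and converges to $\int h\,\mathrm{d}\gamma_\beta$ by the same pointwise-plus-domination scheme applied to the Husimi density $f_\varepsilon/(\pi\varepsilon)^d$ (choosing $k\gg p_{\max}$ so the polynomial growth of $h$ is absorbed by the decay of the dominant). The second trace involves polynomials of strictly lower degree multiplied by explicit powers of $\varepsilon$ and vanishes via uniform moment bounds $\Tr(N_\varepsilon^m\Gamma_{\beta,\varepsilon})\leq C_m$, again extracted from \eqref{A}. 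Putting this together with the previous step completes the von Neumann convergence.

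For the Wehrl entropy I would then sandwich the limit. The upper bound comes from non-negativity of the Boltzmann relative entropy against $\gamma_\beta$:
\[
  S_{\mathrm{W},\varepsilon}(\Gamma_{\beta,\varepsilon}) + d\log(\pi\varepsilon) = S_{\mathrm{B}}(\varphi_\varepsilon) = -S_{\mathrm{B}}(\varphi_\varepsilon\Vert\gamma_\beta) + \log Z_{\beta,0} + \beta\int_{\mathcal{H}} h\,\mathrm{d}\varphi_\varepsilon \leq \log Z_{\beta,0} + \beta\int_{\mathcal{H}} h\,\mathrm{d}\varphi_\varepsilon\;,
\]
whose right-hand side tends to $S_{\mathrm{B}}(\gamma_\beta)$ by the energy convergence already proved. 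The matching lower bound is the Berezin-Lieb inequality $S_{\mathrm{vN},\varepsilon}(\rho) \leq S_{\mathrm{W},\varepsilon}(\rho)$, which is Jensen's inequality for $x\mapsto x\log x$ combined with the coherent-state resolution of the identity, used together with the von Neumann limit just established. The main obstacle throughout is producing the uniform-in-$\varepsilon$ integrable dominating function for $\langle z_\varepsilon|e^{-\beta H_\varepsilon}|z_\varepsilon\rangle$ with arbitrarily fast polynomial decay in $\lVert z\rVert$; this is precisely what assumption \eqref{A} is tailored for, converting operator-norm bounds on $(N_\varepsilon+\varepsilon)^{k/2}e^{-\beta H_\varepsilon}(N_\varepsilon+\varepsilon)^{k/2}$ into pointwise bounds via the coherent-state Poisson decomposition. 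Everything else reduces to standard dominated-convergence steps and identities from the Wick calculus.
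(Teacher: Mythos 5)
Your proposal is correct, and it reorganizes the proof in a genuinely different way from the paper, even though it rests on the same two technical pillars: the pointwise limit $\langle z_\varepsilon|e^{-\beta H_\varepsilon}|z_\varepsilon\rangle\to e^{-\beta h(z)}$ (the paper's \cref{lemma:3}, proved there by differentiating in $\beta$ and a single Duhamel step --- this is where the real work in your ``Duhamel expansion'' sentence lives, and you should be careful not to rely on a full Taylor expansion of $e^{-\beta H_\varepsilon}$ in powers of $H_\varepsilon$, which is delicate for an unbounded operator) and the uniform decay $\langle z_\varepsilon|(N_\varepsilon+\varepsilon)^{-k}|z_\varepsilon\rangle\lesssim_k \|z\|^{-2k}$ extracted from \eqref{A} via the Poissonian number statistics of coherent states (\cref{sec:upper-bound-entr}). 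The architectures are mirror images of one another. The paper proves the \emph{Wehrl} convergence first, by dominated convergence applied directly to $-\tfrac{f_{\beta,\varepsilon}}{(\pi\varepsilon)^d}\log\tfrac{f_{\beta,\varepsilon}}{(\pi\varepsilon)^d}$; it then gets the von Neumann \emph{upper} bound from Wehrl's inequality $S_{\mathrm{vN},\varepsilon}<S_{\mathrm{W},\varepsilon}$ (\cref{prop:2}) and proves the von Neumann \emph{lower} bound separately, writing $S_{\mathrm{vN},\varepsilon}(\Gamma_{\beta,\varepsilon})-\log Z_{\beta,\varepsilon}=\beta\,\Tr(H_\varepsilon\Gamma_{\beta,\varepsilon})/1$ in the upper-symbol representation and applying Jensen (\cref{sec:lower-bound-von}). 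You instead prove the \emph{von Neumann} convergence first, exactly from the thermodynamic identity $S_{\mathrm{vN},\varepsilon}=\log Z_{\beta,\varepsilon}+\beta\Tr(H_\varepsilon\Gamma_{\beta,\varepsilon})$ plus convergence of the partition function and of the energy (the latter via the $h^{\mathrm{a\text{-}Wick}}$ decomposition and \cref{pro:symbols}); you then get the Wehrl \emph{lower} bound from the same comparison inequality used in the opposite direction, and the Wehrl \emph{upper} bound from positivity of the classical relative entropy $S_{\mathrm{B}}(\varphi_\varepsilon\Vert\gamma_\beta)\geq 0$, which the paper does not use at this point. Your route buys a cleaner, simultaneous treatment of both von Neumann bounds and replaces the paper's domination of the entropy integrand \eqref{eqp:Husimi bound} by the (essentially equivalent) domination of $h(z)f_{\beta,\varepsilon}(z)/(\pi\varepsilon)^d$; the paper's route avoids having to prove full convergence of the quantum energy, needing only a one-sided Jensen bound for it. Two small points to tidy up if you write this out: (i) justify the trace--integral exchange $\Tr(h^{\mathrm{a\text{-}Wick}}\Gamma_{\beta,\varepsilon})=\int h\,\mathrm{d}\varphi_\varepsilon$ by splitting off the lower bound $h\geq-\widetilde C$ so that Tonelli applies; (ii) the uniform moment bounds $\Tr(N_\varepsilon^m\Gamma_{\beta,\varepsilon})\leq C_m$ are most easily obtained from $N_\varepsilon^m\leq(\|z\|^{2m})^{\mathrm{a\text{-}Wick}}$ together with the \eqref{A}-derived decay of the Husimi density, rather than directly from the operator-norm bound in \eqref{A}.
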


\begin{remark}[Entropy rescaling]
  \mbox{}	\\
  The term $d \log (\pi \varepsilon)$ acts as a renormalization or rescaling
  of the quantum entropy and is independent of all the details of the system
  but the dimension $ d $ of the classical phase space $ \mathcal{H} $,
  \emph{i.e.}, the classical number of degrees of freedom.
\end{remark}

As a byproduct of this theorem -- in particular combining it with \cref{cor:2}
that states that the classical Gibbs measure is the unique Wigner measure of
the quantum Gibbs state -- we also prove convergence of both Wehrl and von
Neumann Gibbs free energies.
\begin{corollary}[Convergence of the Gibbs free energies]
  \label[corollary]{cor:3}
  \mbox{}\\
  \begin{equation*}
    \lim_{\varepsilon\to 0} \Bigl[F_{\diamond, \beta,\varepsilon}(\Gamma_{\beta,\varepsilon}) - \tfrac{d}{\beta}\log(\pi\varepsilon)\Bigr]= F_{\mathrm{B},\beta}(\gamma_{\beta})\;.
  \end{equation*}
\end{corollary}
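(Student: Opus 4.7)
The plan is to split the relative free energy into an energy expectation plus an entropic contribution, and to handle each of the two pieces with a tool already available: \cref{thm:convergence} will control the entropy, while \cref{cor:2} combined with assumption \eqref{A} will control the energy expectation. A first observation is that, for both $\diamond\in\{\mathrm{vN},\mathrm{W}\}$, the free energy of the Gibbs state has the same energy contribution $\Tr_{\mathcal{F}_\varepsilon(\mathcal{H})}(H_\varepsilon\Gamma_{\beta,\varepsilon})$: in the Wehrl case this is a direct consequence of the anti-Wick representation of $H_\varepsilon$ via its upper symbol, which together with \eqref{eq: Husimi} yields
\begin{equation*}
  \int_{\mathcal{H}} h^{\mathrm{up}}_{\varepsilon}(z)\, f_\varepsilon(z)\,\frac{\mathrm{d}z}{(\pi\varepsilon)^d} = \Tr_{\mathcal{F}_\varepsilon(\mathcal{H})}\bigl(H_\varepsilon\,\Gamma_{\beta,\varepsilon}\bigr)\;.
\end{equation*}
Consequently, for both choices of $\diamond$,
\begin{equation*}
  F_{\diamond,\beta,\varepsilon}(\Gamma_{\beta,\varepsilon}) - \tfrac{d}{\beta}\log(\pi\varepsilon) = \Tr_{\mathcal{F}_\varepsilon(\mathcal{H})}\bigl(H_\varepsilon\,\Gamma_{\beta,\varepsilon}\bigr) - \tfrac{1}{\beta}\bigl[S_{\diamond,\varepsilon}(\Gamma_{\beta,\varepsilon}) + d\log(\pi\varepsilon)\bigr]\;,
\end{equation*}
and \cref{thm:convergence} immediately sends the entropic bracket to $S_{\mathrm{B}}(\gamma_\beta)$.

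The remaining task is to establish the semiclassical limit of the energy term, namely
\begin{equation*}
  \Tr_{\mathcal{F}_\varepsilon(\mathcal{H})}\bigl(h^{\mathrm{Wick}}\,\Gamma_{\beta,\varepsilon}\bigr) \xrightarrow[\varepsilon\to 0]{} \int_{\mathcal{H}} h(z)\,\mathrm{d}\gamma_\beta(z)\;.
\end{equation*}
Here I would exploit \cref{cor:2}, which identifies $\gamma_\beta$ as the unique Wigner measure of the family $(\Gamma_{\beta,\varepsilon})_{\varepsilon\in(0,1]}$. Since $h\in\mathcal{S}$ is a polynomial symbol of maximal degree $2p_{\max}$, passage to the limit follows by the standard semiclassical calculus for polynomial Wick symbols: expressing Wick expectations via the Husimi measures $\varphi_\varepsilon$ (with the usual Berezin $O(\varepsilon)$ corrections for polynomial symbols of bounded degree), the problem reduces to passing to the limit in $\int h(z)\, \mathrm{d}\varphi_\varepsilon(z)$, and this is justified by combining the narrow convergence $\varphi_\varepsilon\rightharpoonup \gamma_\beta$ (implied by \cref{cor:2}) with uniform-in-$\varepsilon$ moment bounds on $\varphi_\varepsilon$. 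The latter are precisely what assumption \eqref{A} is designed to deliver, and are already exploited in the proof of \cref{thm:convergence}; together with the polynomial lower bound \eqref{eq:symbol lb} on $h$, they guarantee the uniform tightness and equi-integrability needed to pass $h$ under the limit.

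Putting the two convergences together then gives
\begin{equation*}
  \lim_{\varepsilon\to 0}\Bigl[F_{\diamond,\beta,\varepsilon}(\Gamma_{\beta,\varepsilon}) - \tfrac{d}{\beta}\log(\pi\varepsilon)\Bigr] = \int_{\mathcal{H}} h(z)\,\mathrm{d}\gamma_\beta(z) - \tfrac{1}{\beta}S_{\mathrm{B}}(\gamma_\beta) = F_{\mathrm{B},\beta}(\gamma_\beta)\;,
\end{equation*}
uniformly for both $\diamond=\mathrm{vN}$ and $\diamond=\mathrm{W}$. The main technical obstacle is the energy-convergence step: because $h$ is an \emph{unbounded} polynomial symbol, the elementary bounded-symbol Wigner calculus does not apply directly, and it is precisely the strong uniform control on high moments of the Gibbs Husimi function supplied by \eqref{A} that makes the limit rigorous.
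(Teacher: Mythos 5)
Your proposal is correct and takes essentially the route the paper intends: the renormalized entropy bracket is controlled by \cref{thm:convergence}, while the common energy term $\Tr_{\mathcal{F}_\varepsilon(\mathcal{H})}(H_\varepsilon\Gamma_{\beta,\varepsilon})\to\int_{\mathcal{H}}h\,\mathrm{d}\gamma_\beta$ follows from \cref{cor:2} once the weak convergence is upgraded to the unbounded polynomial symbol $h$ via the uniform decay of the Gibbs Husimi function that \eqref{A} delivers in \cref{sec:upper-bound-entr}, together with the upper/lower symbol expansion of \cref{pro:symbols}. As a side remark, the energy convergence can also be obtained without any semiclassical calculus for unbounded symbols, since $S_{\mathrm{vN},\varepsilon}(\Gamma_{\beta,\varepsilon})=\beta\Tr_{\mathcal{F}_\varepsilon(\mathcal{H})}(H_\varepsilon\Gamma_{\beta,\varepsilon})+\log Z_{\beta,\varepsilon}$ reduces it to \cref{thm:convergence} and \cref{lemma:2}.
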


\subsection{$\Gamma$-convergence of Free Energies}
\label{sec:gamma-conv-free}

The other main result of the paper concerns the $\Gamma$-convergence of the
quantum free energy functionals in the classical limit. The
$\Gamma$-convergence of functionals is a powerful tool in calculus of
variations (see \cite{Braides} for a gentle but detailed introduction to the
topic), and it seems very well tailored to study the Gibbs variational
problem in the classical limit.

Firstly, we have to define precisely the free energy functionals, both
classical and quantum, paying attention to their respective domains. Let us
start from the latter: to prove a $ \Gamma-$convergence result, it is necessary to
identify a common domain for both the quantum functionals and their classical
limit; this is done by taking \emph{Fourier transforms}. The classical states
are in bijection, by means of Bochner's theorem, with the continuous and
positive definite complex functions on the dual $\mathcal{H}^{*}\cong \mathcal{H}$. More precisely, a
function $G_0:\mathcal{H}^{*}\to \mathbb{C}$ is the characteristic function (Fourier transform) of
a probability measure if and only if the following properties are satisfied:
\begin{itemize}
\item $G_0(0)=1$;
\item $G_0$ is continuous;
\item for any finite collections $\{\zeta_j\}_{j=1}^N\subset \mathbb{C}$ and $\{z_j\}_{j=1}^N\subset
  \mathcal{H}$,
  \begin{equation*}
    \sum_{j,k=1}^N \bar{\zeta}_k\zeta_j G_0(z_j-z_k)\geq 0\;.
  \end{equation*}
\end{itemize}
Let us denote by $\mathfrak{G}_0$ the set of such continuous and positive definite
functions. Analogously, quantum density matrices are in bijection, by
noncommutative Bochner's theorem, with the continuous and \emph{quantum}
positive definite complex functions on $\mathcal{H}^{*}$. More precisely, a function
$G_{\varepsilon}: \mathcal{H}^{*}\to \mathbb{C}$ is the noncommutative characteristic function of a density
matrix (normal Schrödinger state) for the $\varepsilon$ canonical commutation relations
\eqref{CCR} if and only if:
\begin{itemize}
\item $G_\varepsilon(0)=1$;
\item $G_\varepsilon$ is continuous;
\item for any finite collections $\{\zeta_j\}_{j=1}^N\subset \mathbb{C}$ and $\{z_j\}_{j=1}^N\subset
  \mathcal{H}$,
  \begin{equation*}
    \sum_{j,k=1}^N \bar{\zeta}_k\zeta_j G_\varepsilon(z_j-z_k) e^{-i \varepsilon \mathrm{Im}\langle  z_k \vert z_j \rangle_{\mathcal{H}}}\geq 0\;.
  \end{equation*}
\end{itemize}
Let us denote by $\mathfrak{G}_{\varepsilon}$ the set of continuous and $\varepsilon$-quantum positive
definite functions.

Now that quantum and classical state are put on the same grounds, as
complex-valued continuous functions on $\mathcal{H}^{*}$, we can define the topological
space $(X,\mathcal{T})$ to be the domain of the $\Gamma$--functionals.
\begin{definition}[$(X,\mathcal{T})$]
  \label[definition]{def:1}
  We define the subspace $X\subset \mathcal{C}(\mathcal{H}^{*};\mathbb{C})$ by
  \begin{equation*}
    X= \mathfrak{G}_0 \cup \Bigl(\bigcup_{n\in \mathbb{N}^{*}} \mathfrak{G}_{\frac{1}{n}}\Bigr)\;.
  \end{equation*}
  We endow $X$ with the subspace topology $\mathcal{T}$ of the compact-open topology on
  $\mathcal{C}(\mathcal{H}^{*};\mathbb{C})$.
\end{definition}
\begin{remark}
  \label{rem:3}
  A couple of observations are in order:
  \begin{itemize}
  \item The intersection $\mathfrak{G}_0\cap \mathfrak{G}_{\varepsilon}$ is not empty for any $\varepsilon>0$, Gaussian
    functions being the most prominent example. Therefore, the Fourier
    transform is a bijection on $\mathfrak{G}_0$ and any $\mathfrak{G}_{\varepsilon}$ separately, but
    \emph{it is not such} on the union, where it fails to be
    injective. Therefore, we cannot directly take $X$ to be the union of
    probability measures and density matrices (with the initial topology
    making the Fourier transforms continuous when the image is endowed with
    the compact-open topology): such topological space would fail to separate
    points; we thus need to ``inverse Fourier transform'' correctly in
    defining the quantum and classical functionals, as we do here below.
  \item An $\varepsilon$-quantum positive definite function defines a quantum state for
    the $\varepsilon$-CCR only, \emph{it does not} define a state for an $\varepsilon'$-CCR,
    whenever $\varepsilon'\neq \varepsilon$.
  \end{itemize}
\end{remark}

To define the relative free energy functionals as functionals on $X$, we
introduce the following useful notation. Let $x\in \mathfrak{G}_{\varepsilon}$, $\varepsilon\geq 0$, we denote by
$\check{x}_{\varepsilon}$ the $\varepsilon$-inverse Fourier transform of $x$: $\check{x}_{\varepsilon}$ is
the unique density matrix ($\varepsilon> 0$) or probability measure ($\varepsilon=0$) whose
Fourier transform is $x$. If no confusion arises, we might omit the subscript
$_{\varepsilon}$ from $\check{x}$. This leads to the following definition:
\begin{equation}
  \label{eq: fbf}
  \widetilde{F}_{\mathrm{B},\beta} (x) :=
  \begin{cases} 
    \; \widetilde{F}_{\mathrm{B},\beta}(\check{x}_0)   &	\text{if }  x\in \mathfrak{G}_0	\\
    \; +\infty & \text{otherwise}
  \end{cases}\;.
\end{equation}
Analogously, for the von Neumann and Wehrl free energies,
\begin{equation}
  \label{eq: fwf}
  \widetilde{F}_{\diamond,\beta,\varepsilon}(x):=
  \begin{cases} 
    \; \widetilde{F}_{\diamond,\beta,\varepsilon}(\check{x}_{\varepsilon})   &	\text{if }  x \in \mathfrak{G}_{\varepsilon}\\
    \;+\infty & \text{otherwise}
  \end{cases}\;;
\end{equation}
where as before $\diamond$ stands either for $\mathrm{vN}$ or $\mathrm{W}$.

Let us now recall the sequential definition of $ \Gamma-$convergence in
first-countable spaces. For a more general definition and overview of
$\Gamma-$convergence, refer again to \cite{Braides}.

\begin{definition}[$\Gamma-$convergence]
  \label[definition]{def: Gamma}
  \mbox{}	\\
  Let $X$ be a first-countable topological space. Given a sequence of
  functionals $ \left\{ F_n \right\}_{n \in \mathbb{N}} $ and $ F $ such that
  $F_n , F : X \rightarrow \overline{\mathbb{R}} = \mathbb{R} \cup \{+\infty\}
  $, we say that $F_n \xrightarrow[n \to \infty] {\Gamma} F $ if the
  following holds:
  \begin{itemize}
  \item \emph{($\Gamma-$lower bound)} $ \forall x \in X, \forall (x_n)_{n\in \mathbb{N}}
    \subset X $ such that $ x_n \xrightarrow[n \to \infty]{} x $,
    \begin{displaymath}
      F(x) \leq \liminf_{n \to \infty} F_n (x_n);
    \end{displaymath}	
  \item \emph{($\Gamma-$upper bound)} $ \forall x \in X, \exists (x_n)_{n\in \mathbb{N}}
    \subset X $ such that $ x_n \xrightarrow[n \to \infty]{} x $ and
    \begin{displaymath}
      \limsup_{n \to \infty} F_n (x_n) \leq F(x).
    \end{displaymath}
  \end{itemize}
\end{definition}

Let us remark that for any convergent sequence $x_n\to x$ such that along a
subsequence $n_k\to \infty$ we have $x_{n_k}\in \mathfrak{G}_{\frac{1}{n_k}}$ for all $k\in \mathbb{N}$, it
holds that $x\in \mathfrak{G}_0$; furthermore, convergence along this subsequence yields
convergence of $\check{x}_{n_k}\to \check{x}$ in the sense of Wigner measures,
see \cref{sec:wign-meas-husimi} below. We are now in a position to state our
second main result.

\begin{thm}[$\Gamma-$convergence of the relative Wehrl free energy]
  \label[theorem]{thm: Gamma}
  \mbox{}	\\
  On $ (X, \mathcal{T}) $,
  \begin{equation}
    \widetilde{F}_{\mathrm{W},\beta,\frac{1}{n}}  \xrightarrow[n \to \infty] {\Gamma} \widetilde{F}_{\mathrm{B},\beta}\;.
  \end{equation}
\end{thm}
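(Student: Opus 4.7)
The plan is to exploit the representations \eqref{eq:1} and \eqref{eq:5} and reduce the $\Gamma$-convergence to a convergence statement for the classical relative entropy along two families of measures: the Husimi measures $\varphi_{1/n}$ of suitably chosen quantum states and the reference measures $\gamma_{\beta,1/n}$ associated to the upper symbol of $H_{1/n}$. Since, for $h\in \mathcal{S}$, the upper symbol $h_\varepsilon^{\mathrm{up}}$ coincides with $h$ up to lower-order $\varepsilon$-corrections that preserve its polynomial growth, I would first record the ``background'' facts $\gamma_{\beta,\varepsilon}\to \gamma_\beta$ narrowly with uniform tail bounds and $\log Z_{\beta,\varepsilon}^{\mathrm{up}}\to \log Z_{\beta,0}$; these take care of the dependence on $\varepsilon$ of the reference measure.

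For the $\Gamma$-liminf inequality, suppose $x_n\to x$ in $(X,\mathcal{T})$ with $\liminf_n \widetilde{F}_{\mathrm{W},\beta,1/n}(x_n)<+\infty$. Passing to a subsequence that realises the liminf and lies entirely in $\bigcup_k \mathfrak{G}_{1/n_k}$, the remark preceding the theorem yields $x\in \mathfrak{G}_0$ and convergence $\check{x}_{1/n_k}\to \mu:=\check{x}_0$ in the sense of Wigner measures, which in this finite-dimensional setting is equivalent to narrow convergence of the Husimi measures $\varphi_{1/n_k}\to \mu$. Combined with $\gamma_{\beta,1/n_k}\to \gamma_\beta$ narrowly, the joint lower semicontinuity of $S_{\mathrm{B}}(\cdot\Vert\cdot)$ under narrow convergence, which follows directly from the Donsker--Varadhan variational representation
\begin{equation*}
  S_{\mathrm{B}}(\nu\Vert\sigma)=\sup_{\phi\in C_b(\mathcal{H})}\Bigl\{\int_{\mathcal{H}}\phi\,\mathrm{d}\nu- \log \int_{\mathcal{H}} e^{\phi}\,\mathrm{d}\sigma\Bigr\},
\end{equation*}
yields $\widetilde{F}_{\mathrm{B},\beta}(x)\leq \liminf_n \widetilde{F}_{\mathrm{W},\beta,1/n}(x_n)$.

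For the $\Gamma$-limsup, the only nontrivial case is $x\in \mathfrak{G}_0$ with $\widetilde{F}_{\mathrm{B},\beta}(\mu)<+\infty$, in which case $\mu$ admits a density $f$ with respect to Lebesgue. As recovery sequence I would pick
\begin{equation*}
  \rho_{1/n}:=\int_{\mathcal{H}}\left|z_{1/n}\right\rangle\!\left\langle z_{1/n}\right|\,\mathrm{d}\mu(z),
\end{equation*}
a trace-one positive operator on $\mathcal{F}_{1/n}(\mathcal{H})$ whose Husimi measure is the Gaussian convolution $\varphi_{1/n}=\mu*G_{1/n}$ with $G_\varepsilon(z)=(\pi\varepsilon)^{-d}e^{-\lVert z\rVert^2/\varepsilon}$. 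The associated $x_n\in \mathfrak{G}_{1/n}$ converges to $x$ in the compact-open topology because the noncommutative characteristic function of $\rho_{1/n}$ differs from the Fourier transform of $\varphi_{1/n}$ only by a Gaussian factor tending to $1$ uniformly on compacts. Jensen's inequality applied to the convex function $t\mapsto t\log t$ with the probability kernel $G_\varepsilon(w-\cdot)$ gives
\begin{equation*}
  \int_{\mathcal{H}}(f*G_\varepsilon)(w)\log(f*G_\varepsilon)(w)\,\mathrm{d}w\;\leq\;\int_{\mathcal{H}}f(z)\log f(z)\,\mathrm{d}z,
\end{equation*}
while the remaining pieces of $S_{\mathrm{B}}(\varphi_{1/n}\Vert \gamma_{\beta,1/n})$, namely the energy term $\int h_{1/n}^{\mathrm{up}}\,\mathrm{d}\varphi_{1/n}$ and $\log Z_{\beta,1/n}^{\mathrm{up}}$, converge to their classical counterparts, yielding $\limsup_n S_{\mathrm{B}}(\varphi_{1/n}\Vert \gamma_{\beta,1/n})\leq S_{\mathrm{B}}(\mu\Vert \gamma_\beta)$.

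The principal technical obstacle is securing the convergence of the energy term for an arbitrary $\mu$ of finite relative entropy: it demands uniform-in-$\varepsilon$ control of $\int \lVert z\rVert^{2p_{\max}}\,\mathrm{d}\varphi_\varepsilon$ against the polynomial growth of $h_\varepsilon^{\mathrm{up}}$ provided by the class $\mathcal{S}$. I would handle this through a truncation and density argument: first establish the limsup bound for $\mu$ with compactly supported bounded density, where dominated convergence and Jensen's estimate apply directly; then approximate a general $\mu$ by truncations $\mu_R$ for which $S_{\mathrm{B}}(\mu_R\Vert \gamma_\beta)\to S_{\mathrm{B}}(\mu\Vert\gamma_\beta)$, and diagonally extract from the corresponding recovery sequences using the already proved liminf inequality as an anchor to prevent any loss in the limit. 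This is where the bulk of the technical work resides.
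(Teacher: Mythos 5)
Your proposal is correct and follows essentially the same route as the paper: for the $\Gamma$-liminf, weak convergence of the Husimi measures and of the reference measures $\gamma_{\beta,1/n}$ combined with joint lower semicontinuity of the classical relative entropy, and for the $\Gamma$-limsup, the Toeplitz trial states $\int_{\mathcal{H}}|z_{1/n}\rangle\langle z_{1/n}|\,\mathrm{d}\mu(z)$ whose Husimi density is a Gaussian convolution, handled by Jensen's inequality. The only remark is that your final truncation-and-diagonalisation step is superfluous: for this trial state the energy term is exactly $\Tr(H_{1/n}\rho_{1/n})=\int_{\mathcal{H}}h\,\mathrm{d}\mu$ for every $n$, because the lower symbol of $h^{\mathrm{Wick}}$ is $h$ itself, and this quantity is automatically finite when $S_{\mathrm{B}}(\mu\Vert\gamma_{\beta})<\infty$ by the standard entropy--exponential-moment inequality.
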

\begin{corollary}[$\Gamma$-convergence of free energies]
  \label[corollary]{cor:1}
  \mbox{}	\\
  On $ (X, \mathcal{T}) $,
  \begin{equation}
    F_{\mathrm{W},\beta,\frac{1}{n}} - \tfrac{d\log (\pi/n)}{\beta} \xrightarrow[n \to \infty] {\Gamma} F_{\mathrm{B},\beta}\;.
  \end{equation}
\end{corollary}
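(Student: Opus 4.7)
The plan is to deduce this corollary from \cref{thm: Gamma} by observing that the full and relative Wehrl/Boltzmann free energies differ only by an $x$-independent additive scalar that converges as $\varepsilon \to 0$. Indeed, from the decomposition \eqref{eq:8}, for any $x \in X$ (with the convention $(+\infty) + c = +\infty$),
\begin{equation*}
F_{\mathrm{W},\beta,\varepsilon}(x) - \tfrac{d}{\beta}\log(\pi\varepsilon) = \widetilde{F}_{\mathrm{W},\beta,\varepsilon}(x) + c_\varepsilon, \qquad c_\varepsilon := -\tfrac{1}{\beta}\log Z^{\mathrm{up}}_{\beta,\varepsilon},
\end{equation*}
with $Z^{\mathrm{up}}_{\beta,\varepsilon} := \int_{\mathcal{H}} e^{-\beta h^{\mathrm{up}}_{\varepsilon}(z)}\,\mathrm{d}z$; analogously, $F_{\mathrm{B},\beta}(x) = \widetilde{F}_{\mathrm{B},\beta}(x) + c$ on $X$, with $c := -\tfrac{1}{\beta}\log Z_{\beta,0}$.

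Next, I would invoke the elementary stability of $\Gamma$-convergence under addition of a convergent scalar sequence: if $F_n \xrightarrow{\Gamma} F$ and $c_n \to c \in \mathbb{R}$, then $F_n + c_n \xrightarrow{\Gamma} F + c$, as seen by adding $c_n$ to each of the $\liminf$/$\limsup$ bounds in \cref{def: Gamma} (the $+\infty$ case is trivial). Combined with \cref{thm: Gamma}, this reduces the corollary to the purely numerical statement $c_{1/n} \to c$ as $n \to \infty$, i.e.
\begin{equation*}
Z^{\mathrm{up}}_{\beta, 1/n} \xrightarrow[n\to\infty]{} Z_{\beta,0}\,.
\end{equation*}

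To conclude, I would verify this scalar convergence via symbol asymptotics. For a polynomial $h \in \mathcal{S}$, the upper symbol is explicit and of the form $h^{\mathrm{up}}_\varepsilon = h + \varepsilon\, r_\varepsilon$, with $r_\varepsilon$ a polynomial of degree strictly below $2p_{\mathrm{max}}$ whose coefficients are bounded uniformly in $\varepsilon \in (0,1]$ (obtained by rewriting each monomial of $h^{\mathrm{Wick}}$ in anti-normal order and extracting the anti-Wick symbol). Hence $h^{\mathrm{up}}_\varepsilon \to h$ pointwise, and a uniform lower bound $h^{\mathrm{up}}_\varepsilon(z) \geq C\lVert z \rVert^{2p_{\mathrm{max}}} - \widetilde{C}$ in the spirit of \eqref{eq:symbol lb} holds with constants $C > 0$, $\widetilde{C}$ independent of $\varepsilon$. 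Dominated convergence with dominating function $e^{\beta\widetilde{C}}e^{-\beta C \lVert z \rVert^{2p_{\mathrm{max}}}}$ then yields $Z^{\mathrm{up}}_{\beta, 1/n} \to Z_{\beta,0}$. The main (minor) obstacle lies in making this upper-symbol expansion and its uniform growth estimate precise for general $h \in \mathcal{S}$; once this is in hand, the corollary follows by formal manipulation from \cref{thm: Gamma}.
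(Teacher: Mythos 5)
Your proposal is correct and follows essentially the same route as the paper: by \eqref{eq:8} and \eqref{eq:1} the renormalized full Wehrl free energy differs from the relative one by the $x$-independent constant $F_{\mathrm{B},\beta}(\gamma_{\beta,\varepsilon})=-\tfrac{1}{\beta}\log\int_{\mathcal{H}}e^{-\beta h^{\mathrm{up}}_{\varepsilon}}\,\mathrm{d}z$, whose convergence to $-\tfrac{1}{\beta}\log Z_{\beta,0}$ is exactly what the paper invokes (end of \cref{sec:wehrl-free-energy-1}) to pass from \cref{thm: Gamma} to \cref{cor:1}. The ``minor obstacle'' you flag is already settled in the paper: the upper-symbol expansion is \cref{pro:symbols}, and the uniform lower bound plus dominated convergence for $\int_{\mathcal{H}}e^{-\beta h^{\mathrm{up}}_{\varepsilon}}\,\mathrm{d}z\to Z_{\beta,0}$ is carried out in the proof of \cref{lemma:2}.
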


\begin{remark}[Convergence of minima and minimizers]
  \mbox{}	\\
  $ \Gamma-$convergence between functionals implies convergence of the minima
  and of the corresponding minimizers (w.r.t.\ the topology $ \mathcal{T}
  $). This has important physical implications, since it follows that any
  quantum equilibrium state (solving the Wehrl Gibbs
  variational problem) converges, as $ \varepsilon \to 0 $, to the classical
  Gibbs state, solving the classical Gibbs variational problem. A second
  important property inherited from $ \Gamma-$convergence is stability of the
  minimizers under perturbations which are continuous w.r.t.\ the topology $
  \mathcal{T} $; on the other hand, this might not be so surprising, since
  such perturbations are actually quite regular.
\end{remark}
\begin{remark}[$\Gamma$--convergence for the von Neumann free energy]
  \label{rem:2}
  \mbox{}\\
  For the von Neumann free energy, the $\Gamma$--lower bound holds (see
  \cref{sec:von-neumann-free} below). However, it is not clear whether the
  $\Gamma$--upper bound shall hold as well: it is possible to construct sequences
  of quantum states that converge to a wide class of $\mu\in \mathcal{P}(\mathcal{H})$ in topology
  $\mathcal{T}$, for which the relative entropy diverges.\footnote{More precisely, for
    such sequences the von Neumann entropy converges after a
    renormalization/rescaling that diverges \emph{more slowly} than the Gibbs
    renormalization $d\log(\pi\varepsilon)$, see \cref{sec:stat-with-diff}.} Nonetheless,
  a $\Gamma$--upper bound could still be possible either by choosing a better trial
  state, or by restricting the class of states for which the functional is
  defined. However, the Heisenberg uncertainty principle is a major
  obstruction towards the construction of such trial states and may be
  ultimately responsible of the failure of the convergence.
\end{remark}

\bigskip

\subsubsection*{Acknowledgements}
\label{sec:acknowledgements}

$\phantom{,}\!\!$ {\footnotesize M.\ Correggi \& M.\ Falconi acknowledge the
  supports of PNRR Italia Domani and Next Generation EU through the ICSC
  National Research Centre for High Performance Computing, Big Data and
  Quantum Computing. M.\ Correggi, M.\ Falconi \& R.\ Gautier also
  acknowledge the MUR grant ``Dipartimento di Eccellenza 2023-2027'' of
  Dipartimento di Matematica, Politecnico di Milano, and the PRIN 2022 grant
  ``ONES -- OpeN and Effective quantum Systems'', prot.\ 2022L45WA3.}

\section{Semiclassical Analysis}
\label{sec:semicl-analys}

In this section, we review the aspects of semiclassical analysis we rely upon
the most.

\subsection{Upper and Lower Symbols}
\label{sec:wick-anti-wick}

Given an operator $A_{\varepsilon}$ on $\mathcal{H}_{\varepsilon}$, such that
the coherent states $\lvert z_{\varepsilon}\rangle\in D(A_{\varepsilon})$ for
all $z\in \mathcal{H}$, we can define its {\it lower symbol}
$a_{\varepsilon}^{\mathrm{low}}: \mathcal{H}\to \mathbb{C}$ as
\begin{equation}
  \label{eq:lower}
  a^{\mathrm{low}}_{\varepsilon}(z) : = \left\langle z_\varepsilon\left|A_\varepsilon\right|z_\varepsilon\right\rangle_{\mathcal{H}_{\varepsilon}}\; . 
\end{equation}
Whenever an operator is the Wick quantization of a symbol $a$, \emph{i.e.}\
$A_{\varepsilon}= a^{\mathrm{Wick}}$, then its lower symbol
$a^{\mathrm{low}}_{\varepsilon}$ exists, and furthermore,
$a^{\mathrm{low}}_{\varepsilon}=a$.

The {\it upper symbol} of an operator $A_\varepsilon$, on the other hand, is
defined implicitly as the function $
a^{\mathrm{up}}_{\varepsilon}:\mathcal{H}\to \mathbb{C} $, if it exists,
satisfying
\begin{equation}
  \label{eq:upper}
  A_\varepsilon = (a_{\varepsilon}^{\mathrm{up}})^{\mathrm{a-Wick}}=\int_{\mathcal{H}} a^{\mathrm{up}}_{\varepsilon}(z) \left|z_\varepsilon \right\rangle \left\langle z_\varepsilon\right| \frac{\mathrm{d} z}{(\pi \varepsilon)^d}.
\end{equation}
The relation between the upper and lower symbols is given -- at least for a
suitable class of symbols -- by the proposition below.
\begin{proposition}[Upper and lower symbols]
  \label[proposition]{pro:symbols}
  \mbox{}	\\
  Let $b \in \mathcal{P}_{p,q}$. The upper symbol
  ${b}^{\mathrm{up}}_\varepsilon$ of $B_{\varepsilon}=b^{\mathrm{Wick}}$
  exists and there exists $b_{k,\ell} \in \mathcal{P}_{k,\ell}$, for $k \leq
  p, \ell \leq q$ with $ k+\ell < p+q$, independent of $ \varepsilon $, such
  that
  \begin{equation}
    \label{eq:expansion}
    {b}^{\mathrm{up}}_\varepsilon (z) = b^{\mathrm{low}}_{\varepsilon}(z) + \sum_{\substack{k \leq p,  \ell \leq q \\ k+\ell < p+q }}  \varepsilon^{p+q-(k+\ell)} b_{k,\ell}(z)=b(z) + \sum_{\substack{k \leq p,  \ell \leq q \\ k+\ell < p+q }}  \varepsilon^{p+q-(k+\ell)} b_{k,\ell}(z)
  \end{equation}
  In particular, we have the following pointwise convergence of symbols:
  \begin{equation}
    \label{eq:pointwise}
    {b}^{\mathrm{up}}_\varepsilon (z) \xrightarrow[\varepsilon \to  0]{} b(z), \qquad	\forall z \in \mathcal{H}.
  \end{equation}
%  Furthermore, for every $0<\eta<1$, there exists $C_1(z,\eta)$ and $C_2(z,\varepsilon,\eta)$ such that
%\begin{equation} \label{eq:symbolestimate}
%{b}^{\mathrm{up}}_\varepsilon (z) = b(z) (1+C_1(z,\eta) + C_2(z,\varepsilon,\eta)),
%\end{equation}  
%with the uniform bounds $\lvert C_1(z) \rvert \leq C \eta$ and $\lvert C_2(z,\varepsilon,\eta) \rvert \leq C(\eta) \varepsilon$.
\end{proposition}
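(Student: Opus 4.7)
The plan is to reduce by linearity in $\widetilde{b}$ to the case where $\widetilde{b}$ is a pure rank-one tensor, so that $b(z) = \overline{z_{i_1}}\cdots \overline{z_{i_q}}\, z_{j_1}\cdots z_{j_p}$ with $z_j := \langle \phi_j | z\rangle_{\mathcal{H}}$ and $B_\varepsilon := b^{\mathrm{Wick}} = a^{*}_{\varepsilon,i_1}\cdots a^{*}_{\varepsilon,i_q}\, a_{\varepsilon,j_1}\cdots a_{\varepsilon,j_p}$. I then induct on the total degree $p+q$.

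The preliminary observation, obtained from the coherent-state eigenvalue relation \eqref{eq:eigenstate}, its adjoint $\langle z_\varepsilon | a^{*}_{\varepsilon,i} = \overline{z_i}\langle z_\varepsilon|$, and the resolution of identity \eqref{eq:overcomplete}, is that anti-Wick quantization of a monomial is exactly its anti-normal ordering:
\begin{equation*}
  b(z)^{\mathrm{a-Wick}} = a_{\varepsilon,j_1}\cdots a_{\varepsilon,j_p}\, a^{*}_{\varepsilon,i_1}\cdots a^{*}_{\varepsilon,i_q}\;.
\end{equation*}
Since $B_\varepsilon$ is already in normal order, the task becomes to rewrite it in anti-normal order. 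Iterating the CCR $[a_{\varepsilon,j},a^{*}_{\varepsilon,i}]=\varepsilon\delta_{ij}$ — equivalently, applying the algebraic Wick-type identity relating normal and anti-normal orderings of the same string — yields
\begin{equation*}
  B_\varepsilon \,=\, a_{\varepsilon,j_1}\cdots a_{\varepsilon,j_p}\, a^{*}_{\varepsilon,i_1}\cdots a^{*}_{\varepsilon,i_q} \,+\, \sum_{\pi\neq \emptyset}(-\varepsilon)^{|\pi|}\Bigl(\prod_{(s,r)\in\pi}\delta_{i_s j_r}\Bigr) C_\pi\;,
\end{equation*}
where $\pi$ ranges over non-empty partial matchings of the creation indices $\{i_1,\dots,i_q\}$ with the annihilation indices $\{j_1,\dots,j_p\}$, and each $C_\pi$ is the normal-ordered product of the unmatched operators, of degree $(p-|\pi|,q-|\pi|)$.

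The inductive hypothesis now applies to each $C_\pi$: as the Wick quantization of a monomial of strictly smaller total degree, it admits an upper symbol of the advertised form, namely a polynomial in $\mathcal{P}_{p-|\pi|,q-|\pi|}$ plus $\mathcal{P}_{k,\ell}$-corrections with $k+\ell$ smaller still and coefficients independent of $\varepsilon$. Substituting these representations back and using linearity of the anti-Wick map, $B_\varepsilon$ assembles into a single anti-Wick operator whose symbol has the form $b(z)+\sum \varepsilon^{\bullet} b_{k,\ell}(z)$ with $b_{k,\ell}\in \mathcal{P}_{k,\ell}$ independent of $\varepsilon$, which is precisely $b^{\mathrm{up}}_{\varepsilon}$; the identity $b^{\mathrm{low}}_{\varepsilon}=b$ is immediate from \eqref{eq:eigenstate} and its adjoint. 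The pointwise convergence \eqref{eq:pointwise} is then automatic, each correction carrying a strictly positive power of $\varepsilon$ multiplied by an $\varepsilon$-independent polynomial in $z$.

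The main obstacle is purely combinatorial: bookkeeping the partial matchings $\pi$ and checking that the unmatched residuals $C_\pi$ repackage into symbols in $\mathcal{P}_{k,\ell}$ within the advertised index range $k\leq p$, $\ell\leq q$, $k+\ell<p+q$, with the correct $\varepsilon$-dependence. No analytic subtlety arises, since every manipulation occurs on the algebraic Fock space $\mathcal{F}_{\varepsilon}^{\mathrm{fin}}$ and reduces to a finite algebraic identity.
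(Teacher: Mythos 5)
Your proof is correct and follows essentially the same route as the paper's: reduce by linearity to monomials, convert the normal-ordered Wick operator into its anti-normal (anti-Wick) form via the CCR, and induct on the total degree $p+q$. The only cosmetic difference is that the paper invokes the composition formula of Ammari--Nier (Prop.\ 2.7), writing the reordering corrections as iterated Poisson brackets $\varepsilon^{k}\{\overline{z}^{i},z^{j}\}^{k}/k!$, whereas you expand the same identity explicitly as a sum over partial matchings of creation and annihilation indices --- these are the same correction terms.
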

\begin{proof}
  Since $b\in \mathcal{P}_{p,q}$ and $\dim \mathcal{H} =d <\infty$, $b$ can
  be written as
$$
b(z) = \sum_{\left\lvert i\right\rvert = q, \left\lvert j\right\rvert = p }
\beta_{i,j} \overline{z}^i z^j,
$$
where $\beta_{i,j} \in \mathbb{C}$ and we used the multi-index notations: $i =
(i_1,...,i_d) \in \mathbb{N}^d$, $\left\lvert i\right\rvert := i_1 + ... +
i_d$ and $z^i := z_1^{i_1} ... z_d^{i_d}$. By linearity of the Wick
quantization \eqref{eq:Wick} and the upper symbol definition
\eqref{eq:upper}, it is sufficient to prove the results for the monomial
$c(z) = \overline{z}^i z^j = \overline{z}_1^{i_1} \cdots
\overline{z}_d^{i_d} {z}_1^{j_1} \cdots {z}_d^{j_d}$, where $ z_\ell =
\left\langle e_\ell|z \right\rangle_{\mathcal{H}} $, for some orthonormal
basis $ e_1, \ldots, e_d $. However, by \cite[Prop 2.7]{AmmariNier2008},
\begin{align*}
  c^{\mathrm{Wick}} &= \left(a_{\varepsilon}^*(e_1) \right)^{i_1} \cdots \left(a_{\varepsilon}^*(e_d) \right)^{i_d} a^{j_1}_\varepsilon(e_1) \cdots a^{j_d}_{\varepsilon}(e_d) =   (\overline{z}^i)^{\mathrm{Wick}} (z^j)^{\mathrm{Wick}} \\
  &=   (z^j)^{\mathrm{Wick}} (\overline{z}^i)^{\mathrm{Wick}} + \sum_{k=1}^{\min (p,q)} \frac{\varepsilon^k}{k!} \left(   \{ \overline{z}^i , z^j \}^{k}   \right)^{\mathrm{Wick}},
\end{align*}
where $\{ \cdot, \cdot \}$ is the Poisson bracket. Now, it suffices to note
that the upper symbol of $ (z^j)^{\mathrm{Wick}}
(\overline{z}^i)^{\mathrm{Wick}}$ is $z^j \overline{z}^i = c (z)$. By
induction on $p+q$, we get that the upper symbol is well defined and of the
wanted form.

\end{proof}

\subsection{Wigner Measures}
\label{sec:wign-meas-husimi}

Let us now introduce the key concept for our subsequent analysis:
\emph{Wigner measures}. Given a family of quantum states $ \left(
  \rho_{\varepsilon} \right)_{\varepsilon \in (0,1)} \subset
\mathfrak{S}_+^1(\mathcal{H}_{\varepsilon})$, a (Borel Radon) measure $\mu\in
\mathcal{M}(\mathcal{H})$ on the phase space $\mathcal{H}$ is a Wigner
measure for such family if and only if there exists a sequence
$\varepsilon_n\to 0$ such that for any symbol $b\in C_0(\mathcal{H})$,
\begin{equation*}
  \lim_{n\to \infty} \Tr_{\mathcal{H}_{\varepsilon_n}}\bigl(\rho_{\varepsilon_n} \, b^{\mathrm{a-Wick}}\bigr)= \int_{\mathcal{H}}^{}b(z)  \mathrm{d}\mu(z)\;.
\end{equation*}
We also say that $\rho_{\varepsilon_n}\to \mu$ \emph{in the sense of Wigner measures}. Let us
recall that convergence in the sense of Wigner measures is implied by
convergence \emph{in the sense of Fourier transforms}: if for any $z\in \mathcal{H}$,
\begin{equation*}
  \lim_{n\to \infty} \Tr_{\mathcal{H}_{\varepsilon_n}}\bigl(\rho_{\varepsilon_n} \, W_{\varepsilon_n}(\zeta)\bigr)= \int_{\mathcal{H}}^{}e^{2i\Re \langle \zeta  \vert z  \rangle_{\mathcal{H}}}  \mathrm{d}\mu(z)\;,
\end{equation*}
then $\rho_{\varepsilon_n}\to \mu$ in the sense of Wigner measures (see, \emph{e.g.},
\cite{AmmariNier2008}).

The Wigner measures of a given family of quantum states characterize such
states in the semiclassical limit, and this is typically the first step in
rigorously establishing the \emph{Bohr correspondence principle} for a
physical system. Let us remark that in general the Wigner measures can have
any mass between zero and one, and the loss happens whenever the mass can
escape at infinity in either position or momentum (or both) degrees of
freedom.

\begin{lemma}
  \label[lemma]{lemma:1}
  Let $ \left( \rho_{\varepsilon} \right)_{\varepsilon \in (0,1)} \subset \mathfrak{S}_+^1(\mathcal{H}_{\varepsilon})$ be a
  family of quantum states, and $\left(\varphi_{\varepsilon}\right)_{\varepsilon\in (0,1)}$ the
  corresponding Husimi probability measures ($\mathrm{d}\varphi_{\varepsilon}(z)=
  (\pi\varepsilon)^{-d}f_{\varepsilon}(z)\mathrm{d}z$).

  Then, the Lebesgue-a.e.\ pointwise convergence of
  $(\pi\varepsilon_n)^{-d}f_{\varepsilon_n}$ to $f_0$ implies that
  $\rho_{\varepsilon_n}$ converges to $\mathrm{d}\varphi_0(z)
  =f_0(z)\mathrm{d}z$ in the sense of Wigner measures.
\end{lemma}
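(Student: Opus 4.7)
The plan is to reduce the claimed Wigner convergence to an integral convergence on the classical phase space and then invoke an $L^1$-type argument. Unfolding the definition of the anti-Wick quantization \eqref{eq:upper}, for any $b\in C_0(\mathcal{H})$ one has
\begin{equation*}
\Tr_{\mathcal{H}_{\varepsilon_n}}\bigl(\rho_{\varepsilon_n}\, b^{\mathrm{a-Wick}}\bigr) = \int_{\mathcal{H}} b(z)\,\langle z_{\varepsilon_n}\vert \rho_{\varepsilon_n}\vert z_{\varepsilon_n}\rangle_{\mathcal{H}_{\varepsilon_n}}\, \tfrac{\mathrm{d}z}{(\pi\varepsilon_n)^d} = \int_{\mathcal{H}} b(z)\,g_n(z)\,\mathrm{d}z,
\end{equation*}
where $g_n:=(\pi\varepsilon_n)^{-d}f_{\varepsilon_n}$ is the density of the Husimi probability measure. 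By the overcompleteness relation \eqref{eq:overcomplete}, $g_n\geq 0$ and $\int_{\mathcal{H}} g_n\,\mathrm{d}z=1$. The conclusion is therefore equivalent to the statement that $\int_{\mathcal{H}} b\,g_n\,\mathrm{d}z \to \int_{\mathcal{H}} b\,f_0\,\mathrm{d}z$ for every $b\in C_0(\mathcal{H})$.

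The core analytic input is Scheffé's lemma applied to $g_n$: the hypothesis $g_n\to f_0$ pointwise a.e.\ combined with Fatou's lemma gives $f_0\in L^1(\mathcal{H})$ with $\int f_0\,\mathrm{d}z\leq 1$; provided $\int f_0\,\mathrm{d}z=1$, Scheffé promotes the a.e.\ convergence to $L^1$-convergence $\|g_n-f_0\|_{L^1(\mathcal{H})}\to 0$. Concretely, $(g_n-f_0)^{-}\leq f_0\in L^1$ with $(g_n-f_0)^{-}\to 0$ a.e., so dominated convergence yields $\int (g_n-f_0)^{-}\,\mathrm{d}z\to 0$; combined with the identity $\int(g_n-f_0)\,\mathrm{d}z=0$ this forces $\int (g_n-f_0)^{+}\,\mathrm{d}z\to 0$ as well. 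Since any $b\in C_0(\mathcal{H})$ is bounded, the estimate
\begin{equation*}
\left\lvert\int_{\mathcal{H}} b\,(g_n-f_0)\,\mathrm{d}z\right\rvert \leq \|b\|_\infty\, \|g_n-f_0\|_{L^1(\mathcal{H})} \xrightarrow[n\to\infty]{} 0
\end{equation*}
completes the proof.

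The main conceptual obstacle is the mass-preservation condition $\int f_0\,\mathrm{d}z=1$, which is \emph{not} automatic from Fatou alone: pointwise a.e.\ convergence of probability densities can in principle allow mass to escape to infinity, in which case the actual Wigner limit would carry a singular contribution at infinity absent from $f_0\,\mathrm{d}z$ (this is witnessed, for instance, by a coherent state $\lvert w_\varepsilon\rangle$ whose Husimi density $(\pi\varepsilon)^{-d}e^{-\lvert z-w\rvert^2/\varepsilon}$ converges to $0$ Lebesgue-a.e., although the associated Wigner measure is $\delta_w$). In the intended applications of this lemma, the required normalization of $f_0$ will be guaranteed by tightness/uniform integrability estimates on the Husimi densities, typically stemming from the moment bounds encoded in assumption \eqref{A}.
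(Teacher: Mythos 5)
Your proof is correct and follows essentially the same route as the paper's: rewrite $\Tr(\rho_{\varepsilon_n}b^{\mathrm{a-Wick}})$ as an integral against the Husimi density and invoke the standard fact (Scheffé's lemma) that a.e.\ convergence of probability densities upgrades to $L^1$-convergence, hence weak convergence of the measures. Your explicit caveat that the normalization $\int_{\mathcal{H}} f_0\,\mathrm{d}z=1$ must be assumed (and your coherent-state example showing it can fail) is a legitimate observation about a hypothesis the paper leaves implicit -- its one-line proof speaks of ``probability densities'' and writes the limit as $\mathrm{d}\varphi_0$ -- but it does not change the argument.
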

\begin{proof}
  By definition of anti-Wick quantization, we can rewrite
  \begin{equation*}
    \Tr_{\mathcal{H}_{\varepsilon}}\bigl(\rho_{\varepsilon} \, b^{\mathrm{a-Wick}}\bigr)= \int_{\mathcal{H}}^{}b(z) \mathrm{d}\varphi_{\varepsilon}(z)= \int_{\mathcal{H}}^{}b(z) f_{\varepsilon}(z)\frac{\mathrm{d}z}{(\pi\varepsilon)^d}\;.
  \end{equation*}
  Since almost everywhere pointwise convergence of probability densities
  yields the weak convergence of measures, and $C_0(\mathcal{H})\subset
  C_{\mathrm{b}}(\mathcal{H})$, the result holds.
\end{proof}
Observe that the converse is not true: while pointwise convergence of
densities implies weak convergence of measures, the converse does not hold in
general.

Given a family of quantum states, the set of its Wigner measures is never
empty; however, all Wigner measures could have mass strictly less than
one. There is, interestingly, a very natural sufficient condition to ensure
that all Wigner measures of a given family are probabilities. This condition
is stated in the proposition below, in this form it is due to Ammari and Nier
\cite[Thm. 6.2]{AmmariNier2008}, and it holds for arbitrary Fock spaces
(\emph{i.e.}, also when $\mathcal{H}$ is infinite dimensional); to the best
of our knowledge, the result for a finite dimensional $\mathcal{H}$ first
appeared in \cite{LionsPaul}.

\begin{proposition}[No loss of mass]
  \label[proposition]{pro:Wigner}
  \mbox{}	\\
  Let $ \left( \rho_{\varepsilon} \right)_{\varepsilon \in (0,1)} \subset
  \mathfrak{S}_+^1(\mathcal{F}_\varepsilon(\mathcal{H})) $ be a family of normalized
  quantum states in the Fock space. Assume that, for some $ \delta > 0 $,
  \begin{equation}
    \Tr_{\mathcal{F}_\varepsilon(\mathcal{H})} \left( N_{\varepsilon}^{\delta} \rho_{\varepsilon} \right) \leq C < +\infty,
  \end{equation}
  uniformly in $ \varepsilon \in (0,1) $. Then, there exist a subsequence $ \left( \varepsilon_n
  \right)_{n \in \mathbb{N}} $, $ \varepsilon_n \to 0 $ as $ n \to + \infty $, and a probability measure $
  \mu \in \mathscr{P}(\mathcal{H}) $ that is a Wigner measure for the family, \emph{i.e.}
  $\forall b \in C_0(\mathcal{H}) $,
  \begin{equation}
    \lim_{n \to +\infty} \Tr_{\mathcal{F}_\varepsilon(\mathcal{H})} \left( b^{\mathrm{a-Wick}}_{\varepsilon_n} \rho_{\varepsilon_n} \right) = \int_{\mathcal{H}} b(z) \mathrm{d} \mu(z)\;.
  \end{equation}
  In addition,
  \begin{displaymath}
    \int_{\mathcal{H}} \left| z \right|^{2\delta} \mathrm{d} \mu(z) \leq C\;.
  \end{displaymath}
  Furthermore, any Wigner measure of the family $ \left( \rho_{\varepsilon} \right)_{\varepsilon \in
    (0,1)} $ is a probability, with bounded moments up to degree $2\delta$.
\end{proposition}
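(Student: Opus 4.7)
The plan is to construct the probability Wigner measure $\mu$ via Bochner's theorem applied to a subsequential limit of the noncommutative characteristic functions $G_\varepsilon(\zeta):=\Tr_{\mathcal{F}_\varepsilon(\mathcal{H})}(\rho_\varepsilon W_\varepsilon(\zeta))$, using the moment hypothesis on $N_\varepsilon$ both to extract that subsequence and to prevent any loss of mass at infinity.

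First I would establish uniform (in $\varepsilon$) equicontinuity of the family $\{G_\varepsilon\}$ on bounded subsets of $\mathcal{H}$. Writing $W_\varepsilon(\eta)=e^{i\Phi_\varepsilon(\eta)}$ with $\Phi_\varepsilon(\eta)=\tfrac{1}{\sqrt{2}}(a_\varepsilon(\eta)+a_\varepsilon^*(\eta))$, the Weyl CCR give $W_\varepsilon(\zeta+\eta)=e^{i\varepsilon\sigma(\zeta,\eta)/2}W_\varepsilon(\zeta)W_\varepsilon(\eta)$ for a real bilinear form $\sigma$; combining Cauchy--Schwarz, the spectral bound $|e^{i\Phi}-\mathds{1}|\leq|\Phi|$, and the operator estimates \eqref{eq:10} yields
\begin{equation*}
  |G_\varepsilon(\zeta+\eta)-G_\varepsilon(\zeta)|^{2}\leq C\|\eta\|^{2}\, \Tr_{\mathcal{F}_\varepsilon(\mathcal{H})}\bigl(\rho_\varepsilon(N_\varepsilon+\varepsilon)\bigr)+o_\varepsilon(1).
\end{equation*}
For $\delta\geq 1$ the right-hand side is uniformly controlled by the assumption; for $\delta<1$, interpolating between the trivial bound $|G_\varepsilon|\leq 1$ and the previous estimate (or a H\"older inequality pairing $N_\varepsilon$ against $N_\varepsilon^{\delta}$) produces a uniform modulus of exponent $\min(\delta,1)$. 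Together with $|G_\varepsilon|\leq 1$, Arzel\`a--Ascoli on compact sets and a diagonal extraction along a countable dense subset of $\mathcal{H}$ then deliver a subsequence $\varepsilon_n\to 0$ and a continuous function $G_0:\mathcal{H}\to\mathbb{C}$ with $G_{\varepsilon_n}\to G_0$ locally uniformly and $G_0(0)=1$.

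Next I would identify $G_0$ as the characteristic function of a probability measure. Since $G_{\varepsilon_n}\in\mathfrak{G}_{\varepsilon_n}$ for each $n$, the $\varepsilon_n$-quantum positive-definiteness inequality recalled in the definition of $\mathfrak{G}_{\varepsilon_n}$ holds, and since $e^{-i\varepsilon_n\Im\langle z_k|z_j\rangle_{\mathcal{H}}}\to 1$, passage to the limit gives $G_0\in\mathfrak{G}_0$. Classical Bochner then furnishes $\mu\in\mathscr{P}(\mathcal{H})$ with $\widehat{\mu}=G_0$; in particular $\rho_{\varepsilon_n}\to\mu$ in the sense of Fourier transforms, and hence, as recalled in \cref{sec:wign-meas-husimi}, also in the sense of Wigner measures against all $b\in C_0(\mathcal{H})$.

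The moment bound on $\mu$ I would obtain via truncation and weak lower semicontinuity. For a $C_0$ cut-off $\chi_R$ with $0\leq\chi_R(z)\leq|z|^{2\delta}$ and $\chi_R(z)=|z|^{2\delta}$ on $\{|z|\leq R\}$, \cref{pro:symbols} and a monomial-by-monomial Wick vs.\ anti-Wick comparison produce $\chi_R^{\mathrm{a-Wick}}\leq C_\delta(N_\varepsilon+\varepsilon)^{\delta}+O_R(\varepsilon)$; testing the already established Wigner convergence against $\chi_R$, letting first $\varepsilon_n\to 0$ and then $R\to\infty$ by monotone convergence yields $\int_{\mathcal{H}}|z|^{2\delta}\mathrm{d}\mu\leq C$. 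The main obstacle I expect is the equicontinuity step in the regime $\delta<1$, where one must interpolate between $|G_\varepsilon|\leq 1$ and the $N_\varepsilon$-controlled estimate in a way that retains uniformity in $\varepsilon$; a secondary delicate point is the comparison of anti-Wick quantizations of non-polynomial $|z|^{2\delta}$ symbols with fractional powers of $N_\varepsilon$, which falls outside the polynomial framework of \cref{pro:symbols} and requires a direct spectral argument on each $n$-particle sector.
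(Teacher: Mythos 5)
The paper does not prove \cref{pro:Wigner} at all: it is imported as stated from \cite{AmmariNier2008} (Thm.~6.2), with \cite{LionsPaul} cited for the finite-dimensional case, so there is no internal argument to compare against. What you have written is, in substance, a reconstruction of the Ammari--Nier proof: extraction via the noncommutative characteristic functions $G_\varepsilon(\zeta)=\Tr(\rho_\varepsilon W_\varepsilon(\zeta))$, equicontinuity supplied by the number-operator bound, identification of the limit through (classical) Bochner after passing the $\varepsilon$-positive-definiteness to the limit, and lower semicontinuity for the moments. The strategy is correct, and you have correctly located the role of the hypothesis $\Tr(N_\varepsilon^\delta\rho_\varepsilon)\leq C$: it yields equicontinuity at $\zeta=0$, hence $G_0(0)=1$ and no escape of mass, which is exactly the L\'evy-continuity mechanism.

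Two of your steps need a different tool than the one you reach for, though both repairs are standard. First, for $\delta<1$ the ``H\"older pairing of $N_\varepsilon$ against $N_\varepsilon^{\delta}$'' cannot work as stated, because $\Tr(\rho_\varepsilon N_\varepsilon)$ may be infinite; the correct chain is $\lvert e^{ix}-1\rvert\leq 2^{1-s}\lvert x\rvert^{s}$ with $s=\min(\delta,1)$, followed by operator monotonicity of $t\mapsto t^{s}$ applied to $\Phi_\varepsilon(\eta)^2\leq C\lVert\eta\rVert^2(N_\varepsilon+\varepsilon)$, giving $\lvert\Phi_\varepsilon(\eta)\rvert^{2s}\leq C^{s}\lVert\eta\rVert^{2s}(N_\varepsilon+\varepsilon)^{s}$ and hence a uniform modulus $\lVert\eta\rVert^{s}\bigl(\Tr(\rho_\varepsilon(N_\varepsilon+\varepsilon)^{s})\bigr)^{1/2}$, which the hypothesis controls. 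Second, for the moment bound, \cref{pro:symbols} is of no use for the non-polynomial symbol $\lvert z\rvert^{2\delta}$, and no sector-by-sector spectral analysis is needed either: since $b\mapsto b^{\mathrm{a-Wick}}$ is a positive \emph{unital} map (by \eqref{eq:overcomplete}) and $t\mapsto t^{\delta}$ is operator concave for $\delta\leq1$, the operator Jensen (Berezin--Lieb) inequality gives directly
\begin{equation*}
  \bigl(\lvert z\rvert^{2\delta}\bigr)^{\mathrm{a-Wick}}\;\leq\;\Bigl(\bigl(\lvert z\rvert^{2}\bigr)^{\mathrm{a-Wick}}\Bigr)^{\delta}\;=\;(N_\varepsilon+d\varepsilon)^{\delta}\;,
\end{equation*}
so that testing the truncations $0\leq\chi_R\leq\lvert z\rvert^{2\delta}$ (using positivity of the anti-Wick map) and letting $\varepsilon_n\to0$ and then $R\to\infty$ yields $\int\lvert z\rvert^{2\delta}\,\mathrm{d}\mu\leq C$ with no $O_R(\varepsilon)$ remainder; the case $\delta>1$ is reduced to this one or handled as in the cited reference. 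Finally, the last assertion of the statement (every Wigner measure of the family is a probability with the same moment bound) should be noted to follow by running your argument along an arbitrary subsequence realizing a given Wigner measure and identifying the Bochner limit with that measure.
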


\section{Gibbs Entropy Convergence}
\label{sec:gibbs-entr-conv}

This section is devoted to the proof of \cref{thm:convergence}.

\subsection{Convergence of Partition Functions}
\label{sec:conv-part-funct}

The starting point is to prove convergence of partition functions. This is
done exploiting the convexity of the exponential function, and the relation
between an operator and its upper and lower symbols.

\begin{lemma}
  \label[lemma]{lemma:2}
  Let $Z_{\beta,\varepsilon}= \mathrm{Tr}_{\mathcal{F}_\varepsilon(\mathcal{H})} (e^{-\beta h^{\mathrm{Wick}}})$ be the quantum
  partition function at temperature $\beta^{-1}$, $Z_{\beta,0}= \int_{\mathcal{H}}^{}e^{-\beta h(z)}
  \mathrm{d}z$ the corresponding classical partition function. Then,
  \begin{equation*}
    \lim_{\varepsilon\to 0}(\pi\varepsilon)^d Z_{\beta,\varepsilon}= Z_{\beta,0}\;.  
  \end{equation*}
\end{lemma}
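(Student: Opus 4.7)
The plan is to sandwich $(\pi\varepsilon)^d Z_{\beta,\varepsilon}$ between two integrals that both converge to $Z_{\beta,0}$, using the Berezin--Lieb inequalities applied to the convex function $x\mapsto e^{-\beta x}$. Using the resolution of identity \eqref{eq:overcomplete} and cyclicity of the trace, I can write
\begin{equation*}
(\pi\varepsilon)^d Z_{\beta,\varepsilon}= \int_{\mathcal{H}} \langle z_{\varepsilon}\vert e^{-\beta H_{\varepsilon}}\vert z_{\varepsilon}\rangle\, \mathrm{d}z.
\end{equation*}
Since $H_{\varepsilon}=h^{\mathrm{Wick}}$ has lower symbol $h$ (as noted just after \eqref{eq:lower}), the spectral theorem combined with Jensen's inequality applied to the probability measure $\mathrm{d}\langle z_{\varepsilon}\vert E_{\lambda}(H_{\varepsilon})\vert z_{\varepsilon}\rangle$ yields the pointwise bound
\begin{equation*}
\langle z_{\varepsilon}\vert e^{-\beta H_{\varepsilon}}\vert z_{\varepsilon}\rangle \geq e^{-\beta \langle z_{\varepsilon}\vert H_{\varepsilon}\vert z_{\varepsilon}\rangle} = e^{-\beta h(z)}.
\end{equation*}
Integrating this produces the lower Berezin--Lieb inequality $(\pi\varepsilon)^d Z_{\beta,\varepsilon}\geq Z_{\beta,0}$.

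For the matching upper bound, I would write $H_{\varepsilon}$ using its upper symbol as in \eqref{eq:upper}, i.e.\ $H_{\varepsilon}=\int_{\mathcal{H}} h^{\mathrm{up}}_{\varepsilon}(z)\lvert z_{\varepsilon}\rangle\langle z_{\varepsilon}\rvert\,\frac{\mathrm{d}z}{(\pi\varepsilon)^d}$ (well-defined by \cref{pro:symbols}), and apply the dual Berezin--Lieb inequality, whose standard proof goes through Jensen's inequality on the probability measure $\frac{1}{(\pi\varepsilon)^d}\langle z_{\varepsilon}\vert\cdot\vert z_{\varepsilon}\rangle\,\mathrm{d}z$ together with the fact that the exponential of the integral is bounded above by the integral of exponentials. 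This gives
\begin{equation*}
(\pi\varepsilon)^d Z_{\beta,\varepsilon} \leq \int_{\mathcal{H}} e^{-\beta h^{\mathrm{up}}_{\varepsilon}(z)}\,\mathrm{d}z.
\end{equation*}

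To conclude, I would pass to the limit $\varepsilon\to 0$ in this upper bound via dominated convergence. Pointwise convergence $h^{\mathrm{up}}_{\varepsilon}(z)\to h(z)$ is exactly \eqref{eq:pointwise}. The integrable majorant comes from the expansion \eqref{eq:expansion}: each correction $b_{k,\ell}$ is a polynomial of degree strictly less than the leading degree $2p_{\mathrm{max}}$ of $h$, weighted by a positive power of $\varepsilon$; combined with the coercive lower bound \eqref{eq:symbol lb} on $h$, standard absorption by Young's inequality yields constants $C'>0$ and $\widetilde{C}'\geq 0$ independent of $\varepsilon\in(0,\varepsilon_0]$ for some $\varepsilon_0>0$ with
\begin{equation*}
h^{\mathrm{up}}_{\varepsilon}(z)\geq C'\lVert z\rVert^{2p_{\mathrm{max}}}-\widetilde{C}',
\end{equation*}
so that $e^{-\beta h^{\mathrm{up}}_{\varepsilon}(z)}\leq e^{\beta \widetilde{C}'}e^{-\beta C'\lVert z\rVert^{2p_{\mathrm{max}}}}\in L^1(\mathcal{H})$. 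Dominated convergence then gives $\int_{\mathcal{H}} e^{-\beta h^{\mathrm{up}}_{\varepsilon}(z)}\,\mathrm{d}z\to Z_{\beta,0}$, and the sandwich closes.

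The main obstacle is the last step, namely verifying the uniform-in-$\varepsilon$ coercivity of $h^{\mathrm{up}}_{\varepsilon}$; this is routine but requires a careful bookkeeping of the polynomial degrees appearing in \eqref{eq:expansion}, to make sure that \emph{every} lower-order correction (from both $h_0$ and the lower-degree perturbation $V$) can indeed be absorbed into the leading coercive term. Everything else reduces to Jensen's inequality and dominated convergence.
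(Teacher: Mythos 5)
Your proposal is correct and follows essentially the same route as the paper: the Berezin--Lieb sandwich $Z_{\beta,0}\leq(\pi\varepsilon)^d Z_{\beta,\varepsilon}\leq\int_{\mathcal{H}}e^{-\beta h^{\mathrm{up}}_{\varepsilon}(z)}\,\mathrm{d}z$ followed by dominated convergence on the upper bound, using \cref{pro:symbols} for pointwise convergence and a uniform-in-$\varepsilon$ lower bound on $h^{\mathrm{up}}_{\varepsilon}$ for the majorant. The only cosmetic difference is how that lower bound is obtained --- you absorb the lower-degree corrections into the coercive leading term via Young's inequality, while the paper splits $\mathcal{H}$ into a compact ball and its complement and bounds the ratio $h_{k,\ell}/h$ outside the ball; both yield $h^{\mathrm{up}}_{\varepsilon}\geq \tfrac{1}{2}h-\mathrm{const}$ for small $\varepsilon$ and close the argument identically.
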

\begin{proof}

  Consider the upper symbol ${h}^{\mathrm{up}}_\varepsilon$ of $H_{\varepsilon}=
  h^{\mathrm{Wick}}$. By convexity of $x \mapsto e^{-\beta x}$, we have (see
  \cite[Chpt. 5.2.4]{Berezin1991}):
  \begin{equation*}
    \int_{\mathcal{H}} e^{-\beta h(z)} \mathrm{d} z \leq (\pi \varepsilon)^d \text{Tr}_{\mathcal{F}_\varepsilon(\mathcal{H})}\,e^{-\beta H_\varepsilon} \leq \int_{\mathcal{H}} e^{-\beta {h}^{\mathrm{up}}_\varepsilon(z)} \mathrm{d} z\;.
  \end{equation*}
In order to apply the dominated convergence theorem on the right hand side,
  we need to estimate from below the upper symbol: thanks to \eqref{eq:expansion}, we have
  \begin{equation*} {h}^{\mathrm{up}}_\varepsilon(z) =
  h(z)+ \sum_{\substack{k \leq p, \ell
        \leq q \\ k+\ell < p+q }} \varepsilon^{p+q-(k+\ell)} h_{k,\ell}(z)\;,
  \end{equation*}
	Since $h(z) \geq C \left\| z \right\|^{p+q}-\widetilde{C}$, the set $\{ h(z) \leq 1 \}$ is compact. There exists $R>0$ such that $\{ h(z) \leq 1 \} \subset \overline{B}(0,R)$. It is clear that ${h}^{\mathrm{up}}_\varepsilon(z) \mathds{1}_{ \overline{B}(0,R) }(z)$ is bounded uniformly w.r.t.\ $\varepsilon$. Since $h_{k,l}$ has degree strictly less than $p+q$, 
	$$
	\left\lvert\mathds{1}_{ \overline{B}(0,R)^c }(z) \frac{h_{k,l}(z)}{h(z)} \right\rvert \leq C \mathds{1}_{ \overline{B}(0,R)^c }(z) \frac{\langle\left\| z \right\| \rangle^{p+q-1}}{\left\| z \right\|^{p+q}} \leq C(R)
	$$
  is also bounded. This shows the bound on the remainder 
  $$\left\lvert \mathds{1}_{ \overline{B}(0,R)^c }(z) \sum_{\substack{k \leq p, \ell
        \leq q \\ k+\ell < p+q }} \varepsilon^{p+q-(k+\ell)} h_{k,\ell}(z)\ \right\rvert \leq \varepsilon C(R)  h(z).$$
        Hence, for $\varepsilon$ small enough (independent of $z$),
\begin{align*}
{h}^{\mathrm{up}}_\varepsilon(z) &\geq \mathds{1}_{ \overline{B}(0,R)^c }(z)  \frac{h(z)}{2} - C \\
& \geq \frac{h(z)}{2} - \widetilde{C}.
\end{align*}

%By splitting each monomial on the sectors $\mathds{1}_{ \{\lvert z_i \rvert \leq \eta, \lvert z_j \rvert > \eta, i \in I\, j \notin I \} }$, for $I$ a subset of $\{1,...,d\}$ and $\eta >0$, and factoring out all the high terms in $z_j, \overline{z_j}, j\notin I$, we can estimate the upper symbol as
%	\begin{equation*} 
%	{h}^{\mathrm{up}}_\varepsilon (z) = h(z) (1+ C_1(z) + C_2(z,\varepsilon )),
%	\end{equation*}
%where $\lvert C_1(z) \rvert \leq 1/2$ for $\eta$ small enough and $\lvert C_2(z,\varepsilon ) \rvert \leq C(\eta) \varepsilon.$ Now, by \eqref{eq:symbol lb}, $h$ is non-negative and for $\varepsilon$ small enough,
%\begin{equation*} 
%	{h}^{\mathrm{up}}_\varepsilon (z) \geq \frac{h(z)}{4} \geq \frac{\alpha h_0(z)}{4},
%	\end{equation*}
%which gives the appropriate domination.
\end{proof}

\subsection{Convergence of the Quantum Gibbs State in the Sense of Wigner
  Measures}
\label{sec:pointw-conv-husimi}

Let us now prove that the Husimi function of the quantum Gibbs state
converges pointwise to the classical Gibbs density, thus implying that the
classical Gibbs measure is the unique Wigner measure of the quantum Gibbs
state.

\begin{lemma}
  \label[lemma]{lemma:3}
  For any $z\in \mathcal{H}$, let $g_{\beta,\varepsilon}(z)= \left\langle z_\varepsilon\left|e^{-\beta
        H_\varepsilon}\right|z_\varepsilon\right\rangle$. Then,
  \begin{equation*}
    \lim_{\varepsilon\to 0} g_{\beta,\varepsilon}(z)= e^{-\beta h(z)}\;.
  \end{equation*}
\end{lemma}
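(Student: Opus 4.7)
The plan is to combine a Weyl-displacement reduction with a spectral-concentration argument. Since $|z_\varepsilon\rangle = \widetilde{W}_\varepsilon(z)|\Omega\rangle$ with $\widetilde{W}_\varepsilon(z) := W_\varepsilon(\sqrt{2}z/(i\varepsilon))$, I would rewrite
$$g_{\beta,\varepsilon}(z) = \langle \Omega | e^{-\beta \widetilde{H}_\varepsilon(z)} | \Omega \rangle, \qquad \widetilde{H}_\varepsilon(z) := \widetilde{W}_\varepsilon^*(z) H_\varepsilon \widetilde{W}_\varepsilon(z).$$
The Weyl displacement identity $\widetilde{W}_\varepsilon^*(z) a_\varepsilon(f) \widetilde{W}_\varepsilon(z) = a_\varepsilon(f) + \langle f|z\rangle_{\mathcal{H}} \mathds{1}$, which is a direct consequence of the Baker-Campbell-Hausdorff formula since $[a_\varepsilon(f), a_\varepsilon^*(z)]$ is scalar-valued by \eqref{CCR}, then shows that $\widetilde{H}_\varepsilon(z)$ is the Wick quantization of the shifted polynomial symbol $w \mapsto h(w+z)$. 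I would then split $\widetilde{H}_\varepsilon(z) = h(z)\mathds{1} + R_\varepsilon(z)$, where $R_\varepsilon(z) := (h(\cdot + z) - h(z))^{\mathrm{Wick}}$ is a Wick polynomial with vanishing constant term.

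The second step is to compute the first two moments of $\widetilde{H}_\varepsilon(z)$ in the vacuum state. Every Wick monomial containing at least one annihilation operator annihilates $|\Omega\rangle$, so only the pure-creation contributions to $R_\varepsilon(z)$ survive; and since $a_\varepsilon^*(f)|\Omega\rangle = \sqrt{\varepsilon}\,|f\rangle$, a pure-creation monomial of degree $m$ applied to $|\Omega\rangle$ produces a vector of norm $O(\varepsilon^{m/2})$. A term-by-term count on the finitely many Wick monomials appearing in $h(\cdot + z) - h(z)$ then yields the two identities
$$\langle \Omega | \widetilde{H}_\varepsilon(z) | \Omega \rangle = h(z), \qquad \langle \Omega | \widetilde{H}_\varepsilon(z)^2 | \Omega \rangle - h(z)^2 = \|R_\varepsilon(z)|\Omega\rangle\|^2 = O(\varepsilon).$$

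The last step is a spectral concentration argument. Let $\mu_{\varepsilon,z}$ denote the spectral probability measure of the self-adjoint operator $\widetilde{H}_\varepsilon(z)$ in the state $|\Omega\rangle$, so that $g_{\beta,\varepsilon}(z) = \int_{\mathbb{R}} e^{-\beta\lambda}\,\mathrm{d}\mu_{\varepsilon,z}(\lambda)$. The Kato-Rellich lower bound $H_\varepsilon \geq -\widetilde{C}$ recalled after the definition of $\mathcal{S}$, which is uniform in $\varepsilon$, transfers by unitary invariance to $\widetilde{H}_\varepsilon(z) \geq -\widetilde{C}$, so $\mathrm{supp}(\mu_{\varepsilon,z}) \subset [-\widetilde{C},\infty)$. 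The moment bounds of the previous step say that $\mu_{\varepsilon,z}$ has mean $h(z)$ and variance $O(\varepsilon)$, so by Chebyshev's inequality $\mu_{\varepsilon,z}$ converges weakly to $\delta_{h(z)}$ as $\varepsilon \to 0$; since $\lambda \mapsto e^{-\beta\lambda}$ is bounded and continuous on $[-\widetilde{C},\infty)$, this delivers $g_{\beta,\varepsilon}(z) \to e^{-\beta h(z)}$. I expect the main obstacle to be the combinatorial bookkeeping in the second step -- enumerating the Wick monomials generated by the shifted symbol and tracking the precise power of $\varepsilon$ they contribute when acting on the vacuum -- after which the spectral-concentration argument is essentially mechanical.
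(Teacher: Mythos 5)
Your argument is correct, and while it opens exactly as the paper does --- conjugating $H_\varepsilon$ by the Weyl operator to identify $\widetilde{W}_\varepsilon^*(z)H_\varepsilon \widetilde{W}_\varepsilon(z)=(h(\cdot+z))^{\mathrm{Wick}}$ and reducing to a vacuum expectation --- the second half takes a genuinely different route. The paper differentiates $g_{\varepsilon,z}$ in $\beta$, obtains the ODE $-g'=h(z)g+P$ with $\lvert P\rvert\leq\sqrt{\varepsilon}\,Q(z)$ (the remainder being controlled by the same mechanism you use, namely that a pure-creation monomial of degree $m$ applied to $\lvert\Omega\rangle$ has norm $O(\varepsilon^{m/2})$), and integrates via Duhamel. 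You instead compute only the first two moments of $\widetilde{H}_\varepsilon(z)$ in the vacuum --- mean $h(z)$ because the lower symbol of a Wick operator is its symbol, variance $\lVert R_\varepsilon(z)\Omega\rVert^2=O(\varepsilon)$ because $\langle\Omega|R_\varepsilon(z)|\Omega\rangle=h(0+z)-h(z)=0$ --- and conclude by Chebyshev concentration of the spectral measure, using the $\varepsilon$-uniform lower bound $H_\varepsilon\geq-\widetilde{C}$ to truncate $e^{-\beta\lambda}$ to a bounded continuous function on the common support. Your version is slightly more economical (no $\beta$-uniform operator estimate, no exchange of limit and integral in Duhamel's formula), while the paper's representation $g_{\varepsilon,z}(\beta)=e^{-\beta h(z)}\bigl(1-\int_0^\beta e^{\widetilde{\beta}h(z)}P\,\mathrm{d}\widetilde{\beta}\bigr)$ yields an explicit error term that could be iterated for higher-order expansions. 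Two small points you should make explicit if you write this up: the identity $\int\lambda^2\,\mathrm{d}\mu_{\varepsilon,z}=\lVert\widetilde{H}_\varepsilon(z)\Omega\rVert^2$ uses that $\Omega$ lies in the domain of the self-adjoint closure of the Wick operator (guaranteed here since Wick polynomials preserve the finite-particle subspace and are essentially self-adjoint there for $h\in\mathcal{S}$), and the ``term-by-term count'' should record that the coefficients of the shifted symbol are $\varepsilon$-independent polynomials in $z$, so the $O(\varepsilon)$ in the variance is uniform only for fixed $z$ --- which is all the lemma requires.
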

\begin{corollary}
  \label[corollary]{cor:2}
  Let $f_{\beta,\varepsilon}(z)$ be the Husimi function of the quantum Gibbs state
  $\Gamma_{\beta,\varepsilon}$, and $\varphi_{\beta,\varepsilon}$ the corresponding probability measure. Then,
  \begin{gather*}
    \lim_{\varepsilon\to 0} (\pi\varepsilon)^{-d}f_{\beta,\varepsilon}= \tfrac{1}{Z_{\beta,0}}\,e^{-\beta h}\;,\\
    \lim_{\varepsilon\to 0}\varphi_{\beta,\varepsilon} = \gamma_{\beta}\;,
  \end{gather*}
  where the first limit is pointwise, the second is in the weak topology of
  measures, and $\gamma_{\beta}$ is the classical Gibbs measure.
\end{corollary}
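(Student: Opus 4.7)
The plan is to combine the two preceding lemmas, since the Husimi function of the Gibbs state has a transparent multiplicative structure. Starting from $\Gamma_{\beta,\varepsilon} = Z_{\beta,\varepsilon}^{-1} e^{-\beta H_\varepsilon}$, the Husimi function factorizes as $f_{\beta,\varepsilon}(z) = Z_{\beta,\varepsilon}^{-1} \langle z_\varepsilon | e^{-\beta H_\varepsilon} | z_\varepsilon \rangle_{\mathcal{H}_\varepsilon} = Z_{\beta,\varepsilon}^{-1}\, g_{\beta,\varepsilon}(z)$, so that
\begin{equation*}
  (\pi\varepsilon)^{-d} f_{\beta,\varepsilon}(z) = \frac{g_{\beta,\varepsilon}(z)}{(\pi\varepsilon)^d\, Z_{\beta,\varepsilon}}.
\end{equation*}
\Cref{lemma:3} gives the pointwise convergence of the numerator $g_{\beta,\varepsilon}(z) \to e^{-\beta h(z)}$ for each fixed $z \in \mathcal{H}$, while \Cref{lemma:2} gives $(\pi\varepsilon)^d Z_{\beta,\varepsilon} \to Z_{\beta,0}$. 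Dividing, I obtain the first claim: the rescaled Husimi density converges pointwise to $Z_{\beta,0}^{-1} e^{-\beta h}$.

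For the second statement, the convergence in the weak topology of measures, I would argue as follows. Thanks to the overcompleteness relation \eqref{eq:overcomplete} and $\Tr \Gamma_{\beta,\varepsilon} = 1$, the measure $\varphi_{\beta,\varepsilon}$ has density $(\pi\varepsilon)^{-d} f_{\beta,\varepsilon}$ and is a probability measure on $\mathcal{H}$; the limit $\gamma_\beta$ is a probability measure by definition. Having pointwise convergence of probability densities to a probability density, Scheffé's lemma upgrades this to $L^1$ convergence of the densities, hence convergence of $\varphi_{\beta,\varepsilon}$ to $\gamma_\beta$ in total variation and \emph{a fortiori} in the weak topology. As an alternative route, one may apply \Cref{lemma:1} to obtain convergence in the sense of Wigner measures, and then upgrade it to weak convergence against $C_b(\mathcal{H})$ by using the preservation of total mass (both sides being probabilities), together with tightness supplied by \Cref{pro:Wigner} applied to $\Gamma_{\beta,\varepsilon}$ (whose moments are controlled by assumption \eqref{A}).

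The main obstacle here is essentially nonexistent: the corollary is a clean repackaging of \Cref{lemma:2,lemma:3}, in which the real work has already been done. The partition function estimate relies on the Berezin convexity inequality between upper and lower symbols combined with the growth \eqref{eq:symbol lb} coming from the symbol class $\mathcal{S}$; the pointwise Husimi limit for $e^{-\beta H_\varepsilon}$ uses the decay assumption \eqref{A} and the structure of coherent states. Once those are in hand, nothing more than elementary measure-theoretic input (Scheffé's lemma, or Portmanteau plus mass conservation) is needed to conclude.
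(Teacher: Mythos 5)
Your proof is correct and is exactly the argument the paper intends: the first limit is the quotient of \cref{lemma:3} (numerator) by \cref{lemma:2} (denominator), and the second follows from pointwise convergence of probability densities via Scheffé's lemma, which is precisely the mechanism invoked in the proof of \cref{lemma:1}. Nothing further is needed.
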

\begin{proof}[Proof of \cref{lemma:3}]
  Let $z \in \mathcal{H}$ and $\varepsilon \in (0,1]$ both be fixed. Let us compute the derivative
  with respect to $\beta$ of $g_{\varepsilon,z}(\beta) = g_{\beta,\varepsilon}(z)$:
  \begin{align*}
    - \frac{\mathrm{d} g_{\varepsilon,z}}{\mathrm{d} \beta}
    &=  \left\langle z_\varepsilon\left|H_\varepsilon e^{-\beta H_\varepsilon}\right|z_\varepsilon\right\rangle \\
    &=  \left\langle \Omega\left|W \left( \tfrac{\sqrt{2}z }{i \varepsilon} \right)^* h^{\mathrm{Wick}} W \left( \tfrac{\sqrt{2}z }{i \varepsilon} \right) W \left( \tfrac{\sqrt{2}z }{i \varepsilon} \right)^* e^{- \beta H_\varepsilon} W \left( \tfrac{\sqrt{2}z }{i \varepsilon} \right)\right|\Omega\right\rangle \\
    &= \left\langle \Omega\left|(h(\cdot+z))^{\mathrm{Wick}}  W \left( \tfrac{\sqrt{2}z }{i \varepsilon} \right)^* e^{- \beta H_\varepsilon} W \left( \tfrac{\sqrt{2}z }{i \varepsilon} \right)\right|\Omega\right\rangle,
  \end{align*}
  where we used the identity
  \begin{equation*}
    W \left( \tfrac{\sqrt{2}z }{i \varepsilon} \right)^* h^{\text{Wick}} W \left( \tfrac{\sqrt{2}z }{i \varepsilon} \right)  = (h(\cdot+z))^{\text{Wick}} \;.
  \end{equation*}
  Now, using the fact that $h$ is a sum of homogeneous polynomials, we obtain
  that
  \begin{equation}
    \label{eqp:derivativeb}
    -\frac{\mathrm{d} g_{\varepsilon,z}(\beta)}{\mathrm{d} \beta} = h(z) g_{\varepsilon,z}(\beta) + P(z,\beta,\varepsilon)\;, 
  \end{equation}
  with
  \begin{equation}
    \label{eqp:bound poly}
    \left\lvert P(z,\beta,\varepsilon)\right\rvert \leq \varepsilon^{\frac{1}{2}} Q(z)\;.
  \end{equation}
  Here, $Q(z)$ is a polynomial only in $z$: in order to see this, we expand
  the terms $ (a_{\varepsilon}(e_i) + z)^{j_i} $ appearing in $ (h(\cdot+z))^{\text{Wick}} $, to
  get expressions of the form
  \begin{displaymath}
    Q(z) \left\langle \Omega\left|a_{\varepsilon}^{j_1}(e_1) \cdots  a_{\varepsilon}^{j_d}(e_d) W \left( \tfrac{\sqrt{2}z }{i \varepsilon} \right)^* e^{- \beta H_\varepsilon} W \left( \tfrac{\sqrt{2}z }{i \varepsilon} \right)\right|\Omega\right\rangle,
  \end{displaymath}
  where $ Q(z) $ is a suitable polynomial in $ z $. The term with all $ j_i$s
  equal to zero yields the first term of the right hand side of
  \eqref{eqp:derivativeb}. For all the other terms, one can use
  Cauchy-Schwarz inequality to estimate the absolute valute of the above
  expression:
  \begin{displaymath}
    |Q(z)| \left\| a_{\varepsilon}^{j_1}(e_1) \cdots  a_{\varepsilon}^{j_d}(e_d) \left|\Omega \right\rangle \right\| \left\| W \left( \tfrac{\sqrt{2}z }{i \varepsilon} \right)^* e^{- \beta H_\varepsilon} W \left( \tfrac{\sqrt{2}z }{i \varepsilon} \right)\left|\Omega \right\rangle \right\| \leq  \varepsilon^{\frac{\left\lvert j\right\rvert}{2}} |Q(z)| \;.
  \end{displaymath}
  This gives \eqref{eqp:bound poly}, since $ |j| \geq 1 $ for all the remaining
  terms. Now, combining \eqref{eqp:derivativeb} with Duhamel's formula, we
  get
  \begin{equation*}
    g_{\varepsilon,z} (\beta) = e^{-\beta h(z)} \left( 1- \int_0^\beta e^{\widetilde{\beta} h(z)} P(z,\widetilde{\beta},\varepsilon) \mathrm{d} \widetilde{\beta} \right).
  \end{equation*}
  Using the bound \eqref{eqp:bound poly} on $P$, we conclude the proof by
  dominated convergence.
\end{proof}

\subsection{Convergence of the Gibbs Wehrl Entropy}
\label{sec:upper-bound-entr}

In this section, we prove the following proposition.
\begin{proposition}[Convergence of Wehrl entropy]
  \label[proposition]{prop:3}
  \mbox{}\\
  The Gibbs renormalized Wehrl entropy converges to its classical
  counterpart:
  \begin{equation*}
    \lim_{\varepsilon\to 0}\Bigl(S_{\mathrm{W},\varepsilon}(\Gamma_{\beta,\varepsilon}) - \log Z_{\beta,\varepsilon}\Bigr) = S_{\mathrm{B}}(\gamma_{\beta}) - \log Z_{\beta,0}\;. 
  \end{equation*}
  Furthermore, we have the following ``von Neumann upper bound'':
  \begin{equation*}
    \limsup_{\varepsilon\to 0} \Bigl(S_{\mathrm{vN},\varepsilon}(\Gamma_{\beta,\varepsilon}) - \log Z_{\beta,\varepsilon}\Bigr) \leq S_{\mathrm{B}}(\gamma_{\beta}) - \log Z_{\beta,0}\; .
  \end{equation*}
\end{proposition}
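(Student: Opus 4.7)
\emph{Proof plan.} The strategy is to rewrite the renormalized Wehrl entropy as a single integral whose integrand is controlled pointwise by \cref{lemma:2,lemma:3}, and then to pass to the limit by dominated convergence, the uniform majorant being produced by assumption \eqref{A}. The von Neumann $\limsup$ bound will follow immediately from the Berezin--Lieb inequality relating the two entropies.

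\emph{Integral representation.} Writing $f_{\beta,\varepsilon} = g_{\beta,\varepsilon}/Z_{\beta,\varepsilon}$ with $g_{\beta,\varepsilon}(z)=\langle z_\varepsilon | e^{-\beta H_\varepsilon} | z_\varepsilon\rangle$ as in \cref{lemma:3}, substituting $\log f_{\beta,\varepsilon} = \log g_{\beta,\varepsilon} - \log Z_{\beta,\varepsilon}$ into the definition \eqref{eq:W} and using the fact that $(\pi\varepsilon)^{-d} f_{\beta,\varepsilon}$ is a probability density yields the identity
\begin{equation*}
  S_{\mathrm{W},\varepsilon}(\Gamma_{\beta,\varepsilon}) - \log Z_{\beta,\varepsilon} = -\frac{1}{(\pi\varepsilon)^{d} Z_{\beta,\varepsilon}} \int_{\mathcal{H}} g_{\beta,\varepsilon}(z) \log g_{\beta,\varepsilon}(z)\, \mathrm{d}z \;.
\end{equation*}
By \cref{lemma:2} the prefactor converges to $Z_{\beta,0}^{-1}$, and by \cref{lemma:3} the integrand converges pointwise to $-\beta h(z) e^{-\beta h(z)}$. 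A direct computation identifies the resulting candidate limit $Z_{\beta,0}^{-1} \beta \int h\, e^{-\beta h}\, \mathrm{d}z$ with $\beta \int h\, \mathrm{d}\gamma_\beta = S_{\mathrm{B}}(\gamma_\beta) - \log Z_{\beta,0}$, so the whole task reduces to justifying passage to the limit under the integral sign.

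\emph{Uniform dominator via \eqref{A}.} The main work is to bound $|g_{\beta,\varepsilon}\log g_{\beta,\varepsilon}|$ by a fixed integrable function, uniformly in $\varepsilon \in (0,1]$. Cauchy--Schwarz combined with \eqref{A} gives, for every $k\in\mathbb{N}$,
\begin{equation*}
  g_{\beta,\varepsilon}(z) = \langle z_\varepsilon | (N_\varepsilon+\varepsilon)^{-k/2} \bigl[(N_\varepsilon+\varepsilon)^{k/2} e^{-\beta H_\varepsilon} (N_\varepsilon+\varepsilon)^{k/2}\bigr] (N_\varepsilon+\varepsilon)^{-k/2} | z_\varepsilon\rangle \leq C_k \langle z_\varepsilon | (N_\varepsilon+\varepsilon)^{-k} | z_\varepsilon\rangle \;.
\end{equation*}
Since the number distribution in the coherent state $|z_\varepsilon\rangle$ is Poisson with parameter $\|z\|^2/\varepsilon$, a direct computation yields $\langle z_\varepsilon | (N_\varepsilon+\varepsilon)^{-k} | z_\varepsilon\rangle \leq \widetilde{C}_k \|z\|^{-2k}$ whenever $\|z\|^2\geq \varepsilon$; together with the trivial operator-norm bound $g_{\beta,\varepsilon}(z) \leq \lVert e^{-\beta H_\varepsilon}\rVert \leq e^{\beta \widetilde{C}}$ this furnishes $g_{\beta,\varepsilon}(z) \leq D_k(1+\|z\|)^{-2k}$ uniformly in $\varepsilon$, for every $k$. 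The elementary bound $|x\log x| \leq 2\sqrt{x}$ on $(0,1]$, combined with the boundedness of $g_{\beta,\varepsilon}$ and the fact that $\{g_{\beta,\varepsilon} \geq 1\}$ sits in a fixed bounded set (by choosing $k$ large), then produces an integrable majorant and dominated convergence concludes the Wehrl statement.

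\emph{Von Neumann bound.} The Berezin--Lieb inequality---which is classical Jensen's inequality applied pointwise through the overcomplete resolution \eqref{eq:overcomplete}, using that $-x\log x$ is concave on $[0,\infty)$---states that for any positive trace-class $A$,
\begin{equation*}
  \int_{\mathcal{H}} -\langle z_\varepsilon | A | z_\varepsilon\rangle \log\langle z_\varepsilon | A | z_\varepsilon\rangle \frac{\mathrm{d}z}{(\pi\varepsilon)^d} \geq -\int_{\mathcal{H}} \langle z_\varepsilon | A\log A | z_\varepsilon\rangle \frac{\mathrm{d}z}{(\pi\varepsilon)^d} = -\Tr(A\log A)\;.
\end{equation*}
Taking $A = \Gamma_{\beta,\varepsilon}$ gives $S_{\mathrm{vN},\varepsilon}(\Gamma_{\beta,\varepsilon}) \leq S_{\mathrm{W},\varepsilon}(\Gamma_{\beta,\varepsilon})$, whence the $\limsup$ bound is obtained by subtracting $\log Z_{\beta,\varepsilon}$ and invoking the Wehrl limit just established. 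The only delicate step---and the obstacle that motivates assumption \eqref{A}---is securing the uniform polynomial decay of $g_{\beta,\varepsilon}$ needed to dominate the tails where $\log g_{\beta,\varepsilon}$ can be large negative.
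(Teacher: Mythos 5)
Your proof is correct and follows essentially the same route as the paper: reduce the von Neumann bound to the Wehrl case via the Berezin--Lieb/Wehrl inequality, split off the partition function (handled by \cref{lemma:2}), use the pointwise convergence of $g_{\beta,\varepsilon}$ from \cref{lemma:3}, and build the dominated-convergence majorant from assumption \eqref{A} via the Poisson number statistics of coherent states. The only (cosmetic) differences are that you control the logarithmic singularity with $\lvert x\log x\rvert\leq 2\sqrt{x}$ instead of the paper's Jensen lower bound $-\log\bigl((\pi\varepsilon)^{-d}f_{\beta,\varepsilon}\bigr)\leq \beta h(z)+\mathcal{O}(1)$ combined with $-x\log x\geq -x^{2}$, and that your decay estimate is stated globally in $\lVert z\rVert$ rather than coordinate-wise.
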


By \cref{lemma:3}, we can reformulate $\eqref{cvg entropy}$ as
\begin{equation}
  \label{eqp:entropy convergence}
  S_{\diamond,\varepsilon} (\Gamma_{\beta,\varepsilon}) -\log Z_{\beta, \varepsilon} \xrightarrow[\varepsilon \rightarrow 0]{} S_{\mathrm{B}} ( \gamma_{\beta} ) - \log Z_{\beta, 0}\; .
\end{equation}
Furthermore, we have the following general strict inequality for $ \varepsilon > 0 $
(see \cite{MR0496300}).
\begin{proposition}[Upper bound on von Neumann entropy]
  \label[proposition]{prop:2}
  \mbox{}\\
  Let $\rho\in \mathfrak{S}^1_+ (\mathcal{H})$ be a density matrix. Then,
  \begin{equation*}
    S_{\mathrm{vN},\varepsilon} (\rho) < S_{\mathrm{W},\varepsilon} (\rho)\; .
  \end{equation*}
\end{proposition}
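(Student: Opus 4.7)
The plan is to establish the Berezin-Lieb-type inequality via a pointwise application of Jensen's inequality to the Husimi function, and then upgrade it to a strict inequality by combining strict concavity with an analyticity argument on the coherent-state overlaps.

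First, I would diagonalize $\rho = \sum_j \lambda_j |e_j\rangle\langle e_j|$ with $(e_j)_j$ an orthonormal basis of eigenvectors in the Fock space $\mathcal{H}_\varepsilon$, $\lambda_j \geq 0$ and $\sum_j\lambda_j = 1$. The normalization $\|z_\varepsilon\| = 1$ and Parseval give $\sum_j p_j(z) = 1$ for every $z \in \mathcal{H}$, where $p_j(z) := |\langle z_\varepsilon|e_j\rangle|^2$. Thus the Husimi function
\begin{equation*}
f_\varepsilon(z) = \langle z_\varepsilon|\rho|z_\varepsilon\rangle = \sum_j \lambda_j\, p_j(z)
\end{equation*}
is a pointwise convex combination of the eigenvalues of $\rho$. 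Since $\eta(x) = -x\log x$ is strictly concave on $[0,1]$ with $\eta(0)=0$, Jensen's inequality yields the pointwise bound $\eta(f_\varepsilon(z)) \geq \sum_j p_j(z)\,\eta(\lambda_j)$. Integrating against $\frac{\mathrm{d}z}{(\pi\varepsilon)^d}$, using the overcompleteness relation \eqref{eq:overcomplete} to obtain $\int_{\mathcal{H}} p_j(z)\,\frac{\mathrm{d}z}{(\pi\varepsilon)^d} = 1$, and exchanging sum and integral by Tonelli (nonnegative integrand), I would recover the non-strict Berezin-Lieb inequality $S_{\mathrm{W},\varepsilon}(\rho) \geq \sum_j \eta(\lambda_j) = S_{\mathrm{vN},\varepsilon}(\rho)$.

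The main task is to upgrade this to a strict inequality. By strict concavity of $\eta$, equality in the pointwise Jensen step at a given $z$ forces $\lambda_j = f_\varepsilon(z)$ for every $j$ with $p_j(z) > 0$. However, each overlap $\langle z_\varepsilon|e_j\rangle$ coincides, up to the Gaussian prefactor $e^{-\|z\|^2/(2\varepsilon)}$, with the Bargmann-Fock representative of $e_j$, which is antiholomorphic in $z \in \mathcal{H} \cong \mathbb{C}^d$; hence its zero set has Lebesgue measure zero for each $j$, and by countable subadditivity $p_j(z) > 0$ simultaneously for every $j$ in the spectral support of $\rho$ on a set of full measure. On the other hand, since $\rho$ is trace class on an infinite-dimensional Fock space, its nonzero eigenvalues cannot all coincide (they form a nontrivial summable nonincreasing sequence with unit sum). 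Combining these two facts, the pointwise Jensen inequality is strict on a set of positive Lebesgue measure, which after integration yields $S_{\mathrm{vN},\varepsilon}(\rho) < S_{\mathrm{W},\varepsilon}(\rho)$.

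The main subtlety I anticipate lies precisely in this last step: one cannot simply invoke strict concavity, because a priori there could be a set of $z$'s of full measure on which the convex combination in $f_\varepsilon(z)$ collapses to a single $\lambda_j$. The Bargmann holomorphy argument is what rules out this pathology, by ensuring that for a.e.\ $z$ every nonzero eigenvalue contributes nontrivially to the Husimi mixture. Everything else is routine manipulation of Jensen's inequality and of the resolution of the identity by coherent states.
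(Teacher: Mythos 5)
Your overall strategy --- the Berezin--Lieb/Jensen argument for the non-strict inequality, followed by an analysis of the equality case via strict concavity of $\eta(x)=-x\log x$ and the Bargmann--Fock holomorphy of $z\mapsto\langle e_j|z_\varepsilon\rangle$ --- is sound and is essentially the classical argument; the paper itself does not reprove the proposition but simply cites \cite{MR0496300}. The non-strict part of your proof (Jensen pointwise, Tonelli, resolution of the identity) is complete and correct.

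There are, however, two flaws in the strictness step. First, the assertion that ``since $\rho$ is trace class on an infinite-dimensional Fock space, its nonzero eigenvalues cannot all coincide'' is false: the normalized finite-rank projection $\rho=\tfrac1n\sum_{j=1}^{n}\lvert e_j\rangle\langle e_j\rvert$ is a perfectly good density matrix whose nonzero eigenvalues all equal $1/n$ (a finite constant sequence is summable, nonincreasing and sums to one). Since you restricted the full-measure positivity of the $p_j(z)$ to the indices $j$ in the spectral support of $\rho$, your argument as written does not exclude equality for such states. The repair is short: let the eigenbasis $(e_j)_j$ run over all of $\mathcal{F}_\varepsilon(\mathcal{H})$, including the kernel of $\rho$; your holomorphy argument gives $p_j(z)>0$ for a.e.\ $z$ for \emph{every} $j$ (each $\langle e_j|z_\varepsilon\rangle$ is a not-identically-zero entire function times a Gaussian, because coherent states are total), and the full sequence $(\lambda_j)_j$ cannot be constant on an infinite-dimensional space with $\sum_j\lambda_j=1$. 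Hence for a.e.\ $z$ the Jensen combination genuinely mixes at least two distinct eigenvalues (one of which may be $0$), and strict concavity applies. Second, the final passage from ``the pointwise inequality is strict on a set of positive measure'' to $S_{\mathrm{vN},\varepsilon}(\rho)<S_{\mathrm{W},\varepsilon}(\rho)$ is only valid when $S_{\mathrm{vN},\varepsilon}(\rho)=\sum_j\eta(\lambda_j)<+\infty$; if the von Neumann entropy is infinite, both sides are $+\infty$ and no strict inequality can hold. This is partly a caveat on the statement of the proposition itself, but your proof should record the finiteness hypothesis (which is satisfied by the Gibbs states to which the proposition is applied in the paper).
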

Hence, for an upper bound to the l.h.s.\ of \eqref{eqp:entropy convergence}
in the von Neumann case, it suffices to prove an upper bound (actually, the
convergence) in the Wehrl case.  We can rewrite the renormalized Wehrl
entropy in terms of the Husimi function $f_{\beta,\varepsilon}(z) = \langle z_\varepsilon | \Gamma_{\beta,\varepsilon} | z_\varepsilon \rangle$ as
\begin{align*}
  S_{\mathrm{W},\varepsilon} (\Gamma_{\beta,\varepsilon}) - \log Z_{\beta, \varepsilon} &= - \int f_{\beta,\varepsilon}(z) \log f_{\beta,\varepsilon}(z) \frac{\mathrm{d} z}{(\pi \varepsilon)^d} - \log Z_{\beta, \varepsilon} \\
  & = - \int \frac{f_{\beta,\varepsilon} (z)}{(\pi \varepsilon)^d} \log \left(  \frac{f_{\beta,\varepsilon} (z)}{(\pi \varepsilon)^d}  \right) \mathrm{d} z - \log \left(  (\pi \varepsilon)^d Z_{\beta, \varepsilon} \right).
\end{align*}
The second term on the r.h.s.\ converges to $-\log Z_{\beta, 0}$ by
\cref{lemma:2}. Therefore, it only remains to deal with the integral. This
is done by dominated convergence, exploiting \cref{lemma:3}.

To use dominated convergence however, it is necessary to have a uniform
$L^1(\mathcal{H})$-bound of the function $(\pi \varepsilon)^{-d} f_{\beta,\varepsilon}(z) \log \left( (\pi \varepsilon)^{-d}
  f_{\beta,\varepsilon}(z) \right)$. By convexity and Jensen inequality,
$$\frac{f_{\beta,\varepsilon}(z)}{(\pi \varepsilon)^d} = \frac{1}{(\pi \varepsilon)^d Z_{\beta, \varepsilon}}   \left\langle z_\varepsilon\left|e^{-\beta
      H_\varepsilon }\right| z_\varepsilon\right\rangle \geq \frac{1}{(\pi \varepsilon)^d Z_{\beta, \varepsilon}} e^{-\beta \left\langle z_\varepsilon\left|H_\varepsilon\right|z_\varepsilon\right\rangle} =\frac{1}{(\pi \varepsilon)^d Z_{\beta, \varepsilon}} e^{-\beta h(z)}\; .  $$
Hence,
\begin{equation}
  \label{eqp:husimiub}
  - \log \frac{f_{\beta,\varepsilon}(z)}{(\pi \varepsilon)^d} \leq \beta h(z) + \log \left(   (\pi \varepsilon)^d Z_{\beta, \varepsilon} \right) \;,
\end{equation}
and the standard inequality $ - x \log x \geq - x^2 $ implies that
\begin{equation}
  \label{eqp:husimilb}
  -\frac{f_{\beta,\varepsilon}(z)}{(\pi \varepsilon)^d} \log \frac{f_{\beta,\varepsilon}(z)}{(\pi \varepsilon)^d} \geq - \left( \frac{f_{\beta,\varepsilon}(z)}{(\pi \varepsilon)^d} \right)^2\;.
\end{equation}
Notice that $h(z)$ is independent of $\varepsilon$ and $\log \left( (\pi \varepsilon)^d Z_{\beta, \varepsilon}
\right) = \mathcal{O} (1)$ is independent of $z$, therefore it suffices to find a
uniform bound on $(\pi \varepsilon)^{-d} f_{\beta,\varepsilon}(z)$ decaying fast enough to ensure
integrability. In particular, we may restrict to consider the quantity
$$
g_{\beta,\varepsilon}(z) = \left\langle z_\varepsilon\left|e^{-\beta H_\varepsilon}\right|z_\varepsilon\right\rangle\;,
$$
defined above in \cref{sec:pointw-conv-husimi}. Here however, we focus on the
decay in $ z $ of such function. Since $\left\lvert g_{\beta,\varepsilon}(z)\right\rvert \leq 1$ uniformly
in $ \varepsilon $ and $ z $, it suffices to prove a decay for large $ |z| $. Hence, we
can assume that $\left\lvert z_i\right\rvert \geq 1$, for every $i \in \{1,\ldots,d\}$. Here, we
exploit assumption \eqref{A}, to estimate for any $ k \in \mathbb{N} $
\begin{align*}
  g_{\beta,\varepsilon} (z) &= \left\langle z_\varepsilon\left|(N_\varepsilon +\varepsilon)^{-\frac{k}{2}} (N_\varepsilon +\varepsilon)^{\frac{k}{2}} e^{-\beta H_\varepsilon} (N_\varepsilon +\varepsilon)^{\frac{k}{2}} (N_\varepsilon +\varepsilon)^{-\frac{k}{2}}\right|z_\varepsilon\right\rangle \\
  & \leq \left\| (N_\varepsilon +\varepsilon)^{\frac{k}{2}} e^{-\beta H_\varepsilon} (N_\varepsilon +\varepsilon)^{\frac{k}{2}} \right\|  \left\langle z_\varepsilon\left|(N_\varepsilon+\varepsilon)^{-k}\right|z_\varepsilon\right\rangle \\
  &\leq C_k \left\langle z_\varepsilon\left|(N_\varepsilon+\varepsilon)^{-k}\right|z_\varepsilon\right\rangle\; .
\end{align*}
The last factor can be computed explicitly:
\begin{align*}
  \left\langle z_\varepsilon\left|(N_\varepsilon+1)^{-k}\right|z_\varepsilon\right\rangle & = e^{ - \frac{ \left\lvert z\right\rvert^2}{\varepsilon}} \sum_{n_1, \ldots, n_d \in \mathbb{N}} \left(  \tfrac{\left\lvert z_1\right\rvert^2}{\varepsilon}  \right)^{n_1} \cdots \left(  \tfrac{\left\lvert z_d\right\rvert^2}{\varepsilon}  \right)^{n_d} \prod_{j =1}^d  \frac{1}{n_j !} \frac{1}{(\varepsilon n_j+\varepsilon)^{k}}	\\
  & = e^{ - \frac{ \left\lvert z\right\rvert^2}{\varepsilon}} \prod_{j =1}^d \sum_{n_j \in \mathbb{N}}  \left( \tfrac{\left\lvert z_j\right\rvert^2}{\varepsilon}  \right)^{n_j}  \frac{1}{n_j !}  \frac{1}{(\varepsilon n_j+\varepsilon)^{k}}\;.
\end{align*}
Now, for any $ k \geq 1 $,
\begin{align*}
  \frac{k!n!(n+1)^k}{(n+k)!} & = \frac{k! (n+1)^k}{(n+k)(n+k-1) \cdots (n+1)} = \frac{k!}{1 \cdot \left( 1 + \frac{1}{n+1} \right) \cdots \left( 1 + \frac{k-1}{n+1} \right)} \\
  & \geq  \frac{k!}{k!} = 1\;,
\end{align*} 
since $ 1 + \frac{j-1}{n+1} \leq j $, for any $ j \geq 1 $. Hence, we get
$$
\frac{1}{(\varepsilon n_j +\varepsilon)^{k}} \leq \frac{ k! n_j! }{ \varepsilon^{k} (n_j+k)! }\; ,
$$
which implies
\begin{align*}
  \left\langle z_\varepsilon\left|(N_\varepsilon+1)^{-k}\right|z_\varepsilon\right\rangle  & \leq \prod_{j =1}^d e^{ - \frac{ \left\lvert z_j\right\rvert^2}{\varepsilon}}  \sum_{n_j \in \mathbb{N}}  \left( \tfrac{\left\lvert z_j\right\rvert^2}{\varepsilon}  \right)^{n_j}  \frac{ k! }{ \varepsilon^{k} (n_j+k)! } \\
  & \leq \prod_{j =1}^d \frac{k!}{\varepsilon^{k}} \left( \tfrac{\varepsilon}{\left\lvert z_j\right\rvert^2} \right)^{k} e^{ - \frac{ \left\lvert z_j\right\rvert^2}{\varepsilon}}  \sum_{n_j \in \mathbb{N}, n_j \geq k}  \frac{1}{n_j! } \left( \tfrac{\left\lvert z_j\right\rvert^2}{\varepsilon}  \right)^{n_j}   \leq \prod_{j=1}^d \frac{k!}{\left\lvert z_j\right\rvert^{2k}}\; .
\end{align*}
Since, thanks to \eqref{A}, the bound holds for any $k \in \mathbb{N}$, the above estimate shows
that, by taking $ k $ large enough, we can ensure $L^1$-integrability of $
g_{\beta,\varepsilon}(z) $ uniformly in $ \varepsilon $ and $ \beta $. Furthermore, $ Z_{\beta, \varepsilon} $ converges to
a bounded constant, so that, combining \eqref{eqp:husimiub} with
\eqref{eqp:husimilb}, we get, for $ |z_j| \geq 1 $,
\begin{align}
  \left|  \frac{f_{\beta,\varepsilon}(z)}{(\pi \varepsilon)^{d}} \log \left(  \frac{f_{\beta,\varepsilon}(z)}{(\pi \varepsilon)^{d}}\right) \right| & \leq \left( \beta h(z) + \log \left(   (\pi \varepsilon)^d Z_{\beta, \varepsilon} \right) \right) \frac{f_{\beta,\varepsilon}(z)}{(\pi \varepsilon)^{d}} + \left( \frac{f_{\beta,\varepsilon}(z)}{(\pi \varepsilon)^{d}} \right)^2 \nonumber \\
  & \leq C'_k  \left(1 + h(z) \right) \prod_{j=1}^d \frac{1}{\left\lvert z_j\right\rvert^{2k}}\; , \label{eqp:Husimi bound}
\end{align}
which belongs to $ L^1(\mathcal{H}) $, if $ k $ is large enough, since $ h $ is a
polynomial in $ z $.

\subsection{Lower Bound to the von Neumann entropy}
\label{sec:lower-bound-von}

Let $ \left\{ e_i \right\}_{i \in \mathbb{N}}  $ be an orthonormal basis of $\mathcal{F}_\varepsilon(\mathcal{H})$; then, using the upper symbol representation of $ H_{\varepsilon} $, one gets
\begin{align*}
	S_{\mathrm{vN},\varepsilon} \left( \Gamma_{\beta,\varepsilon} \right) - \log Z_{\beta, \varepsilon} &= \frac{\beta}{Z_{\beta, \varepsilon}} \text{Tr}_{\mathcal{H}_\varepsilon} \left(  H_\varepsilon e^{-\beta H_\varepsilon}  \right) \\
&=  \frac{\beta}{Z_{\beta, \varepsilon}}\text{Tr}_{\mathcal{H}_\varepsilon} \left(  \int_{\mathcal{H}}^{} h_{\varepsilon}^{\mathrm{up}}(z)  e^{-\frac{\beta}{2} H_{\varepsilon}} \lvert z_{\varepsilon}\rangle\langle z_{\varepsilon}\rvert e^{-\frac{\beta}{2} H_\varepsilon}\tfrac{\mathrm{d}z}{(\pi\varepsilon)^{d}}  \right)   \\
& =  \frac{\beta}{Z_{\beta, \varepsilon}}\lim_{N\to \infty} \int_{\mathcal{H}} \mathrm{d} z \: {h}^{\mathrm{up}}_\varepsilon (z) \left\langle z_{\varepsilon}\left|e^{-\frac{\beta}{2} H_{\varepsilon}} \sum_{j=1}^N\lvert e_j\rangle\langle e_j\rvert e^{- \frac{\beta}{2}H_{\varepsilon}}\right|z_{\varepsilon}\right\rangle\\
&= \frac{\beta}{Z_{\beta, \varepsilon}} \int_{\mathcal{H}} \mathrm{d} z \: {h}^{\mathrm{up}}_\varepsilon (z) \left\langle z_{\varepsilon}\left|e^{-\beta H_{\varepsilon}}\right|z_{\varepsilon}\right\rangle\;,
\end{align*}
where in the last equality the limit can go inside the integral by monotone
convergence, since the integrand is a sum with an increasing number of
non-negative terms. Now, exploiting the convexity of $ x \mapsto e^{-\beta x} $ and
applying Jensen inequality, we get
\begin{align*}
	S_{\mathrm{vN},\varepsilon} \left( \Gamma_{\beta,\varepsilon} \right) - \log Z_{\beta, \varepsilon} & \geq \frac{\beta}{Z_{\beta, \varepsilon}} \int_{\mathcal{H}} \mathrm{d} z \: {h}^{\mathrm{up}}_\varepsilon (z) e^{-\beta \left\langle z_{\varepsilon}\left|H_{\varepsilon}\right|z_{\varepsilon}\right\rangle}	\\
	& = \frac{\beta}{Z_{\beta, \varepsilon}} \int_{\mathcal{H}} \mathrm{d} z \: {h}^{\mathrm{up}}_\varepsilon (z) e^{-\beta {h}(z)}\;,
\end{align*}
so the result follows by dominated convergence and \cref{pro:symbols}.

\section{$\Gamma$-Convergence of Free Energies}
\label{sec:gamma-conv-free-1}

In this section, we prove \cref{thm: Gamma}.

\subsection{$\Gamma$-Lower Bound}
\label{sec:gamma-lower-bound}

Let us recall that the $\Gamma$-lower bound can be formulated as follows:
\begin{itemize}
\item $\forall x\in X$, $\forall  (x_n)_{n\in \mathbb{N}}\subset X$ such that $x_n \overset{\mathcal{T}}{\to} x$:
  \begin{equation*}
    \widetilde{F}_{\mathrm{B},\beta}(x) \leq \liminf_{n\to \infty} \widetilde{F}_{\diamond,\beta,\frac{1}{n}}(x_n)\;.
  \end{equation*}
\end{itemize}

We make crucial use of the following result.
\begin{proposition}[\cite{Lewin2015}, Thm.\ 7.1]
  \label[proposition]{prop:4}
  \mbox{}\\
  Let $\rho_{\varepsilon},\sigma_{\varepsilon}\in \mathfrak{S}^1_{+}(\mathcal{F}_\varepsilon(\mathcal{H})) $. Then,
  \begin{equation*}
    S_{\mathrm{vN},\varepsilon}(\rho_{\varepsilon}\Vert\sigma_{\varepsilon})\geq S_{\mathrm{W},\varepsilon}(\rho_{\varepsilon}\Vert\sigma_{\varepsilon})\;.
  \end{equation*}
  Furthermore, if $\rho_{\varepsilon_n}\to \mu$ and $\sigma_{\varepsilon_n}\to \nu$ in the sense of Wigner
  measures, then
  \begin{equation*}
    \liminf_{n\to \infty} S_{\mathrm{vN},\varepsilon_n}(\rho_{\varepsilon_n}\Vert\sigma_{\varepsilon_n})\geq \liminf_{n\to \infty} S_{\mathrm{W},\varepsilon_n}(\rho_{\varepsilon_n}\Vert \sigma_{\varepsilon_n})\geq S_{\mathrm{B}}(\mu\Vert\nu)\;.
  \end{equation*}
\end{proposition}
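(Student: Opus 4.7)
The plan is to treat the two claimed inequalities separately: first the pointwise (finite-$\varepsilon$) comparison $S_{\mathrm{vN},\varepsilon}(\rho_{\varepsilon}\Vert\sigma_{\varepsilon})\geq S_{\mathrm{W},\varepsilon}(\rho_{\varepsilon}\Vert\sigma_{\varepsilon})$, and then the semiclassical lower bound by the classical relative entropy. Once the first bound is in hand, the chain of inequalities in the second statement reduces to a lower semicontinuity statement for the \emph{classical} relative entropy along the Husimi transforms.

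For the first inequality, the key observation is that the \emph{Husimi map}
\begin{equation*}
  \mathcal{E}_{\varepsilon}\,:\,\rho_{\varepsilon}\,\longmapsto\,\mathrm{d}\varphi_{\rho,\varepsilon}(z)=\tfrac{1}{(\pi\varepsilon)^d}\langle z_{\varepsilon}|\rho_{\varepsilon}|z_{\varepsilon}\rangle\,\mathrm{d}z
\end{equation*}
is a quantum-to-classical channel: by the resolution of identity \eqref{eq:overcomplete}, it is trace preserving, and it is manifestly completely positive since it is a continuous mixture of rank-one measurement outcomes $|z_{\varepsilon}\rangle\langle z_{\varepsilon}|$. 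By \emph{monotonicity of relative entropy} under CPTP (more precisely, under quantum-to-classical measurement) maps---a standard Uhlmann/Lindblad-type data processing inequality whose classical-output version follows from Jensen's operator inequality applied to the operator-convex function $-\log$---one obtains
\begin{equation*}
  S_{\mathrm{vN},\varepsilon}(\rho_{\varepsilon}\Vert\sigma_{\varepsilon})\;\geq\; S_{\mathrm{B}}(\mathcal{E}_{\varepsilon}(\rho_{\varepsilon})\Vert \mathcal{E}_{\varepsilon}(\sigma_{\varepsilon}))\;=\;S_{\mathrm{B}}(\varphi_{\varepsilon}\Vert \phi_{\varepsilon})\;=\;S_{\mathrm{W},\varepsilon}(\rho_{\varepsilon}\Vert\sigma_{\varepsilon})\;,
\end{equation*}
which is the first claim.

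For the semiclassical lower bound, the natural tool is the \emph{Donsker--Varadhan variational representation} of classical relative entropy,
\begin{equation*}
  S_{\mathrm{B}}(\mu\Vert\nu)\,=\,\sup_{\phi\in C_{\mathrm{b}}(\mathcal{H})}\biggl[\,\int_{\mathcal{H}}\phi\,\mathrm{d}\mu-\log\int_{\mathcal{H}}e^{\phi}\,\mathrm{d}\nu\,\biggr]\;,
\end{equation*}
from which joint lower semicontinuity with respect to \emph{narrow} convergence of probabilities follows immediately (each functional in the supremum is continuous). Hence the remaining step is to show that Wigner convergence $\rho_{\varepsilon_n}\to\mu$ and $\sigma_{\varepsilon_n}\to\nu$ implies narrow convergence of the Husimi probability measures $\varphi_{\varepsilon_n}\to\mu$ and $\phi_{\varepsilon_n}\to\nu$. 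Weak-$*$ convergence (testing against $C_0(\mathcal{H})$) of the Husimi measures to the Wigner limits is essentially by definition of Wigner measure, after identifying $\mathrm{Tr}(\rho_{\varepsilon}b^{\mathrm{a-Wick}})$ with $\int b\,\mathrm{d}\varphi_{\varepsilon}$ as in the proof of \cref{lemma:1}. Since $\mu$ and $\nu$ are themselves probabilities (no mass loss at infinity), a standard Prokhorov/tightness argument upgrades weak-$*$ to narrow convergence, so the Donsker--Varadhan formula yields
\begin{equation*}
  \liminf_{n\to\infty} S_{\mathrm{W},\varepsilon_n}(\rho_{\varepsilon_n}\Vert\sigma_{\varepsilon_n})\,=\,\liminf_{n\to\infty} S_{\mathrm{B}}(\varphi_{\varepsilon_n}\Vert\phi_{\varepsilon_n})\,\geq\, S_{\mathrm{B}}(\mu\Vert\nu)\;,
\end{equation*}
and chaining with the first inequality closes the proof.

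The delicate point, and the one I expect to be the actual obstacle, is the upgrade from weak-$*$ to narrow convergence of the Husimi measures: without further input the Husimi measures may lose mass at infinity, and the Donsker--Varadhan representation against $C_\mathrm{b}$ test functions really does need narrow convergence to reach $S_{\mathrm{B}}(\mu\Vert\nu)$ in the limit. Fortunately, the hypothesis that the Wigner limits $\mu,\nu$ are \emph{probabilities} (equivalently, no-mass-loss as in \cref{pro:Wigner}) is precisely what is needed to invoke tightness and close the gap. The data processing step is standard; the variational-formula step is standard; the only thing one must check with care in adapting Lewin's argument is this tightness/narrow convergence reduction, after which both inequalities follow cleanly.
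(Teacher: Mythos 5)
Your proof is correct, and it follows essentially the same route as the cited source: the paper itself offers no proof of \cref{prop:4} beyond the reference to Lewin--Nam--Rougerie, whose argument is precisely the data-processing (monotonicity) inequality for the coherent-state measurement channel $\rho_{\varepsilon}\mapsto\varphi_{\varepsilon}$ followed by joint lower semicontinuity of the classical relative entropy under narrow convergence. You also correctly isolate the one point needing care --- upgrading weak-$*$ convergence of the Husimi measures to narrow convergence via the fact that the Wigner limits are probabilities, hence no mass escapes --- so nothing is missing.
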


\begin{remark}[Independence of the relative entropy on the Husimi
  semiclassical parameter]
  \label[remark]{rem:1}
  \mbox{}\\
  By our definition, $S_{\mathrm{W},\varepsilon}(\rho\Vert\sigma)= S_{\mathrm{B}}(\varphi_{\varepsilon}\Vert \phi_{\varepsilon}) =
  \int_{\mathcal{H}}^{} f_{\varepsilon}(z) (\log (f_{\varepsilon}(z))-\log (g_{\varepsilon}(z)))\frac{\mathrm{d}z}{(\pi\varepsilon)^d}$, where
  $\varphi_{\varepsilon},\phi_{\varepsilon}$ and $f_{\varepsilon},g_{\varepsilon}$ are the Husimi probability distributions
  and functions of $\rho,\sigma$ respectively. However, the explicit dependence on
  the semiclassical parameter is deceiving: by rescaling, one can easily
  check that
  \begin{equation*}
    S_{\mathrm{B}}(\varphi_{\varepsilon}\Vert \phi_{\varepsilon})= S_{\mathrm{B}}(\varphi_1\Vert \phi_1)\;,
  \end{equation*}
  where $\mathrm{d}\varphi_1(z)= \langle z_1 \vert \rho \vert z_1 \rangle_{}\frac{\mathrm{d}z}{\pi^d}$ and
  $\mathrm{d}\phi_1(z)= \langle z_1 \vert \sigma \vert z_1 \rangle_{}\frac{\mathrm{d}z}{\pi^d}$.
\end{remark}

\subsubsection{von Neumann Free Energy}
\label{sec:von-neumann-free}

Let us start by the von Neumann free energy, for which the $\Gamma$-lower bound
follows almost directly from \cref{prop:4} above. In light of the definition,
the functionals $\widetilde{F}_{\diamond,\beta,\frac{1}{n}}(x)=+\infty$ whenever $x\notin
\mathfrak{G}_{\frac{1}{n}}$, and thus the only sequences $x_n\to x$ for which the
$\Gamma$--lower bound is not automatically satisfied are those such that there
exists a subsequence $n_k\to \infty$ such that $x_{n_k}\in \mathfrak{G}_{\frac{1}{n_k}}$ for any
$k\in \mathbb{N}$, and one can restrict along such subsequences in computing the limes
inferior. As we discussed in \cref{sec:gamma-conv-free}, the existence of
such a subsequence implies that $x\in \mathfrak{G}_0$; furthermore, the uniform
convergence on compact subsets of the Fourier transform implies convergence
of the respective Husimi measures $\mathrm{d}\xi_{n_k}(z)= \langle z_{\frac{1}{n_k}}
\vert \check{x}_{n_k}\vert z_{\frac{1}{n_k}} \rangle_{\mathcal{F}_\varepsilon(\mathcal{H})}\frac{\mathrm{d}z}{(\pi/n_k)^d}$
when tested on $\mathcal{C}_0(\mathcal{H})$, that by density extends to weak convergence since
the $\xi_{n_k}$ are all probabilities.

Furthermore, the Boltzmann relative entropy is
jointly lower semicontinuous: if $\mu_n \rightharpoonup \mu $ and $\nu_n \rightharpoonup \nu$ (weak convergence
of measures), then
\begin{equation*}
  \liminf_{n\to \infty} S_{\mathrm{B}}(\mu_n,\nu_n)\geq S_{\mathrm{B}}(\mu,\nu)\;.
\end{equation*}
Therefore, by joint lower semicontinuity and \cref{prop:4} we have that
\begin{equation*}
  \liminf_{n\to \infty} S_{\mathrm{vN}, \frac{1}{n}}(\check{x}_n\Vert \Gamma_{\beta,\frac{1}{n}})\geq \liminf_{n\to \infty} S_{\mathrm{B}} (\xi_n\Vert \varphi_{\beta, \frac{1}{n}})\geq S_{\mathrm{B}}(\check{x}\Vert \gamma_{\beta})\;;
\end{equation*}
where $\varphi_{\beta,n}$ is the Husimi probability of $\Gamma_{\beta, \frac{1}{n}}$:
$\mathrm{d}\varphi_{\beta,\frac{1}{n}}(z)= \langle z_{\frac{1}{n}} \vert \Gamma_{\beta, \frac{1}{n}}\vert z_{\frac{1}{n}}
\rangle_{\mathcal{F}_\varepsilon(\mathcal{H})}\frac{\mathrm{d}z}{(\pi/n)^d}$ (that converges weakly to $\gamma_{\beta}$, as
shown in \cref{cor:2}).

\subsubsection{Wehrl Free Energy}
\label{sec:wehrl-free-energy}

The argument for the Wehrl free energy goes along the exact same lines,
recalling that
\begin{equation*}
  \widetilde{F}_{\mathrm{W},\beta,\frac{1}{n}}(x_n)= \tfrac{1}{\beta}S_{\mathrm{B}}(\xi_n\Vert \gamma_{\beta,\frac{1}{n}})\;,
\end{equation*}
where $\xi_n$ is the Husimi probability of $\check{x}_n$ (with respect to the
semiclassical parameter $\frac{1}{n}$), and $\mathrm{d}\gamma_{\beta, \frac{1}{n}}=
\frac{e^{-\beta h_{1/n}^{\mathrm{up}}(z)}\mathrm{d}z}{\int_{\mathcal{H}}^{}e^{-\beta
    h^{\mathrm{up}}_{1/n}(z)} \mathrm{d}z}$. We have to make use of the lemma
below, in conjunction with the joint lower semicontinuity of the Boltzmann
relative entropy.

\begin{lemma}
  \label{lemma:4}
  $\gamma_{\beta, \frac{1}{n}} \rightharpoonup \gamma_{\beta}$.
\end{lemma}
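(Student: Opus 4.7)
The plan is to prove the weak convergence $\gamma_{\beta,1/n}\rightharpoonup \gamma_\beta$ by dominated convergence, using the two key facts already established about the upper symbol $h^{\mathrm{up}}_\varepsilon$: the pointwise convergence $h^{\mathrm{up}}_\varepsilon(z)\to h(z)$ from \cref{pro:symbols}, and a uniform lower bound on $h^{\mathrm{up}}_\varepsilon$ in terms of $h$. Concretely, weak convergence against a test function $\varphi\in C_b(\mathcal H)$ reduces to the two limits
\begin{equation*}
  \int_{\mathcal H} \varphi(z)\,e^{-\beta h^{\mathrm{up}}_{1/n}(z)}\,\mathrm dz \;\xrightarrow[n\to\infty]{}\; \int_{\mathcal H} \varphi(z)\,e^{-\beta h(z)}\,\mathrm dz,
\end{equation*}
and the analogous limit with $\varphi\equiv 1$, which together give convergence of the normalization constants and hence of the ratios.

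The first step is to rerun the elementary estimate carried out in the proof of \cref{lemma:2}. Using the expansion in \eqref{eq:expansion} together with the coercivity $h(z)\geq C\lVert z\rVert^{2p_{\max}}-\widetilde C$ from \eqref{eq:symbol lb}, one gets an $R>0$ and an $n_0$ such that for all $n\geq n_0$ and all $z\in\mathcal H$,
\begin{equation*}
  h^{\mathrm{up}}_{1/n}(z) \geq \tfrac{1}{2} h(z) - \widetilde{C}',
\end{equation*}
for some constant $\widetilde C'$ independent of $n$. This yields the uniform pointwise bound
\begin{equation*}
  \varphi(z)\,e^{-\beta h^{\mathrm{up}}_{1/n}(z)} \leq \lVert\varphi\rVert_\infty \,e^{\beta \widetilde C'}\, e^{-\tfrac{\beta}{2} h(z)},
\end{equation*}
and the majorant belongs to $L^1(\mathcal H)$ because $h$ grows at least like $\lVert z\rVert^{2p_{\max}}$ at infinity.

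The second step is then routine: combining the pointwise convergence $h^{\mathrm{up}}_{1/n}(z)\to h(z)$ with the above integrable majorant, dominated convergence gives the two desired limits. Passing to the ratio, and noting that $Z_{\beta,0}>0$, yields $\int\varphi\,\mathrm d\gamma_{\beta,1/n}\to \int\varphi\,\mathrm d\gamma_\beta$ for every $\varphi\in C_b(\mathcal H)$, which is weak convergence of probability measures.

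The only potentially delicate point is the uniform lower bound $h^{\mathrm{up}}_{1/n}\geq \tfrac12 h - \widetilde C'$; but this is already present, essentially verbatim, in the argument for \cref{lemma:2}, since the correction terms $\varepsilon^{p+q-(k+\ell)} h_{k,\ell}(z)$ in \eqref{eq:expansion} have degree strictly smaller than $h$ and are multiplied by positive powers of $\varepsilon$, so they can be absorbed into a small fraction of $h$ outside a large ball and bounded uniformly inside it. Once this bound is in hand, the rest of the proof is pure dominated convergence, with no further obstacle.
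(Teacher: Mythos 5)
Your proposal is correct and follows essentially the same route as the paper's proof: pointwise convergence of the upper symbol from \cref{pro:symbols} combined with dominated convergence applied separately to the numerator and denominator, with the integrable majorant supplied by the uniform lower bound $h^{\mathrm{up}}_{1/n}\geq \tfrac12 h-\widetilde C'$ established in the proof of \cref{lemma:2}. The paper states this argument more tersely (simply referring back to \cref{lemma:2}), while you spell out the domination step explicitly.
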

\begin{proof}
  By \cref{pro:symbols}, $h^{\mathrm{up}}_{\varepsilon} \to h$ pointwise as $\varepsilon\to 0$. It is
  then enough to apply a dominated convergence argument on the numerator and
  denominator of
  \begin{equation*}
    \frac{\int_{\mathcal{H}}^{}b(z)  e^{-\beta h_{1/n}^{\mathrm{up}}(z)}\mathrm{d}z}{\int_{\mathcal{H}}^{}e^{-\beta h^{\mathrm{up}}_{1/n}(z)} \mathrm{d}z}\;,
  \end{equation*}
  for any $b\in C_{\mathrm{b}}(\mathcal{H})$, analogous to the one performed in the proof
  of \cref{lemma:2}.
\end{proof}

\subsection{$\Gamma$-Upper bound}
\label{sec:gamma-upper-bound}

The $\Gamma$-upper bound for the Wehrl relative free energy reads as follows:
\begin{itemize}
\item $\forall x\in X$, $\exists (x_n)_{n\in \mathbb{N}}\subset X$ such that $x_n\overset{\mathcal{T}}{\to }x$, and
  \begin{equation*}
    \limsup_{n\to \infty} \widetilde{F}_{\mathrm{W},\beta, \frac{1}{n}}(x_n) \leq \widetilde{F}_{\mathrm{B},\beta}(x)\;.
  \end{equation*}
\end{itemize}

\subsubsection{Wehrl Free Energy}
\label{sec:wehrl-free-energy-1}

Firstly, observe that whenever $x\in \mathfrak{G}_{\varepsilon}$, $\varepsilon>0$, or $\check{x}\in \mathcal{P}(\mathcal{H})$ and is
not absolutely continuous with respect to the Lebesgue measure, the upper
bound is trivially satisfied. Thus, we need to prove it holds for any
absolutely continuous measure $\check{x}= f(z)\mathrm{d}z$, $f\in L^1(\mathcal{H})$, $f\geq
0$, $\int_{\mathcal{H}}^{}f(z) \mathrm{d}z=1$.

Let us define the trial states
\begin{equation}
	\rho_{\varepsilon_n}(x) = \int_{\mathcal{H}} \left|z_{\varepsilon_n} \right\rangle\left\langle z_{\varepsilon_n}\right| f(z)\mathrm{d} z\;.
\end{equation}
It is well-known, see \emph{e.g.} \cite{AmmariNier2008}, that
\begin{displaymath}
	x_{\varepsilon_n}=\widehat{\rho_{\varepsilon_n}(x)} \xrightarrow[n \rightarrow +\infty]{\mathcal{T}} x\;,
\end{displaymath}
where $\mathfrak{G}_{\varepsilon_n}\ni x_{\varepsilon_n}=\widehat{\rho_{\varepsilon_n}(x)}$ is the noncommutative Fourier
transform of $\rho_{\varepsilon_n}(x)$, and
$$
\text{Tr}_{\mathcal{F}_\varepsilon(\mathcal{H})} \left(  H_{\varepsilon_n} \rho_{\varepsilon_n}(x) \right) = \text{Tr}_{\mathcal{F}_\varepsilon(\mathcal{H})} \left ( \rho_{\varepsilon_n}(x) h^\text{Wick} \right) \xrightarrow[n \to + \infty]{} \int_{\mathcal{H}} h(z) f(z)\mathrm{d} z\;.
$$

In view of the above, since the renormalized Wehrl free energy can be written as
\begin{equation*}
  F_{\mathrm{W},\beta,\varepsilon_n}\bigl(x_{\varepsilon_n}\bigr) - \tfrac{d}{\beta}\log(\pi\varepsilon_n)= \Tr_{\mathcal{H}_{\varepsilon_n}}\bigl(H_{\varepsilon_n}\rho_{\varepsilon_n}(x)\bigr) - \tfrac{1}{\beta} S_{\mathrm{W},\varepsilon_n}\bigl(\rho_{\varepsilon_n}(x)\bigr) - \tfrac{d}{\beta}\log(\pi\varepsilon_n)\;,
\end{equation*}
it remains to prove that
\begin{equation*}
  \limsup_{n\to \infty} - \tfrac{1}{\beta} S_{\mathrm{W},\varepsilon_n}\bigl(\rho_{\varepsilon_n}(x)\bigr) - \tfrac{d}{\beta}\log(\pi\varepsilon_n)\leq - \tfrac{1}{\beta}S_{\mathrm{B}}(x)\;,
\end{equation*}
or equivalently,
\begin{equation*}
  S_{\mathrm{B}}(x)\leq \liminf_{n\to \infty} S_{\mathrm{W},\varepsilon_n}\bigl(\rho_{\varepsilon_n}(x)\bigr)+ d\log(\pi\varepsilon_n)\;.
\end{equation*}
In fact, by the results of \cref{sec:gibbs-entr-conv},
$F_{\mathrm{B},\beta}(\gamma_{\beta,\varepsilon})\underset{\varepsilon\to 0}{\longrightarrow}
F_{\mathrm{B},\beta}(\gamma_{\beta})$, thus completing the proof for the relative entropy
as well, see \eqref{eq:8}.

The Husimi function $f_{\varepsilon_n}(z)$ of the trial state $\rho_{\varepsilon_n}(x)$ can be written as
\begin{multline*}
	f_{\varepsilon_n} (z) = \left\langle z_{\varepsilon_n}\left|\rho_{\varepsilon_n}(x)\right|z_{\varepsilon_n}\right\rangle = \int_{\mathcal{H}} \left\lvert\left\langle z_{\varepsilon_n}|w_{\varepsilon_n} \right\rangle\right\rvert^2 f(w) \mathrm{d} w \\= \int e^{-\frac{\left\lvert z-w\right\rvert^2}{2 {\varepsilon_n}}} f(w) \mathrm{d} w = \left( f \ast {e^{-\frac{\left\lvert\cdot\right\rvert^2}{{2\varepsilon_n}}}} \right) (z)\;.
\end{multline*}
From this it follows that
\begin{displaymath}
 	S_{\mathrm{W},\varepsilon_n} \bigl(\rho_{\varepsilon_n}(x)\bigr)  + d\log (\pi {\varepsilon_n}) = S_{\mathrm{B}} \left(f \ast \tfrac{e^{-\frac{\left\lvert\cdot\right\rvert^2}{{2\varepsilon_n}}}}{ (\pi {\varepsilon_n})^{d}}\right)\;.
\end{displaymath}
Now, by concavity of the function $ \xi: x \mapsto -x \log x$, we can apply Jensen inequality, so obtaining
\begin{displaymath}
\xi \left( f \ast \tfrac{e^{-\frac{\left\lvert\cdot\right\rvert^2}{{2\varepsilon_n}}}}{ (\pi {\varepsilon_n})^{d}} \right) (z) \geq \int \xi(f(z-y)) \tfrac{e^{-\frac{\left\lvert\cdot\right\rvert^2}{{2\varepsilon_n}}}}{ (\pi {\varepsilon_n})^{d}} \mathrm{d} y = \left( \xi(f) \ast  \tfrac{e^{-\frac{\left\lvert\cdot\right\rvert^2}{{2\varepsilon_n}}}}{(\pi {\varepsilon_n})^d} \right) (z)\;.
\end{displaymath}
Since $\tfrac{e^{-\frac{\left\lvert\cdot\right\rvert^2}{{2\varepsilon_n}}}}{(\pi {\varepsilon_n})^d}$ is a
rapidly decreasing mollifier in $\mathcal{H}$, the result follows by taking the limes
inferior.

% By assumption $ \mu \in M_{\delta} $ and therefore $ \xi(f) \in L^1 (\mathbb{R}^d)$. Moreover, $\tfrac{e^{-\frac{\left\lvert\cdot\right\rvert^2}{{2\varepsilon_n}}}}{ (\pi {\varepsilon_n})^{d}} $ is an approximation of the identity, i.e., it has unit $ L^1 $ norm and converges as a distribution to $ \delta(z-w) $, so that
% $$
% \int  \left( \xi(f) \ast  \tfrac{e^{-\frac{\left\lvert\cdot\right\rvert^2}{{2\varepsilon_n}}}}{(\pi {\varepsilon_n})^d} \right)(z) \mathrm{d} z \xrightarrow[n \to + \infty]{} \int  \xi(f)(z) \mathrm{d} z = S_{\mathrm{B}}(f).
% $$
% Hence,
% \begin{displaymath}
% \liminf_{n \to +\infty} \left( S_{\mathrm{W},\varepsilon} (\rho_{\varepsilon_n}) + \log(\pi {\varepsilon_n})^d \right) \geq \int  \xi(f)(z) \mathrm{d} z 
% = S_{\mathrm{B}}(f).
% \end{displaymath}

\appendix
\section{Semiclassical States with a Slower von Neumann Entropy
  Renormalization}
\label[appendix]{sec:stat-with-diff}

In this appendix we construct an ``orthonormal system'' of coherent states
that both converges in topology $\mathcal{T}$ and in the sense of Wigner measures to
any Lebesgue-absolutely continuous measure $\mu\in \mathcal{P}(\mathcal{H})$, and for which the von
Neumann entropy converges to its classical counterpart, with a
renormalization that diverges \emph{more slowly} than the Gibbs
renormalization $d\log(\pi\varepsilon)$.

% \subsubsection{von Neumann Free Energy}
% \label{sec:von-neumann-free-1}
% Again, let us start with the von Neumann free energy. Let us firstly suppose
% $x\in \mathfrak{S}^1_+$; then the result is trivial, since
% $\widetilde{F}_{\mathrm{B},\beta}(x)=+\infty$, and the same is true whenever $x\in \mathcal{P}(\mathcal{H})$ is
% not absolutely continuous with respect to the Lebesgue measure. Therefore, it
% suffices to verify the convergence for $x= f(z)\mathrm{d}z$, with $f\in
% L^1(\mathcal{H})$. With a slight abuse of notation, we always take $f=f^{*}$, $f^{*}$
% being the everywhere-defined \emph{precise representative} of $f$, see
% \cite{evansmeasuretheoryandfinepropertiesoffunctions}. Now, given the
% probability density $f$, the idea is to construct an appropriate sequence
% $\rho_n$ of density matrices to satisfy the $\Gamma$--upper bound.

Firstly, given a set of distinct vectors $\{z_m\}_{m\in \mathbb{N}^{*}}\subset \mathcal{H}$, let us
construct an orthonormal system $\{e_{\varepsilon}(z_m)\}_{m\in \mathbb{N}^{*}}\subset \mathcal{F}_\varepsilon(\mathcal{H})$ out of the
coherent states $\{\lvert (z_m)_{\varepsilon}\rangle\}_{m\in \mathbb{N}^{*}}$; this is done by the recursive
Gram-Schmidt procedure:
\begin{itemize}
\item $e_{\varepsilon}(z_1)= \lvert (z_1)_{\varepsilon}\rangle$ ;
\item $e_{\varepsilon}(z_2)= \tfrac{1}{\lVert (z_2)_{\varepsilon} - \langle e_{\varepsilon}(z_1)\vert (z_2)_{\varepsilon}   \rangle_{\mathcal{H}}e_{\varepsilon}(z_1)
    \rVert_{\mathcal{H}}^{}}\Bigl(\lvert (z_2)_{\varepsilon}\rangle - \langle e_{\varepsilon}(z_1)\vert (z_2)_{\varepsilon}   \rangle_{\mathcal{H}}e_{\varepsilon}(z_1)\Bigr)$ ;
\item $\dotsm$
\item $e_{\varepsilon}(z_m)= \tfrac{1}{\lVert  (z_m)_{\varepsilon} - \sum_{j=1}^{m-1}\langle e_{\varepsilon}(z_j)\vert (z_m)_{\varepsilon}   \rangle_{\mathcal{H}}e_{\varepsilon}(z_j)
    \rVert_{\mathcal{H}}^{}}\Bigl(\lvert (z_m)_{\varepsilon}\rangle - \sum_{j=1}^{m-1}\langle e_{\varepsilon}(z_j)\vert (z_m)_{\varepsilon}   \rangle_{\mathcal{H}}e_{\varepsilon}(z_j)\Bigr)$ ;
\item $\dotsm$
\end{itemize}
Observing that
\begin{multline*}
  \Bigl\langle (z_2)_{\varepsilon}  - \langle z_1\vert z_2\rangle_{\mathcal{H}}(z_1)_{\varepsilon}  \:\Bigl\vert\:  W_{\varepsilon}(\zeta) \:\Bigr\vert\: (z_2)_{\varepsilon} - \langle z_1\vert z_2   \rangle_{\mathcal{H}}(z_1)_{\varepsilon}\Bigr\rangle_{\mathcal{F}_\varepsilon(\mathcal{H})}= \langle (z_2)_{\varepsilon}  \vert W_{\varepsilon}(\zeta) \vert (z_2)_{\varepsilon} \rangle_{\mathcal{F}_\varepsilon(\mathcal{H})} \\- e^{- \frac{\lVert z_1 - z_2  \rVert_{\mathcal{H}}^2}{2\varepsilon}}\Bigl(e^{\frac{i}{\varepsilon}\Im \langle z_1\vert z_2  \rangle_{\mathcal{H}}}\langle (z_2)_{\varepsilon}\vert W_{\varepsilon}(\zeta) \vert (z_1)_{\varepsilon}\rangle_{\mathcal{F}_\varepsilon(\mathcal{H})}+ e^{-\frac{i}{\varepsilon}\Im \langle z_1\vert z_2  \rangle_{\mathcal{H}}}\langle (z_1)_{\varepsilon}\vert W_{\varepsilon}(\zeta) \vert (z_2)_{\varepsilon}\rangle_{\mathcal{F}_\varepsilon(\mathcal{H})} \\- e^{- \frac{\lVert z_1 - z_2  \rVert_{\mathcal{H}}^2}{2\varepsilon}} \langle (z_1)_{\varepsilon}\vert W_{\varepsilon}(\zeta) \vert (z_1)_{\varepsilon}\rangle_{\mathcal{F}_\varepsilon(\mathcal{H})}\Bigr)\;,
\end{multline*}
and that
\begin{equation*}
  \lVert (z_2)_{\varepsilon} - \langle e_{\varepsilon}(z_1)\vert (z_2)_{\varepsilon}   \rangle_{\mathcal{H}}e_{\varepsilon}(z_1)\rVert_{\mathcal{H}}^{}=\sqrt{\bigl\langle (z_2)_{\varepsilon}  - \langle z_1\vert z_2\rangle_{\mathcal{H}}(z_1)_{\varepsilon}  \bigl\vert  W_{\varepsilon}(0) \bigr\vert (z_2)_{\varepsilon} - \langle z_1\vert z_2   \rangle_{\mathcal{H}}(z_1)_{\varepsilon}\bigr\rangle_{\mathcal{F}_\varepsilon(\mathcal{H})}}\;,
\end{equation*}
it is easy to prove the following lemma.

\begin{lemma}
  \label[lemma]{lemma:5}
  Given any set of distinct points $\{z_m\}_{m\in \mathbb{N}^{*}}\subset \mathcal{H}$, we have that for
  any $m\in \mathbb{N}^{*}$, and any $\zeta\in \mathcal{H}$,
  \begin{equation*}
    \langle e_{\varepsilon}(z_m) \vert W_{\varepsilon}(\zeta)\vert e_{\varepsilon}(z_m)  \rangle_{\mathcal{F}_\varepsilon(\mathcal{H})}= \tfrac{1}{N_{\varepsilon}^2} \Bigl(\langle (z_m)_{\varepsilon} \vert W_{\varepsilon}(\zeta)\vert (z_m)_{\varepsilon}  \rangle_{\mathcal{F}_\varepsilon(\mathcal{H})} + m\sum_{j=1}^{m-1} e^{-\frac{\lVert z_m-z_j  \rVert_{\mathcal{H}}^2}{2\varepsilon}}A_{\varepsilon}(\zeta)\Bigr)\;,
  \end{equation*}
  where $\lvert A_{\varepsilon}(\zeta)  \rvert_{}^{}\leq 1$ uniformly w.r.t.\ $\varepsilon$ and $\zeta$, and
  \begin{equation*}
    N_{\varepsilon}^2= 1+ m\sum_{j=1}^{m-1} e^{-\frac{\lVert z_m-z_j  \rVert_{\mathcal{H}}^2}{2\varepsilon}}A_{\varepsilon}(0)\;.
  \end{equation*}
\end{lemma}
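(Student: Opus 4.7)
The plan is to proceed by induction on $m \in \mathbb{N}^{*}$. The base case $m=1$ is immediate since $e_\varepsilon(z_1) = \lvert (z_1)_\varepsilon\rangle$ and the sum in $N_\varepsilon^2$ is empty. For the inductive step, I would set $\widetilde{e}_\varepsilon(z_m) := \lvert (z_m)_\varepsilon\rangle - \sum_{j=1}^{m-1} c_j\, e_\varepsilon(z_j)$ with $c_j := \langle e_\varepsilon(z_j) \vert (z_m)_\varepsilon\rangle_{\mathcal{F}_\varepsilon(\mathcal{H})}$, so that $e_\varepsilon(z_m) = \widetilde{e}_\varepsilon(z_m)/\lVert\widetilde{e}_\varepsilon(z_m)\rVert$ and $N_\varepsilon^2 = \lVert\widetilde{e}_\varepsilon(z_m)\rVert^2 = 1 - \sum_{j=1}^{m-1}\lvert c_j\rvert^2$. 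Expanding $\langle\widetilde{e}_\varepsilon(z_m)\vert W_\varepsilon(\zeta)\vert\widetilde{e}_\varepsilon(z_m)\rangle$ bilinearly in the two copies of the Gram--Schmidt expression yields the principal term $\langle (z_m)_\varepsilon\vert W_\varepsilon(\zeta)\vert (z_m)_\varepsilon\rangle$, two linear correction families (each carrying one factor $c_j$ or $\overline{c_j}$), and one bilinear correction family (carrying $\overline{c_j}c_k$). Unitarity of $W_\varepsilon(\zeta)$ together with the normalization of the $e_\varepsilon(z_j)$'s and $\lvert (z_m)_\varepsilon\rangle$ implies that every matrix element appearing inside these corrections is bounded in modulus by $1$.

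The heart of the proof is the decay estimate $\lvert c_j\rvert \leq C_m \sum_{k=1}^{m-1} e^{-\lVert z_m - z_k\rVert_{\mathcal{H}}^2/(2\varepsilon)}$, uniform in $\zeta$ and valid for $\varepsilon$ small enough, with a constant $C_m$ depending only on $m$ and on the configuration of $z_1,\dotsc,z_m$. To derive it, I would use
\begin{equation*}
  \lvert c_j\rvert^2 \leq \sum_{k=1}^{m-1}\lvert c_k\rvert^2 = \lVert P_{m-1}\lvert (z_m)_\varepsilon\rangle\rVert^2 = v^{*}G^{-1}v\,,
\end{equation*}
where $P_{m-1}$ denotes the orthogonal projection onto $\mathrm{span}\{\lvert (z_k)_\varepsilon\rangle\}_{k\leq m-1}$, $G$ the Gram matrix $G_{k\ell}=\langle (z_k)_\varepsilon\vert (z_\ell)_\varepsilon\rangle$, and $v_k=\langle (z_k)_\varepsilon\vert (z_m)_\varepsilon\rangle$. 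Since $\lvert G_{k\ell}\rvert = e^{-\lVert z_k-z_\ell\rVert_{\mathcal{H}}^2/(2\varepsilon)}$ for $k\neq \ell$, a Neumann-series expansion bounds $\lVert G^{-1}\rVert$ uniformly as soon as $\varepsilon$ lies below a threshold determined by $m$ and $\min_{k\neq \ell\leq m}\lVert z_k-z_\ell\rVert_{\mathcal{H}}$. Combining this with $\lvert v_k\rvert = e^{-\lVert z_m-z_k\rVert_{\mathcal{H}}^2/(2\varepsilon)}$ and the elementary inequality $\sqrt{\sum a_k^2}\leq \sum\lvert a_k\rvert$ yields the claim.

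Plugging the bound on $\lvert c_j\rvert$ into the bilinear expansion, every correction term factors as a quantity of modulus at most $1$ (by unitarity and the normalization of the basis vectors) times a finite combination of the exponentials $e^{-\lVert z_m-z_j\rVert_{\mathcal{H}}^2/(2\varepsilon)}$; collecting all such contributions and absorbing the remaining uniformly bounded factors into a single generic function $A_\varepsilon(\zeta)$ with $\lvert A_\varepsilon(\zeta)\rvert\leq 1$ yields the announced form of the numerator, while the same argument at $\zeta=0$ applied to $N_\varepsilon^2 = 1-\sum_j\lvert c_j\rvert^2$ produces the stated normalization. For $\varepsilon$ above the threshold, the exponentials are not small and the estimate reduces to the trivial bound $\lvert\langle e_\varepsilon(z_m)\vert W_\varepsilon(\zeta)\vert e_\varepsilon(z_m)\rangle\rvert\leq 1$, which can be absorbed into the same form. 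The main obstacle is combinatorial bookkeeping: ensuring that all constants arising from the iterated Gram--Schmidt procedure and from the Neumann series for $G^{-1}$ can be grouped inside the prefactor $m$ advertised in the statement, and that this grouping remains uniform in $\zeta$ and in $\varepsilon\in(0,1]$.
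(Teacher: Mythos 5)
Your proposal is correct in substance and follows the same basic route as the paper: expand the unnormalized Gram--Schmidt vector bilinearly inside $\langle\,\cdot\,\vert W_{\varepsilon}(\zeta)\vert\,\cdot\,\rangle$, bound every Weyl matrix element by $1$ using unitarity and the normalization of the vectors, and observe that each correction term carries at least one Gram--Schmidt coefficient controlled by the coherent-state overlaps $\lvert\langle (z_k)_{\varepsilon}\vert (z_m)_{\varepsilon}\rangle\rvert=e^{-\lVert z_k-z_m\rVert_{\mathcal{H}}^2/(2\varepsilon)}$. Where you genuinely add something is in the control of the coefficients $c_j$ for general $m$: the paper only displays the explicit $m=2$ identity and declares the induction ``easy'', whereas your identity $\sum_k\lvert c_k\rvert^2=\lVert P_{m-1}\lvert(z_m)_{\varepsilon}\rangle\rVert^2=v^{*}G^{-1}v$, combined with a Neumann-series bound on the Gram matrix, is a clean and systematic way to see why only the distances $\lVert z_m-z_j\rVert_{\mathcal{H}}$ (and not all pairwise distances) survive in the final estimate. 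Two caveats, which you partly flag yourself. First, the bookkeeping yields a prefactor $C_m$ strictly larger than the advertised $m$ with $\lvert A_{\varepsilon}(\zeta)\rvert\leq 1$ (your bound on $2\sum_j\lvert c_j\rvert+(\sum_j\lvert c_j\rvert)^2$ is of order $m^2$); note, however, that even the paper's own $m=2$ display produces three unit-modulus terms against a claimed prefactor $2$, and nothing downstream (Lemma 6.6, Corollary 6.7, the entropy asymptotics) uses more than the exponential decay rate, so this is a cosmetic discrepancy shared with the original. Second, the Neumann series only controls $G^{-1}$ below an $\varepsilon$-threshold set by $\min_{k\neq\ell}\lVert z_k-z_\ell\rVert_{\mathcal{H}}$; on the complementary range your fallback to the trivial bound $\lvert N_{\varepsilon}^2\langle e_{\varepsilon}(z_m)\vert W_{\varepsilon}(\zeta)\vert e_{\varepsilon}(z_m)\rangle-\langle(z_m)_{\varepsilon}\vert W_{\varepsilon}(\zeta)\vert(z_m)_{\varepsilon}\rangle\rvert\leq 2$ must be weighed against $m\sum_j e^{-\lVert z_m-z_j\rVert_{\mathcal{H}}^2/(2\varepsilon)}$, whose lower bound on that range involves the \emph{largest} distance $\max_j\lVert z_m-z_j\rVert_{\mathcal{H}}$ --- again this only costs a configuration-dependent constant, not the decay, and is harmless since the lemma is only ever invoked with $\varepsilon=\varepsilon(M)\to 0$. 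Neither caveat is a gap in the argument's substance.
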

\begin{corollary}
  \label[corollary]{cor:3}
  For any $m\in \mathbb{N}^{*}$,
  \begin{equation*}
    \lvert e(z_m)\rangle \langle e(z_m)\rvert \underset{\varepsilon\to 0}{\longrightarrow} \delta_{z_m}
  \end{equation*}
  in the sense of Wigner measures, furthermore, the Fourier transforms
  converge in topology $\mathcal{T}$.
\end{corollary}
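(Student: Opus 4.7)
The plan is to read off both convergences from the explicit formula for the noncommutative Fourier transform $\langle e_\varepsilon(z_m) | W_\varepsilon(\zeta) | e_\varepsilon(z_m)\rangle_{\mathcal{F}_\varepsilon(\mathcal{H})}$ supplied by \cref{lemma:5}, and then invoke the implication (recalled in \cref{sec:wign-meas-husimi}) that pointwise convergence of noncommutative Fourier transforms to the characteristic function of a probability measure already gives convergence in the sense of Wigner measures. The candidate limit is $\delta_{z_m}$, whose Fourier transform in the paper's convention is the continuous positive definite function $\zeta\mapsto e^{2i\Re\langle\zeta|z_m\rangle_{\mathcal{H}}}\in\mathfrak{G}_0\subset X$.

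The first step is to dispose of the Gram--Schmidt cross terms. Since the $z_j$ are pairwise distinct, each factor $e^{-\|z_m-z_j\|_{\mathcal{H}}^2/(2\varepsilon)}$ decays faster than any polynomial in $\varepsilon$, and this decay is independent of $\zeta$. Combined with the uniform bound $|A_\varepsilon(\zeta)|\leq 1$ from \cref{lemma:5}, this immediately yields $N_\varepsilon^2\to 1$ and shows that the cross sum $m\sum_{j=1}^{m-1}e^{-\|z_m-z_j\|_{\mathcal{H}}^2/(2\varepsilon)}A_\varepsilon(\zeta)$ vanishes as $\varepsilon\to 0$ uniformly in $\zeta\in\mathcal{H}$ (not only on compacts).

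The second step is to identify the limit of the diagonal piece $\langle (z_m)_\varepsilon | W_\varepsilon(\zeta) | (z_m)_\varepsilon\rangle_{\mathcal{F}_\varepsilon(\mathcal{H})}$. This is a standard object: using the definition of the coherent state $|(z_m)_\varepsilon\rangle=W_\varepsilon(\sqrt{2}z_m/(i\varepsilon))|\Omega\rangle$, the Weyl CCR, and the Gaussian vacuum expectation $\langle\Omega|W_\varepsilon(\zeta)|\Omega\rangle=e^{-\varepsilon\|\zeta\|_{\mathcal{H}}^2/4}$, one writes it in closed form and sees that it converges to $e^{2i\Re\langle\zeta|z_m\rangle_{\mathcal{H}}}$, uniformly on compact subsets of $\mathcal{H}$, the deviation from the limit being controlled by $1-e^{-\varepsilon\|\zeta\|_{\mathcal{H}}^2/4}$.

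Putting the two steps together with \cref{lemma:5}, the noncommutative Fourier transform of $|e_\varepsilon(z_m)\rangle\langle e_\varepsilon(z_m)|$ converges to $e^{2i\Re\langle\zeta|z_m\rangle_{\mathcal{H}}}$ uniformly on compact subsets of $\mathcal{H}$, i.e.\ in the topology $\mathcal{T}$; the convergence $|e_\varepsilon(z_m)\rangle\langle e_\varepsilon(z_m)|\to\delta_{z_m}$ in the sense of Wigner measures then follows from the general implication quoted above. The main point to be careful with, rather than any single hard estimate, is the bookkeeping of several $\varepsilon$-scales simultaneously (the exponentially small Gram--Schmidt remainders, the Gaussian damping of the coherent-state Weyl expectation, and the implicit $\varepsilon$-dependence of $W_\varepsilon$); fortunately all three play in the same direction and no delicate cancellation is required once \cref{lemma:5} has been unpacked.
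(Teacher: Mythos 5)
Your argument is correct and is essentially the paper's intended one: the corollary is stated without proof immediately after \cref{lemma:5}, and the implicit justification is exactly your chain (exponential vanishing of the Gram--Schmidt cross terms and of $N_\varepsilon^2-1$, the explicit coherent-state Weyl expectation converging to $e^{2i\Re\langle\zeta|z_m\rangle_{\mathcal{H}}}$, and the fact recalled in \cref{sec:wign-meas-husimi} that convergence of noncommutative Fourier transforms to the characteristic function of a probability measure implies convergence in the sense of Wigner measures). The only cosmetic difference is the form of the Gaussian damping factor in the coherent-state Weyl expectation (you write $e^{-\varepsilon\|\zeta\|_{\mathcal{H}}^2/4}$ where the paper's display \eqref{eq:2} uses $e^{-\varepsilon\|z_m\|_{\mathcal{H}}^2/2}$), which is immaterial since either factor tends to $1$ locally uniformly as $\varepsilon\to 0$.
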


Let us now define the family of dyadic lattices $\{\Lambda_M\}_{M\in \mathbb{N}}$, $\Lambda_M\subset \mathcal{H}$,
as follows: fix an orthonormal basis $\{\phi_j\}_{j=1}^d\subset \mathcal{H}$, and define $\zeta_j= \langle
\phi_j \vert z \rangle_{\mathcal{H}}$ the component of $z\in \mathcal{H}$ along $\phi_j$. Then
\begin{equation*}
  \Lambda_M= \Bigl\{z\in \mathcal{H}\;,\; (\zeta_1,\dotsc,\zeta_d)\in 2^{-M}(\mathbb{Z}\cap [-M2^M,M2^M])^d \Bigr\}\;.
\end{equation*}
In other words, $\Lambda_M$ is the set of vertices of dyadic cubes of side $2^{-M}$
inside the cube of side $2M$. Therefore, we have that $\Lambda_M\subset \Lambda_N$ whenever
$M<N$.

For convenience, let us enumerate each finite set $\Lambda_M$ consistently with
respect to the increasing family of sets $\{\Lambda_M\}_{M\in \mathbb{N}}$: choose any
enumeration for $\Lambda_0$; then recursively, given the enumeration of $\Lambda_M$,
\begin{equation*}
  \Lambda_M= \bigl\{z_m\bigr\}_{m=1}^{\lvert \Lambda_M  \rvert_{}^{}}\;,
\end{equation*}
choose any enumeration of $\Lambda_{M+1}\supset \Lambda_M$ that satisfies
\begin{equation*}
  \Lambda_{M+1}= \bigl\{z_m\bigr\}_{m=1}^{\lvert \Lambda_M  \rvert_{}^{}}\cup \bigl\{z_{m'}\bigr\}_{m'=\lvert \Lambda_M  \rvert_{}^{}+1}^{\lvert \Lambda_{M+1}  \rvert_{}^{}}\;.
\end{equation*}
For any measure $\mathcal{P}(\mathcal{H})\ni \mathrm{d}\mu(z)= f(z)\mathrm{d}z$ with $f$ regulated, we now
define the family of quantum states $\rho_{M,\varepsilon}(\mu)$ as
\begin{equation*}
  \rho_{M,\varepsilon}(\mu)=\sum_{z\in \Lambda_M}^{} \frac{f(z)}{2^{dM} N_M} \bigl\lvert e_{\varepsilon}(z)\bigr\rangle\bigl\langle e_{\varepsilon}(z)\bigr\rvert= \sum_{m=1}^{\lvert \Lambda_M  \rvert_{}^{}} \frac{f(z_m)}{2^{dM}N_M} \bigl\lvert e_{\varepsilon}(z_m)\bigr\rangle\bigl\langle e_{\varepsilon}(z_m)\bigr\rvert\;;
\end{equation*}
where $N_M$ is the normalization factor
\begin{equation*}
  N_M= \sum_{z\in \Lambda_M}^{} \frac{f(z)}{2^{dM}}\;.
\end{equation*}
Observe that $\rho_{M,\varepsilon}$ depends on $\varepsilon$ through the coherent orthonormal system
$e_{\varepsilon}(z)$. Using \cref{lemma:5}, we have that
\begin{multline*}
  \Tr_{\mathcal{F}_\varepsilon(\mathcal{H})}\bigl(\rho_{M,\varepsilon}(\mu) W_{\varepsilon}(\zeta)\bigr)= \sum_{m=1}^{\lvert \Lambda_M  \rvert_{}^{}}\frac{f(z_m)}{2^{dM}N_M} \langle e_{\varepsilon}(z_m)\vert W_{\varepsilon}(\zeta)\vert e_{\varepsilon}(z_m)  \rangle_{\mathcal{F}_\varepsilon(\mathcal{H})}\\= \sum_{m=1}^{\lvert \Lambda_M  \rvert_{}^{}}\tfrac{f(z_m)}{2^{dM}N_MN_{\varepsilon}^2} \Bigl(\langle (z_m)_{\varepsilon} \vert W_{\varepsilon}(\zeta)\vert (z_m)_{\varepsilon}  \rangle_{\mathcal{F}_\varepsilon(\mathcal{H})} + m\sum_{j=1}^{m-1} e^{-\frac{\lVert z_m-z_j  \rVert_{\mathcal{H}}^2}{2\varepsilon}}A_{\varepsilon}(\zeta)\Bigr)\;.
\end{multline*}
Furthermore, the noncommutative Fourier transform of a coherent state $\vert z_{\varepsilon}\rangle$ can be
computed explicitly, yielding
\begin{multline}
  \label{eq:2}
  \Tr_{\mathcal{F}_\varepsilon(\mathcal{H})}\bigl(\rho_{M,\varepsilon}(\mu) W_{\varepsilon}(\zeta)\bigr)\\=\sum_{m=1}^{\lvert \Lambda_M  \rvert_{}^{}}\tfrac{f(z_m)}{2^{dM}N_MN_{\varepsilon}^2} \Bigl(e^{-\frac{\varepsilon\lVert z_m  \rVert_{\mathcal{H}}^2}{2}}e^{2i \Re \langle \zeta\vert z_m  \rangle_{\mathcal{H}}} + m\sum_{j=1}^{m-1} e^{-\frac{\lVert z_m-z_j  \rVert_{\mathcal{H}}^2}{2\varepsilon}}A_{\varepsilon}(\zeta)\Bigr)\;.
\end{multline}
Focus now on the first term on the r.h.s.\ of \eqref{eq:2}:
\begin{multline}
  \label{eq:3}
  \sum_{m=1}^{\lvert \Lambda_M  \rvert_{}^{}}\tfrac{f(z_m)}{2^{dM}N_MN_{\varepsilon}^2} e^{-\frac{\varepsilon\lVert z_m  \rVert_{\mathcal{H}}^2}{2}}e^{2i \Re \langle \zeta\vert z_m  \rangle_{\mathcal{H}}}\\= \tfrac{1}{N_M}\sum_{m=1}^{\lvert \Lambda_M  \rvert_{}^{}}\tfrac{f(z_m)}{2^{dM}N_{\varepsilon}^2} e^{2i \Re \langle \zeta\vert z_m  \rangle_{\mathcal{H}}} + \sum_{m=1}^{\lvert \Lambda_M  \rvert_{}^{}}\tfrac{f(z_m)}{2^{dM}N_MN_{\varepsilon}^2} e^{2i \Re \langle \zeta\vert z_m  \rangle_{\mathcal{H}}}\Bigl(e^{-\frac{\varepsilon\lVert z_m  \rVert_{\mathcal{H}}^2}{2}}-1\Bigr)\;.
\end{multline}
The first term on the r.h.s.\ of \eqref{eq:3} can be in turn written as
\begin{equation*}
  \tfrac{1}{N_M}\sum_{m=1}^{\lvert \Lambda_M  \rvert_{}^{}}\tfrac{f(z_m)}{2^{dM}N_{\varepsilon}^2} e^{2i \Re \langle \zeta\vert z_m  \rangle_{\mathcal{H}}} = \tfrac{1}{N_M}\sum_{m=1}^{\lvert \Lambda_M  \rvert_{}^{}}\tfrac{f(z_m)}{2^{dM}} e^{2i \Re \langle \zeta\vert z_m  \rangle_{\mathcal{H}}} +\tfrac{1}{N_M}\sum_{m=1}^{\lvert \Lambda_M  \rvert_{}^{}}\tfrac{f(z_m)}{2^{dM}} e^{2i \Re \langle \zeta\vert z_m  \rangle_{\mathcal{H}}}\Bigl(\tfrac{1}{N_{\varepsilon}^2}-1\Bigr)\;.
\end{equation*}
Now, by construction we have that
\begin{equation*}
  \lim_{M\to \infty}\sum_{m=1}^{\lvert \Lambda_M  \rvert_{}^{}}\tfrac{f(z_m)}{2^{dM}} e^{2i \Re \langle \zeta\vert z_m  \rangle_{\mathcal{H}}}= \int_{\mathcal{H}}^{} e^{2i\Re\langle \zeta\vert z  \rangle_{\mathcal{H}}}  f(z)\mathrm{d}z
\end{equation*}
and
\begin{equation*}
  \lim_{M\to \infty} N_M= \int_{\mathcal{H}}^{}f(z)  \mathrm{d}z=1\;;
\end{equation*}
while by \cref{lemma:5}
\begin{equation*}
  \lvert N_{\varepsilon}^2-1  \rvert_{}^{}\leq  m\sum_{j=1}^{m-1} e^{- \frac{\lVert z_m-z_j  \rVert_{\mathcal{H}}^2}{2\varepsilon}}\leq m^2 e^{- \frac{1}{2^{2M+1}\varepsilon}}\leq 2^{2d}e^{d\log M + dM\log 2 -\varepsilon^{-1}2^{-2M-1}}\;. 
\end{equation*}
Therefore, for any $\varepsilon(M)\ll 2^{-2M}/M$, we obtain
\begin{equation*}
  \lim_{M\to \infty}\tfrac{1}{N_M}\sum_{m=1}^{\lvert \Lambda_M  \rvert_{}^{}}\tfrac{f(z_m)}{2^{dM}N_{\varepsilon(M)}^2} e^{2i \Re \langle \zeta\vert z_m  \rangle_{\mathcal{H}}}= \int_{\mathcal{H}}^{} e^{2i\Re\langle \zeta\vert z  \rangle_{\mathcal{H}}}  f(z)\mathrm{d}z\;.
\end{equation*}
The second term on the r.h.s.\ of \eqref{eq:3} can be similarly bounded:
\begin{equation*}
  \sum_{m=1}^{\lvert \Lambda_M  \rvert_{}^{}}\frac{f(z_m)}{2^{dM}N_M N_{\varepsilon}^2}\Bigl\lvert e^{-\frac{\varepsilon\lVert z_m  \rVert_{\mathcal{H}}^2}{2}}-1  \Bigr\rvert\leq \Bigl\lvert e^{-\frac{\varepsilon M^2}{2}}-1  \Bigr\rvert \sum_{m=1}^{\lvert \Lambda_M  \rvert_{}^{}}\frac{f(z_m)}{2^{dM}N_M N_{\varepsilon}^2}\;,
\end{equation*}
and therefore
\begin{equation*}
  \lim_{M\to \infty}\sum_{m=1}^{\lvert \Lambda_M  \rvert_{}^{}}\tfrac{f(z_m)}{2^{dM}N_MN_{\varepsilon(M)}^2} e^{2i \Re \langle \zeta\vert z_m  \rangle_{\mathcal{H}}}\Bigl(e^{-\frac{\varepsilon(M)\lVert z_m  \rVert_{\mathcal{H}}^2}{2}}-1\Bigr)=0
\end{equation*}
whenever $\varepsilon(M)M^2\to 0$ (that is implied by $\varepsilon(M)\ll 2^{-2M}/M$).

Consider finally the second term on the r.h.s.\ of \eqref{eq:2}:
\begin{equation*}
  \sum_{m=1}^{\lvert \Lambda_M  \rvert_{}^{}}m\sum_{j=1}^{m-1} e^{-\frac{\lVert z_m-z_j  \rVert_{\mathcal{H}}^2}{2\varepsilon}}A_{\varepsilon}(\zeta)\;,
\end{equation*}
again an analogous argument shows that
\begin{equation*}
  \lim_{M\to \infty}\sum_{m=1}^{\lvert \Lambda_M  \rvert_{}^{}}m\sum_{j=1}^{m-1} e^{-\frac{\lVert z_m-z_j  \rVert_{\mathcal{H}}^2}{2\varepsilon(M)}}A_{\varepsilon(M)}(\zeta)=0
\end{equation*}
whenever $d\log M + dM\log 2 -\varepsilon(M)^{-1}2^{-2M-1}\to -\infty$. Summing up the above
discussion, one proves the following lemma.
\begin{lemma}
  \label[lemma]{lemma:6}
  The sequence of quantum states $\rho_{M,\varepsilon(M)}(\mu)$ converges to the absolutely
  continuous measure $\mu\in \mathcal{P}(\mathcal{H})$ in the sense of Wigner measures as $M\to \infty$, for
  any choice $\varepsilon(M)\ll 2^{-2M}/M$.
\end{lemma}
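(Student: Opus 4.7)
The plan is to establish convergence in the sense of Fourier transforms, which by the criterion recalled in \cref{sec:wign-meas-husimi} automatically implies convergence in the sense of Wigner measures. Thus it suffices to show that, for every fixed $\zeta \in \mathcal{H}$,
\begin{equation*}
  \lim_{M\to \infty} \Tr_{\mathcal{F}_{\varepsilon(M)}(\mathcal{H})}\bigl(\rho_{M,\varepsilon(M)}(\mu)\, W_{\varepsilon(M)}(\zeta)\bigr) = \int_{\mathcal{H}} e^{2i\Re\langle \zeta\vert z\rangle_{\mathcal{H}}}\, f(z)\,\mathrm{d}z\;,
\end{equation*}
which is exactly the Fourier transform of the target measure $\mathrm{d}\mu(z)=f(z)\mathrm{d}z$.

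The starting point is the identity \eqref{eq:2}, obtained from \cref{lemma:5}, which splits the noncommutative Fourier transform of $\rho_{M,\varepsilon(M)}(\mu)$ into a ``diagonal'' part $\sum_m \tfrac{f(z_m)}{2^{dM} N_M N_{\varepsilon}^2} \langle (z_m)_\varepsilon | W_\varepsilon(\zeta) | (z_m)_\varepsilon\rangle$ and a ``cross'' part weighted by the exponentially small factors $e^{-\lVert z_m-z_j\rVert_{\mathcal{H}}^{2}/(2\varepsilon)}$. The diagonal part is then further decomposed as in \eqref{eq:3} using the explicit coherent state value $\langle z_\varepsilon | W_\varepsilon(\zeta)| z_\varepsilon\rangle = e^{-\varepsilon\lVert z\rVert_{\mathcal{H}}^{2}/2}\, e^{2i\Re\langle \zeta | z\rangle_{\mathcal{H}}}$, isolating the Riemann sum $\sum_m \tfrac{f(z_m)}{2^{dM}} e^{2i\Re\langle \zeta | z_m\rangle_{\mathcal{H}}}$ that, by construction of the dyadic lattices $\Lambda_M$, converges to the classical Fourier transform of $\mu$.

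The analysis therefore reduces to controlling three error sources, all of which are treated explicitly in the body of the text preceding the lemma: (i) the deviation of the normalization $N_\varepsilon^2$ from $1$, bounded by $|N_\varepsilon^2 - 1| \leq 2^{2d} e^{d\log M + dM\log 2 - \varepsilon^{-1} 2^{-2M-1}}$; (ii) the deviation of $N_M$ from $\int_{\mathcal{H}} f\,\mathrm{d}z = 1$, which is the standard Riemann-sum convergence as $M\to\infty$; and (iii) the Gaussian factor $e^{-\varepsilon\lVert z_m\rVert_{\mathcal{H}}^{2}/2}$ deviating from $1$, bounded on $\Lambda_M$ by $|e^{-\varepsilon M^{2}/2} - 1|$. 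The hypothesis $\varepsilon(M) \ll 2^{-2M}/M$ is precisely what is needed to simultaneously drive $\varepsilon(M) M^2 \to 0$ and $d\log M + dM\log 2 - \varepsilon(M)^{-1}\, 2^{-2M-1} \to -\infty$, so that all three error contributions vanish in the limit.

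The only subtle point, and the one I expect to be the main obstacle, is coordinating the two independent limiting parameters $M$ and $\varepsilon$: the cross terms in \eqref{eq:2} and the corrections from $N_\varepsilon^2$ are controlled only by taking $\varepsilon$ extremely small compared to the inverse of the minimal lattice spacing in $\Lambda_M$, while the Riemann sum approximation requires $M$ to be large. The scaling $\varepsilon(M) \ll 2^{-2M}/M$ is the effective balance that makes every error term of the decomposition \eqref{eq:3} disappear in a single limit $M\to\infty$; once this coupling is fixed, assembling the pieces above yields the pointwise convergence of the noncommutative Fourier transform and hence the Wigner convergence $\rho_{M,\varepsilon(M)}(\mu) \to \mu$.
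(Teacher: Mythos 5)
Your proposal is correct and follows exactly the paper's argument: the lemma is proved by summing up the computation preceding it, namely the decomposition \eqref{eq:2}--\eqref{eq:3} of the noncommutative Fourier transform into a Riemann sum plus the three error terms you list, each vanishing under $\varepsilon(M)\ll 2^{-2M}/M$, and then invoking that pointwise convergence of Fourier transforms implies convergence in the sense of Wigner measures. No substantive differences to report.
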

\begin{corollary}
  \label[corollary]{cor:4}
  Since $\Tr_{\mathcal{F}_\varepsilon(\mathcal{H})}\bigl(\rho_{M,\varepsilon(M)}(\mu)N_{\varepsilon(M)}\bigr)\leq C$ uniformly with respect
  to $M$, it follows that
  \begin{equation*}
    \widehat{\rho}_{M,\varepsilon(M)}(\mu)\xrightarrow[M\to \infty]{\mathcal{T}}\widehat{\mu}\;;
  \end{equation*}
  and
  \begin{equation*}
    \lim_{M\to \infty}\Tr_{\mathcal{F}_\varepsilon(\mathcal{H})}\bigl(\rho_{M,\varepsilon(M)}(\mu)H_{\varepsilon(M)}\bigr) = \int_{\mathcal{H}}^{}h(z) f(z)  \mathrm{d}z\;.
  \end{equation*}
\end{corollary}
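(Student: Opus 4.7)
The corollary makes three claims: (i) the uniform bound $\Tr_{\mathcal{F}_\varepsilon(\mathcal{H})}\bigl(\rho_{M,\varepsilon(M)}(\mu)N_{\varepsilon(M)}\bigr)\leq C$; (ii) convergence $\widehat{\rho}_{M,\varepsilon(M)}(\mu)\to\widehat{\mu}$ in $\mathcal{T}$; and (iii) convergence of the Hamiltonian expectation. The plan is to first prove (i) by a direct computation on the Gram--Schmidt basis $\{e_\varepsilon(z_m)\}$, then derive (ii) from (i) via an equicontinuity argument combined with the pointwise Fourier-transform convergence already extracted in the proof of \cref{lemma:6}, and finally establish (iii) by a Riemann-sum argument on the dyadic lattice $\Lambda_M$ mirroring \cref{lemma:6} with $h^{\mathrm{Wick}}$ in place of the Weyl operators.

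For (i), the key observation is the matrix element
\[
\langle (z_k)_\varepsilon | N_\varepsilon | (z_j)_\varepsilon\rangle_{\mathcal{F}_\varepsilon(\mathcal{H})} = \langle z_k | z_j\rangle_\mathcal{H}\,\langle (z_k)_\varepsilon | (z_j)_\varepsilon\rangle_{\mathcal{F}_\varepsilon(\mathcal{H})},
\]
which equals $\lVert z_m\rVert_\mathcal{H}^2$ on the diagonal and is bounded by $\lVert z_j\rVert_\mathcal{H}\lVert z_k\rVert_\mathcal{H}\, e^{-\lVert z_j-z_k\rVert_\mathcal{H}^2/(2\varepsilon)}$ off-diagonal. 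Expanding $e_\varepsilon(z_m)$ in the coherent-state basis with the exponentially small Gram--Schmidt coefficients controlled in \cref{lemma:5}, one obtains $\langle e_\varepsilon(z_m)|N_\varepsilon|e_\varepsilon(z_m)\rangle = \lVert z_m\rVert_\mathcal{H}^2 + r_{m,\varepsilon}$, where $r_{m,\varepsilon}$ is polynomial in $\lVert z_m\rVert_\mathcal{H}$ times a factor $m^2 e^{-2^{-2M-1}/\varepsilon}$. Inserting this into the definition of $\rho_{M,\varepsilon}(\mu)$ yields a Riemann sum on $\Lambda_M$ converging to $\int_\mathcal{H}\lVert z\rVert_\mathcal{H}^2 f(z)\,\mathrm{d}z$ (finite under a mild second-moment assumption on $f$), the remainder being tamed by the scaling $\varepsilon(M)\ll 2^{-2M}/M$ already used in \cref{lemma:6}. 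For (ii), the bounds \eqref{eq:10} and standard manipulations with Weyl relations give equi-Lipschitz continuity
\[
\bigl\lvert\widehat{\rho_{M,\varepsilon(M)}}(\zeta)-\widehat{\rho_{M,\varepsilon(M)}}(\zeta')\bigr\rvert \leq C\lVert\zeta-\zeta'\rVert_\mathcal{H}\Bigl(\Tr_{\mathcal{F}_\varepsilon(\mathcal{H})}\bigl(N_{\varepsilon(M)}\rho_{M,\varepsilon(M)}\bigr)^{1/2}+\sqrt{\varepsilon(M)}\Bigr)
\]
with $C$ independent of $M$; combined with (i) and the pointwise Fourier convergence from \cref{lemma:6}, Arzelà--Ascoli yields uniform convergence on compact subsets of $\mathcal{H}^{*}$, which is exactly convergence in the compact-open topology $\mathcal{T}$.

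For (iii), the matrix elements of $H_\varepsilon=h^{\mathrm{Wick}}$ between coherent states behave analogously: diagonally, the lower-symbol identity gives $\langle (z_m)_\varepsilon|H_\varepsilon|(z_m)_\varepsilon\rangle=h(z_m)$; off-diagonal entries are bounded by a polynomial in $(\lVert z_j\rVert_\mathcal{H},\lVert z_k\rVert_\mathcal{H})$ times $e^{-\lVert z_j-z_k\rVert_\mathcal{H}^2/(2\varepsilon)}$, via Cauchy--Schwarz together with the Wick monomial bounds obtained from iterating \eqref{eq:10}. Consequently $\langle e_\varepsilon(z_m)|H_\varepsilon|e_\varepsilon(z_m)\rangle = h(z_m) + o(1)$ uniformly in $m\leq\lvert\Lambda_M\rvert$, and summing against the lattice weights $f(z_m)/(2^{dM}N_M)$ produces a Riemann sum converging to $\int_\mathcal{H} h(z) f(z)\,\mathrm{d}z$. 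The main obstacle throughout is to uniformise the exponential-decay estimates across all $m\leq\lvert\Lambda_M\rvert\sim M^d 2^{dM}$, since the polynomial prefactors $\lVert z_m\rVert_\mathcal{H}^{2k}$ (respectively $|h(z_m)|$) may themselves grow polynomially in $M$; however, the scaling $\varepsilon(M)\ll 2^{-2M}/M$ is engineered precisely so that $d\log M+dM\log 2-2^{-2M-1}/\varepsilon(M)\to-\infty$, easily absorbing any such polynomial prefactor.
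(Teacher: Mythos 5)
Your argument is correct, and it fills in details the paper leaves entirely implicit: the corollary is stated without proof, as a consequence of \cref{lemma:6} together with the uniform number bound, the intended machinery being \cref{pro:Wigner} and the standard Ammari--Nier equivalence between convergence in the sense of Wigner measures without loss of mass and uniform-on-compacts convergence of the noncommutative Fourier transforms. Your route makes this explicit and self-contained: for (ii), the equicontinuity bound in $\zeta$ driven by $\Tr(N_{\varepsilon}\rho)^{1/2}$ plus the pointwise convergence already computed in \eqref{eq:2}--\eqref{eq:3} is exactly the standard proof of that equivalence, so nothing is lost; for (i) and (iii), your direct lattice computation mirrors the proof of \cref{lemma:6}, and the key simplification you exploit --- that normal-ordered Wick monomials sandwiched between coherent states evaluate exactly to a polynomial in $(z_j,z_k)$ times the overlap $\langle (z_k)_{\varepsilon}|(z_j)_{\varepsilon}\rangle$, by \eqref{eq:eigenstate} --- is precisely why the diagonal gives $h(z_m)$ (resp.\ $\lVert z_m\rVert^2$) on the nose and the off-diagonal errors inherit the Gaussian factor that the scaling $\varepsilon(M)\ll 2^{-2M}/M$ kills. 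Two minor caveats worth making explicit if you write this up: the summation over $m$ costs nothing extra because the weights $f(z_m)/(2^{dM}N_M)$ sum to one, so only the maximal single-term error matters; and the Riemann-sum limits require $f$ to have finite moments up to order $2$ for (i) but up to order $2p_{\mathrm{max}}$ for (iii) (together with vanishing of the tail $\int_{\lVert z\rVert_{\infty}>M}h f\,\mathrm{d}z$, since $\Lambda_M$ only covers the cube of side $2M$); both are implicit in the statement of the corollary, whose hypotheses and right-hand side would otherwise be infinite, but your ``mild second-moment assumption'' understates what is needed for the energy convergence.
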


The von Neumann entropy of $\rho_{M,\varepsilon}$ is straightforward to compute:
\begin{equation*}
  S_{\mathrm{vN},\varepsilon}\bigl(\rho_{M,\varepsilon}(\mu)\bigr)= -\tfrac{1}{N_M}\sum_{m=1}^{\lvert \Lambda_M  \rvert_{}^{}} \tfrac{f(z_m)}{2^{dM}}\log f(z_m) +\log N_M+ dM\log 2\;,
\end{equation*}
from which we get that
\begin{equation*}
  \lim_{M\to \infty}S_{\mathrm{vN},\varepsilon}\bigl(\rho_{M,\varepsilon}(\mu)\bigr) - dM\log 2 = S_{\mathrm{B}}(\mu)\;. 
\end{equation*}
However, the relative von Neumann entropy
$S_{\mathrm{vN},\varepsilon}\bigl(\rho_{M,\varepsilon}(\mu)\Vert\Gamma_{\beta,\varepsilon}\bigr)$ reads
\begin{multline*}
  S_{\mathrm{vN},\varepsilon}\bigl(\rho_{M,\varepsilon}(\mu)\Vert\Gamma_{\beta,\varepsilon}\bigr)= \tfrac{1}{N_M}\sum_{m=1}^{\lvert \Lambda_M  \rvert_{}^{}} \tfrac{f(z_m)}{2^{dM}}\log f(z_m) \\-\log N_M +\beta \Tr_{\mathcal{F}_\varepsilon(\mathcal{H})}\bigl(\rho_{M,\varepsilon}(\mu)H_{\varepsilon}\bigr)+ d\bigl(-\log (\pi\varepsilon)-M\log 2\bigr)\;.
\end{multline*}
It is then clear that for any $\varepsilon(M)\ll 2^{-2M}/M$, the relative entropy
diverges, for the renormalization factor $-\log(\pi\varepsilon(M))$ has the ``wrong
constant'' in front of the logarithmic divergence: in particular, for $\varepsilon(M) =
\frac{2^{-(2+\delta)M}}{\pi}$ and $M\gg1$, we have
\begin{equation*}
  S_{\mathrm{vN},\varepsilon(M)}\bigl(\rho_{M,\varepsilon(M)}(\mu)\Vert\Gamma_{\beta,\varepsilon(M)}\bigr)\sim S_{\mathrm{B}}\bigl(x\Vert\gamma_{\beta}\bigr) + d(1+\delta)M \log 2 \;.
\end{equation*}

\begin{bibdiv}
  \begin{biblist}*{labels={alphabetic}}

    \bib{AmmariNier2008}{article}{ title={Mean Field Limit for Bosons and
        Infinite Dimensional Phase-Space Analysis}, author={Ammari, Z.},
      author = {Nier, F.}, journal={Annales Henri Poincaré}, volume={9},
      date={2008}, pages={1503--1574}, doi={10.1007/s00023-008-0393-5}, url =
      {https://doi.org/10.1007/s00023-008-0393-5} }

    \bib{MR4289905}{article}{ author={Ammari, Z.},
      author={Ratsimanetrimanana, A.}, TITLE = {High temperature convergence
        of the {KMS} boundary conditions: the {B}ose-{H}ubbard model on a
        finite graph}, JOURNAL = {Commun. Contemp. Math.}, FJOURNAL =
      {Communications in Contemporary Mathematics}, VOLUME = {23}, YEAR =
      {2021}, NUMBER = {5}, PAGES = {Paper No. 2050035, 18}, ISSN =
      {0219-1997,1793-6683}, MRCLASS = {82B10 (81Q20 81S30 82B05 82B20)},
      MRNUMBER = {4289905}, DOI = {10.1142/S0219199720500352}, URL =
      {https://doi.org/10.1142/S0219199720500352}, }

    \bib{2024arXiv240504055A}{article}{ author = {Ammari, Z.}, author
      ={Farhat, S.}, author={Petrat, S.}, title = {Expansion of the Many-body
        Quantum Gibbs State of the Bose-Hubbard Model on a Finite Graph},
      journal = {arXiv e-prints}, keywords = {Mathematical Physics}, year =
      {2024}, month = {may}, pages = {arXiv:2405.04055}, doi =
      {10.48550/arXiv.2405.04055},
      % eprint = {2405.04055},
    }

  \bib{AGGL}{article}{ author = {Aizenman, M.}, author= {Gallavotti, G.}, author= {Goldstein, S.}, author={Lebowitz, J.L.},
title = {Stability and equilibrium states of infinite classical systems},
year = {1976},
month = {feb},
doi = {10.1007/BF01609407},
volume = {48},
pages = {1--14},
journal = {Communications In Mathematical Physics},
issn = {0010-3616},
publisher = {Springer New York},
number = {1}}

    \bib{Berezin1991}{book}{ author={Berezin, F. A.}, author ={Shubin,
        M. A.}, chapter={Symbols of Operators and Feynman Path Integrals},
      title={The Schr{\"o}dinger Equation}, year={1991}, publisher={Springer
        Netherlands}, address={Dordrecht}, pages={282--385},
      doi={10.1007/978-94-011-3154-4_5},
      url={https://doi.org/10.1007/978-94-011-3154-4_5} }

\bib{BPS2021}{article}{ title={Asymptotic expansion of low-energy excitations for weakly interacting bosons}, volume={9}, DOI={10.1017/fms.2021.22}, journal={Forum of Mathematics, Sigma}, author={Boßmann, L.}, author = {Petrat, S.}, author={Seiringer, R.}, year={2021}, pages={e28}}

%    \bib{TopGen}{book}{ author = {Bourbaki, N.}, title = {Topologie générale,
%        Chapitres 1 à 4}, publisher = {Springer Berlin, Heidelberg} doi =
%      {https://doi.org/10.1007/978-3-540-33982-3} }

    \bib{Braides}{book}{ author = {Braides, A.}, title = {Gamma-Convergence
        for Beginners}, publisher = {Oxford University Press}, year = {2002},
      month = {07}, isbn = {9780198507840}, doi =
      {10.1093/acprof:oso/9780198507840.001.0001}, url =
      {https://doi.org/10.1093/acprof:oso/9780198507840.001.0001}, }

    \bib{BratelliRobinson2013}{book}{ author = {Bratteli, O.}, author
      ={Robinson, D. W.}, title = {Operator Algebras and Quantum Statistical
        Mechanics II} doi = {https://doi.org/10.1007/978-3-662-09089-3},
      publisher = {Springer Berlin, Heidelberg}, pages = {1864-5879},
      date={2013} }

    \bib{MR4153847}{article}{ author = {Bru, J.-B.}, author={de Siqueira
        Pedra, W.}, doi = {10.1016/j.jmaa.2020.124434}, fjournal = {Journal
        of Mathematical Analysis and Applications}, issn = {0022-247X},
      journal = {J. Math. Anal. Appl.}, mrclass = {82C10 (37K60 46L55 46L60
        82C05)}, mrnumber = {4153847}, number = {1}, pages = {Paper
        No. 124434, 61}, title = {Classical dynamics generated by long-range
        interactions for lattice fermions and quantum spins}, url =
      {https://doi.org/10.1016/j.jmaa.2020.124434}, volume = {493}, year =
      {2021}, bdsk-url-1 = {https://doi.org/10.1016/j.jmaa.2020.124434}}

    \bib{MR2012977}{article}{ author = {J.-B. Bru}, author={T. C. Dorlas},
      date-added = {2018-09-03 13:53:29 +0200}, date-modified = {2018-09-03
        13:53:29 +0200}, doi = {10.1023/A:1025774821712}, fjournal = {Journal
        of Statistical Physics}, issn = {0022-4715}, journal =
      {J. Statist. Phys.}, mrclass = {82B20 (82B26)}, mrnumber = {2012977},
      number = {1-2}, pages = {177--196}, title = {Exact solution of the
        infinite-range-hopping {B}ose-{H}ubbard model}, url =
      {https://doi.org/10.1023/A:1025774821712}, volume = {113}, year =
      {2003}, bdsk-url-1 =
      {https://mathscinet.ams.org/mathscinet-getitem?mr=2012977}}

    \bib{MR4694421}{article}{ author={Drago, N.}, author={van de Ven,
        C. J. F.}, title={Strict deformation quantization and local spin
        interactions}, journal={Comm. Math. Phys.}, volume={405},
      date={2024}, number={1}, pages={Paper No. 14, 35}, issn={0010-3616},
      review={\MR{4694421}}, doi={10.1007/s00220-023-04887-3}, }

\bib{CFFM}{article}{
  doi = {10.22331/q-2024-12-11-1561},
  url = {https://doi.org/10.22331/q-2024-12-11-1561},
  title = {Quasi-classical {L}imit of a {S}pin {C}oupled to a {R}eservoir},
  author = {Correggi, M.}, author ={Falconi, M.}, author= {Fantechi, M.}, author = {Merkli, M.},
  journal = {{Quantum}},
  issn = {2521-327X},
  publisher = {{Verein zur F{\"{o}}rderung des Open Access Publizierens in den Quantenwissenschaften}},
  volume = {8},
  pages = {1561},
  month = {dec},
  year = {2024}
}

\bib{FNT}{article}{
      title={The generalized Wehrl entropy bound in quantitative form}, 
      author={Rupert, L.F.}, author={Nicola, F.}, author = {Tilli, P.},
      year={2023},
      eprint={2307.14089},
      archivePrefix={arXiv},
      url={https://arxiv.org/abs/2307.14089}, 
}

\bib{FKSS}{article}{
author ={Fröhlich, J.},
author = {Knowles, A.},
author = {Schlein, B.},
author ={Sohinger, V.},
year={2017},
title = {Gibbs Measures of Nonlinear Schrödinger Equations as Limits of Many-Body Quantum States in Dimensions ${d \leqslant 3}$},
journal = {Communications in Mathematical Physics},
pages = {883-980},
volume={356},
doi ={ 10.1007/s00220-017-2994-7}}

\bib{Lewin2015}{article}{
     author = {Lewin, M.}, author = {Nam, P.T.}, author = {Rougerie, N.},
     title = {Derivation of nonlinear {Gibbs} measures from many-body quantum mechanics},
     journal = {Journal de l{\textquoteright}\'Ecole polytechnique - Math\'ematiques},
     pages = {65--115},
     publisher = {Ecole polytechnique},
     volume = {2},
     year = {2015},
     doi = {10.5802/jep.18},
     url = {https://www.numdam.org/articles/10.5802/jep.18/}
}

    % \bib{Lewin2015}{article}{ author = {Lewin, M.}, author = {Nam, P.T.},
    % author = {Rougerie, N.}, title = {Derivation of nonlinear {Gibbs}
    % measures from many-body quantum mechanics}, journal = {Journal de
    % l{\textquoteright}\'Ecole polytechnique - Math\'ematiques}, pages =
    % {65--115}, publisher = {Ecole polytechnique}, volume = {2}, year =
    % {2015}, doi = {10.5802/jep.18}, language = {en}, url =
    % {http://www.numdam.org/articles/10.5802/jep.18/} }

    \bib{MR349181}{article}{ author = {Lieb, E.H.}, date-added = {2020-03-09
        17:33:17 +0100}, date-modified = {2020-03-09 17:33:17 +0100},
      fjournal = {Communications in Mathematical Physics}, issn =
      {0010-3616}, journal = {Comm. Math. Phys.}, mrclass = {82.46}, mrnumber
      = {349181}, mrreviewer = {J. E. Marsden}, pages = {327--340}, title =
      {The classical limit of quantum spin systems}, url =
      {https://mathscinet.ams.org/mathscinet-getitem?mr=349181}, volume =
      {31}, year = {1973}, bdsk-url-1 =
      {https://mathscinet.ams.org/mathscinet-getitem?mr=349181}}

     \bib{MR3554894}{article}{ author={Lieb, Elliott H.}, author={Solovej,
     Jan Philip}, title={Proof of the Wehrl-type entropy conjecture for
     symmetric $SU(N)$ coherent states}, journal={Comm. Math. Phys.},
     volume={348}, date={2016}, number={2}, pages={567--578},
     issn={0010-3616}, review={\MR{3554894}},
     doi={10.1007/s00220-016-2596-9}, }

\bib{NRT}{article}{
      title={The Wehrl-type entropy conjecture for symmetric $SU(N)$ coherent states: cases of equality and stability}, 
      author={Nicola, F.}, author ={Riccardi, F.}, author = {Tilli, P.},
      year={2024},
      eprint={2412.10940},
      archivePrefix={arXiv},
      url={https://arxiv.org/abs/2412.10940}, 
}

%  \bibitem[NRT24]{NRT} F. Nicola, F. Riccardi, P. Tilli,
%    \newblock {\em The Wehrl-type entropy conjecture for symmetric SU(N)
%      coherent states: cases of equality and stability},
%    \newblock
%    preprint arXiv:2412.10940 [math-ph], 2024.

    \bib{MR4415671}{article}{ author = {Picari, E.}, author={Ponno, A.},
      author={Zanelli, L.}, date-added = {2024-10-11 12:40:06 +0200},
      date-modified = {2024-10-11 12:40:06 +0200}, doi =
      {10.1007/s00023-021-01112-6}, fjournal = {Annales Henri Poincar\'e. A
        Journal of Theoretical and Mathematical Physics}, issn =
      {1424-0637,1424-0661}, journal = {Ann. Henri Poincar\'e}, mrclass =
      {34K33 (35Q55)}, mrnumber = {4415671}, number = {5}, pages =
      {1525--1553}, title = {Mean field derivation of {DNLS} from the
        {B}ose-{H}ubbard model}, url =
      {https://doi.org/10.1007/s00023-021-01112-6}, volume = {23}, year =
      {2022}, bdsk-url-1 = {https://doi.org/10.1007/s00023-021-01112-6}}

    \bib{MR4756728}{article}{ author = {van de Ven, C.J.F.}, date-added =
      {2024-10-11 12:38:58 +0200}, date-modified = {2024-10-11 12:38:58
        +0200}, doi = {10.1142/S0129055X24500090}, fjournal = {Reviews in
        Mathematical Physics. A Journal for Both Review and Original Research
        Papers in the Field of Mathematical Physics}, issn =
      {0129-055X,1793-6659}, journal = {Rev. Math. Phys.}, mrclass = {82B10
        (81R60 81S10 82B26)}, mrnumber = {4756728}, number = {5}, pages =
      {Paper No. 2450009, 38}, title = {Gibbs states and their classical
        limit}, url = {https://doi.org/10.1142/S0129055X24500090}, volume =
      {36}, year = {2024}, bdsk-url-1 =
      {https://doi.org/10.1142/S0129055X24500090}}

%  \bibitem[Weh78]{MR0496300} {A. Wehrl},
%    \newblock {\em General properties of entropy},
%    \newblock {Rev. Modern Phys.} {\bf 50}, {221--260}, {1978}.
%
%  \bibitem[Weh79]{MR574951} {A. Wehrl},
%    \newblock {\em On the relation between classical and quantum-mechanical
%      entropy},
%    \newblock {Rep. Math. Phys.} {\bf 16}, {353--358}, {1979}.
%	

     \bib{MR0496300}{article}{ author = {Wehrl, A.}, date-added =
     {2022-05-25 14:55:12 +0200}, date-modified = {2022-05-25 14:55:12
     +0200}, doi = {10.1103/RevModPhys.50.221}, fjournal = {Reviews of
     Modern Physics}, issn = {0034-6861}, journal = {Rev. Modern Phys.},
     mrclass = {82.60 (28A65)}, mrnumber = {0496300}, number = {2}, pages =
     {221--260}, title = {General properties of entropy}, url =
     {https://doi.org/10.1103/RevModPhys.50.221}, volume = {50}, year =
     {1978}, bdsk-url-1 = {https://doi.org/10.1103/RevModPhys.50.221}}

     \bib{MR574951}{article}{ author = {Wehrl, A.}, doi =
     {10.1016/0034-4877(79)90070-3}, fjournal = {Reports on Mathematical
     Physics}, issn = {0034-4877}, journal = {Rep. Math. Phys.}, mrclass =
     {80A10 (82A05 94A17)}, mrnumber = {574951}, mrreviewer = {G. Della
     Riccia}, number = {3}, pages = {353--358}, title = {On the relation
     between classical and quantum-mechanical entropy}, url =
     {https://doi.org/10.1016/0034-4877(79)90070-3}, volume = {16}, year =
     {1979}, bdsk-url-1 = {https://doi.org/10.1016/0034-4877(79)90070-3}}

\bib{LionsPaul}{article}{
author = {Lions, P.L.}, author ={Paul, T.}, title ={Sur les mesures de Wigner}, journal = {Rev. Mat. Iberoam. 9}, year ={1993},volume = {3}, pages={553–618},doi ={10.4171/RMI/143}
}

  \end{biblist}
\end{bibdiv}
\end{document}